\renewcommand{\thefootnote}{}
\newtheorem{corollary}{Corollary}[section]
\newtheorem{lemma}{Lemma}[section]
\newtheorem{prop}{Proposition}[section]
\newtheorem{remark}{Remark}[section]
\newtheorem{theorem}{Theorem}[section]
\def\be{\begin{equation}}
\def\ee{\end{equation}}
\def\bee{\begin{eqnarray}}
\def\ene{\end{eqnarray}}
\def\bes{\begin{subequations}}
\def\ees{\end{subequations}}
\def\det{{\rm det}}
\def\d{\displaystyle}
\def\v{\vspace{0.1in}}
\def\no{{\nonumber}}
\def\l{\left}
\def\r{\right}
\begin{document}

\baselineskip=13pt
\renewcommand {\thefootnote}{\dag}
\renewcommand {\thefootnote}{\ddag}
\renewcommand {\thefootnote}{ }

\pagestyle{plain}


\begin{center}
\baselineskip=16pt \leftline{} \vspace{-.3in} {\Large \bf The focusing complex mKdV equation with nonzero background:
 Large $N$-order asymptotics of multi-rational solitons and related Painlev\'{e}-III  hierarchy
} \\[0.2in]
\end{center}

\begin{center}
{\bf Weifang Weng$^{a}$, Guoqiang Zhang$^{b,c}$, Zhenya Yan}$^{b,c,*}$\footnote{$^{*}${\it Email address}: zyyan@mmrc.iss.ac.cn (Corresponding author)}  \\[0.1in]
{\it$^a$School of Mathematical Sciences,University of Electronic Science and Technology of China, Chengdu 611731, China} \\
\it $^b$KLMM,  Academy of Mathematics and Systems Science,  Chinese Academy of Sciences, Beijing 100190, China\\
\it $^c$School of Mathematical Sciences, University of Chinese Academy of Sciences, Beijing 100049, China \\[0.18in]
\end{center}

\noindent {\bf Abstract:}\, {\small In this paper, we investigate the large-order asymptotics of multi-rational solitons of the focusing complex modified Korteweg-de Vries (c-mKdV) equation with nonzero background via the Riemann-Hilbert problems. First, based on the Lax pair, inverse scattering transform, and a series of deformations, we construct a multi-rational soliton of the c-mKdV equation via a solvable Riemann-Hilbert problem (RHP). Then, through a scale transformation, we construct a RHP corresponding to the limit function which is a new solution of the c-mKdV equation in the rescaled variables $X,\,T$, and prove the existence and uniqueness of the RHP's solution. Moreover, we also find that the limit function satisfies the ordinary differential equations (ODEs) with respect to space $X$ and time $T$, respectively. The ODEs with respect to space $X$ are identified with certain members of the Painlev\'{e}-III hierarchy. We study the  large $X$ and transitional asymptotic behaviors of near-field limit solutions, and we provide some part results for the case of large $T$. These results will be useful to understand and apply the large-order rational solitons in the nonlinear wave equations.
}

\vspace{0.1in} \noindent {\bf Keywords}\, Complex mKdV equation; Nonzero background; Lax pair; Inverse scattering transform, Riemann-Hilbert problem;
Multi-rational solitons; Large-order asymptotics; Painlev\'{e}-III hierarchy

\vspace{0.1in} \noindent {\bf Mathematics Subject Classification}\, 35Q51, 35Q15, 37K40, 37K10


\baselineskip=17pt

\tableofcontents

\section{Introduction and main results}

In 1967, Gardner, Green, Kruskal and Miura~\cite{IST} found that  the Korteweg-de Vries (KdV) equation can be viewed as the compatibility condition of two linear partial differential equations (alias the Lax pair~\cite{lax}), and first presented the inverse scattering transform (IST) to exactly solve its $N$-soliton solutions. The powerful IST has been applied to other important
nonlinear integrable equations, such as the nonlinear Schr\"odinger (NLS) equation, modified Kdv equation, sine-Gordon equation, AKNS hierarchy, etc. (see, e.g., Ref.~\cite{soliton3,book87,yang,NMPZ} and references therein). One of most fundamental nonlinear integrable equations is the NLS equation~\cite{O1967}
\begin{equation}
\label{nls}
iq_t +q_{xx}+ 2\sigma |q|^2 q=0,\quad q=q(x,t): \mathbb{R}\times \mathbb{R}\to \mathbb{C}, \quad \sigma=\pm 1,
\end{equation}
which is used to describe the nonlinear waves in many fields, such as nonlinear optics~\cite{op1,op2,nls-1,nls-2}, deep ocean~\cite{Zak68,yuen82}, plasma physics~\cite{Zak72}, Bose-Einstein condensates  (alias the Gross-Pitaevskii equation~\cite{GP1,GP2}), and even finance~\cite{yanfrw}.
Zakharov and Shabat~\cite{zs72} first presented the Lax pair of the NLS equation,
and used the IST and Riemann-Hilbert problem to solve it. After that, another formal Riemann-Hilbert problem with IST was used to solve the NLS equation~\cite{NMPZ}.  In 1983,
Peregrine~\cite{Peregrine1983} considered the periodic parameter limit of the breathers~\cite{kuz77,kawata78,ma79} of the focusing ($\sigma=1$)  NLS equation to  first find its fundamental extreme  rational rogue wave (RW) solution (alias Peregrine soliton or Peregrine rogon):
\bee\label{p-soliton}
q_{nls}^{[1]}(x,t)=e^{2it}\left(1-\frac{4(1+4it)}{4x^2+16t^2+1}\right),
\ene
which is localized in both space and time, and rationally decays to nonzero background (see Fig.~\ref{1-2-order}(a,b)), $|q_{nls}^{[1]}|\to 1$ as $|x|,\,|t|\to \infty$, and whose maximum amplitude is three times that of the nonzero background. Moreover, The higher-order RWs of focusing NLS equation can be seen as the nonlinear superposition of a certain number of first-order rational solitons (e.g., see Figs.~\ref{1-2-order}(c,d) for the 2nd-order RW)~\cite{nail2009b,rw2010}, showing a very abundant structure, such as triangle, pentagon, and so on~\cite{ked2011,ling2017}.
The IST and RHP methods were used to exactly solve both focusing and defocusing NLS equations with nonzero backgrounds to find their breathers, dark solitons, and rational RW solutions~\cite{nls-nz,Demontis2013,bilman1,yang}. Moreover,  its near- and far-field limits of solutions of the focusing NLS equation with nonzero backgrounds were studied via IST and RHPs~\cite{bilman2,bilman3,bilman5}.  Moreover, a unified formula on the large-order asympotics was found for the infinite-order both solitons and RWs of the focusing NLS equation~\cite{Bilman21}. More recently, the asymptotics for large-order solitons was obtained for the coupled NLS equation~\cite{Ling24}.

Based on the IST, RHPs or/and nonlinear steepest descent methods~\cite{dz-1}, some works~~\cite{ZM76,its81,deift93,deift93b,vart02,deift03,zhou06,bert10,deift11,DM08,BM17,BLM21,BKS11,BIK09,bar-1,FLQ, BLS,BLS22,dnls22-fan,fan1-2,GL18} studied the long-time asymptotic behaviors of solutions of the focusing or defocusing NLS equation with various of different boundary conditions. Moreover, Cuccagna and Jenkins~\cite{bar-3} studied the asymptotic stability of $N$-solitons of the defocusing NLS equation. Fokas {\it et al}~\cite{fokas05,fokas12a,fokas12b,fokas12c} presented a unified method to show that the solutions of the  NLS equation with initial-boundary value conditions can be expressed via the RHPs. Recently, Koch{\it et al}~\cite{koch-1,koch-2,koch-3} established a family of conserved energies for the NLS equation.

As we have mentioned that though the NLS equation can be used to describe many physical phenomena, but the NLS equation has its boundedness.
For example, when the optical pulses become shorter (e.g., 100 fs~\cite{op1,op2}), the higher-order dispersive and nonlinear effects such as third-order dispersion, self-frequency shift, and self-steepening arising from the stimulated Raman scattering are
significant in the study of ultra-short optical pulse propagation~\cite{hnls,hnls2,yan13}. The Hirota equation and complex modified Korteweg-de Vries (c-mKdV) equation~\cite{hirota} are important higher-order extensions of the NLS equation.
The focusing c-mKdV equation~\cite{hirota}
\bee\label{cmkdv}
q_{t}+q_{xxx}+6|q|^2q_x=0,\quad q=q(x,t): \mathbb{R}\times \mathbb{R}\to \mathbb{C},
\ene
in fact, belongs to the integrable AKNS hierarchy~\cite{akns74}. The c-mKdV equation with a nonzero background (NZB) ($q(x,t)\to 1,\, {\rm as}\, |x|\to \infty$) has been verified to admit the multi-rational solitons ~\cite{hirota10}. For example, the expressions for the first-order and second-order rational solitons of  c-mKdV equation (\ref{cmkdv}) are given as follows (see  Figs.~\ref{1-2-order}(e-h)):
\begin{align} \label{s1}
\begin{aligned}
&q_{cmkdv}^{[1]}(x,t)=1-\dfrac{4}{4(x-6t)^2+1},\vspace{0.05in}\\
&q_{cmkdv}^{[2]}(x,t)=1-\frac{12\Delta_1(x,t)}{\Delta_2(x,t)},
\end{aligned}
\end{align}
where
\begin{align}\no
\begin{aligned}
\Delta_1(x,t)=&16x^4+24x^2+3168t^2+20736t^4-13824t^3x-384tx^3+3456t^2x^2-672tx-3,\vspace{0.05in}\\
\Delta_2(x,t)=& 48x^4+2985984t^6-2985984t^5x+1244160t^4x^2-276480t^3x^3-269568t^4+34560t^2x^4\vspace{0.05in}\\
&+124416t^3x-2304tx^5-17280t^2x^2+64x^6+384tx^3+20016t^2-2448tx+108x^2+9.
\end{aligned}
\end{align}
\begin{figure}[!t]
    \centering
 \vspace{-0.15in}
  {\scalebox{0.8}[0.8]{\includegraphics{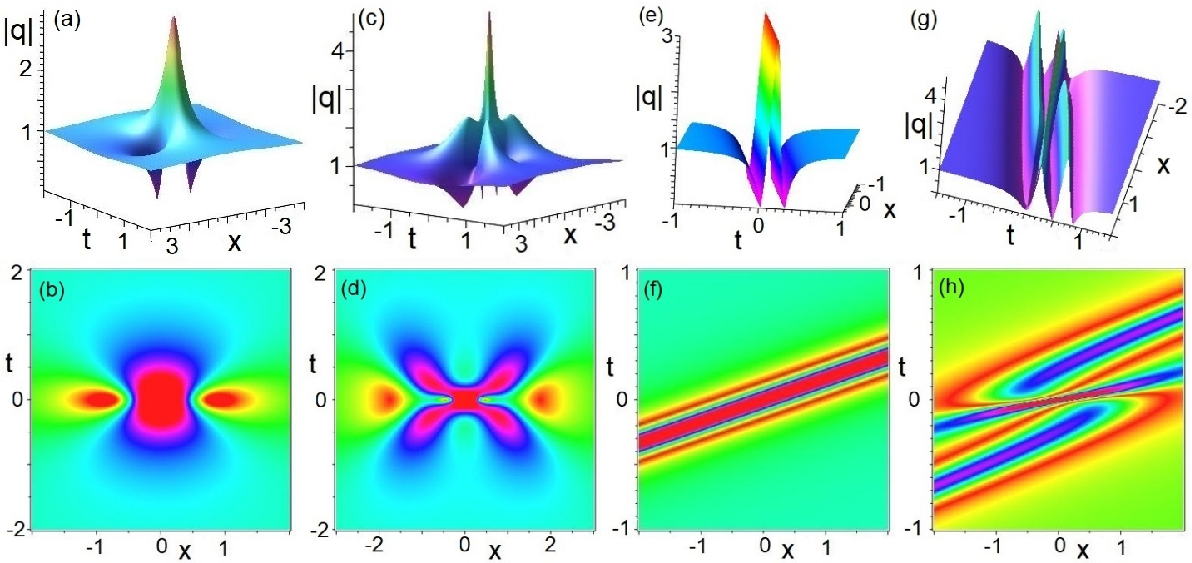}}}
\vspace{0.1in}
\caption{(a,b) 1-order rogue wave of NLS equation; (c, d) 2-order rogue wave of NLS equation;(e,f) 1-order rational soliton $q^{(1)}(x,t)$ of c-mKdV equation; (g, h) 2-order rational soliton $q^{(2)}(x,t)$ of c-mKdV equation.}
   \label{1-2-order}
\end{figure}

\begin{remark}
The W-shaped rational solitons of the c-mKdV equation differ from the rational rogue waves of the NLS equation even if they are both nonzero backgrounds. Moreover, these rational solitons also differ from the usual solitons with zero background.
\end{remark}

\begin{remark}
When the initial condition is taken as $q(x,0)=e^{i\sqrt{6}x}$, multi-rogue waves of the c-mKdV equation (\ref{cmkdv}) can be found~\cite{zha}
\bee
q_{cmkdv}(x,t)=\l[1-\frac{4(1-12\sqrt{6}it)}{4(x+12t)^2+432t^2+1}\r]e^{i\sqrt{6}x},
\ene
which differs from the W-shaped rational soliton $q_{cmkdv}^{[1]}(x,t)$ of the c-mKdV equation (\ref{cmkdv}) gvien by Eq.~(\ref{s1}).
\end{remark}

Recently, Chen and Yan~\cite{chen2019} found the $(2n-1,2n)$th-order RW solutions and rational solitons of both the Hirota equation and c-mKdV equation (\ref{cmkdv}) with non-zero boundary conditions, respectively, via the robust IST and RHPs~\cite{bilman1}.
Weng {\it et al}~\cite{weng2} found multi-rational solitons of $n$-component c-mKdV equations. Recently, Wang {\it et al}~\cite{fan-cmkdv22} and Zhang {\it et al}\cite{jmp2022longtime} studied the long-time asymptotics of the c-mKdV equation (\ref{cmkdv}) with step-like initial data and Schwartz decaying initial data, respectively.

Moreover, the Painlev\'{e} equation hierarchy is an important aspect of the study of integrable systems~\cite{soliton3}, and each Painlev\'{e} equation can be written as the following compatibility condition for the  Lax pair~\cite{Flaschka-1,Sakka-1}:
\bee\no
\phi_{x}(x,\lambda)=A(x,\lambda)\phi(x,\lambda),\quad \phi_{\lambda}(x,\lambda)=B(x,\lambda)\phi(x,\lambda),
\ene
where
\bee\no
A(x,\lambda)=\sum\limits_{k=0}^{L+l}A_k\lambda^{L-k},\quad B(x,\lambda)=\sum\limits_{k=0}^{N+n}B_k\lambda^{N-k},
\ene
and $A_k$ and $B_k$ are matrices with terms that depend on the solution of the Painlev\'{e}
equation.

In recent years, extreme rational soliton (e.g., rational rogue wave) phenomena with huge energies have been verified theoretically and/or experimentally in many fields~\cite{RW}, such as ocean~\cite{orw}, nonlinear optics~\cite{prw}, Bose-Einstein condensation~\cite{bec-rw}, finance~\cite{yanfrw}, capillary waves~\cite{Capillary-prl2010}, superfluid~\cite{superfluid}, cold atom~\cite{cold-a} and plasma physics~\cite{pp}. In particular,
rational solitons including rational rogue waves have been verified to appear in some nonlinear integrable systems~\cite{soliton,soliton2,soliton3,rw2017}, and nearly-integrable nonlinear wave equations~\cite{yan-rw19,yan-rw23}.

In this paper, we will focus on the large-order asymptotics of multi-rational solitons for the Cauchy problem of the focusing c-mKdV equation (\ref{cmkdv}) with finite density initial conditions 
\bee \label{id}
q(x,0)=q_0(x),\quad \lim_{x\to\pm\infty}q
_0(x)=q_{\pm},\quad \left|q_{\pm}\right|=q_0\ne 0
\ene
by analyzing the corresponding  Riemann-Hilbert problems.

The c-mKdV equation has the plane {\color{red} wave} solution
\bee \label{cw}
q_{cw}(x,t)=\alpha e^{i[kx-(k^3-6\alpha^2k)t+\vartheta]},\quad  \alpha,\, k,\, \vartheta\in \mathbb{R}
\ene
As $k=0$, the plane wave is simplified as a nonzero constant,
$q_{cw}(x,t)=\alpha e^{i \vartheta}$ with $\alpha\not=0$.  Without loss of generality, we here take $q_{\pm}=q_0=1$ in Eq.~(\ref{id}).
Moreover, we know that Eq.~(\ref{cmkdv}) has the solitary wave solution
\bee\label{sw}
q_{cmkdv}(x,t)=\alpha\, {\rm sech}[\alpha x-(\alpha^3-3k^2\alpha )t+c]e^{i[kx+(k^3-3\alpha^2k)t+\vartheta]},\quad \alpha,\, c,\, k,\, \vartheta\in \mathbb{R}
\ene
which approaches to $0$ as $|x|, |t|\to \infty$.
The phase and group velocities of the plane wave (\ref{cw}) are $v_{pv}=k^2-6\alpha^2$ and
$v_{gv}=3k^2-6\alpha^2$, respectively. However, phase and group velocities of the solitary wave (\ref{sw}) are $v_{pv}=-k^2+3\alpha^2$ and
$v_{gv}=-3k^2+3\alpha^2$, respectively.

\begin{remark} The c-mKdV equation (\ref{cmkdv}) is invariant under the
scaling transform
\bee
 x\to \gamma^{-1}x',\quad t\to \gamma^{-3}t',\quad q(x,t)\to \gamma e^{i\vartheta} q(x',t'),
\ene
where $\gamma\not=0$ is arbitrary real constants, and $\vartheta\in [0, 2\pi)$.
\end{remark}

For convenience, we introduce the following notations:
\bee\no
&\sigma_1=\!\left[\!\!\begin{array}{cc}
0& 1  \vspace{0.05in}\\
1 & 0
\end{array}\!\!\right],\quad \sigma_2=\!\left[\!\!\begin{array}{cc}
0& -i  \vspace{0.05in}\\
i & 0
\end{array}\!\!\right],
\quad\sigma_3=\!\left[\!\!\begin{array}{cc}
1& 0  \vspace{0.05in}\\
0 & -1
\end{array}\!\!\right],\quad Q=\!\dfrac{\sqrt{2}}{2}\left[\!\!\begin{array}{cc}
1& -1  \vspace{0.05in}\\
1 & 1
\end{array}\!\!\right].
\ene

Eq.~(\ref{cmkdv}) possesses the Lax pair:
\begin{align}\label{lax1}
\left\{\begin{aligned}
\psi_x=&\,
(-i\lambda \sigma_3+U)\psi=\left[\!\!\begin{array}{cc}
-i\lambda& q  \vspace{0.05in}\\
-q^* & i\lambda
\end{array}\!\!\right]\psi,\quad U=\!\left[\!\!\begin{array}{cc}
0& q(x,t)  \vspace{0.05in}\\
-q^*(x,t) & 0
\end{array}\!\!\right], \vspace{0.1in}\\
\psi_t=&\,(-4i\lambda^3\sigma_3+W)\psi=\left[\!\!\begin{array}{cc}
-4i\lambda^3+2i\lambda|q|^2+q_x^*q-q_xq^*& 4\lambda^2q+2i\lambda q_x-q_{xx}-2|q|^2q  \vspace{0.05in}\\
-4\lambda^2q^*+2i\lambda q_x^*+q_{xx}^*+2|q|^2q^* & 4i\lambda^3-2i\lambda|q|^2-q_x^*q+q_xq^*
\end{array}\!\!\right]\psi,\vspace{0.1in}\\
& W=4\lambda^2U-
2i\lambda (U_x+U^2)\sigma_3+[U_x, U]+2U^3-U_{xx},
\end{aligned}\right.
\end{align}
where $\psi=\psi(x,t;\lambda)$ is an unknown $2\times2$ matrix-valued eigenfunction, $\lambda\in\mathbb{C}$ is a spectral parameter, that is, Eq.~(\ref{cmkdv}) is viewed as the compatibility condition of the Lax pair.

We here review some basic properties about the robust IST of the c-mKdV equation (\ref{cmkdv})~\cite{akns74,chen2019}. According {\color{red} to } the boundary conditions $\lim\limits_{x\rightarrow\pm\infty}q(x,t)=1$, the Lax pair (\ref{lax1}) becomes
\begin{align}\label{lax1c}
\left\{\begin{aligned}
\psi_x^{bg}=&\,
(-i\lambda \sigma_3+U^{bg})\psi^{bg},\quad U^{bg}=\!\left[\!\!\begin{array}{cc}
0& 1  \vspace{0.05in}\\
-1 & 0
\end{array}\!\!\right], \vspace{0.1in}\\
\psi_t^{bg}=&\,\l(-4i\lambda^3\sigma_3+4\lambda^2U^{bg}-
2i\lambda (U^{bg})^2\sigma_3
+2(U^{bg})^3\r)\psi^{bg},
\end{aligned}\right.
\end{align}
which has the following fundamental solution matrix:
\bee
\psi^{bg}(\lambda; x,t)=E(\lambda)e^{-i\rho(\lambda)\big(x+(4\lambda^2-2)t\big)\sigma_3},
\ene
where
\bee
E(\lambda)=\!w(\lambda)\left[\!\!\begin{array}{cc}
1& i(\lambda-\rho(\lambda))  \vspace{0.05in}\\
i(\lambda-\rho(\lambda)) & 1
\end{array}\!\!\right],\quad \rho^2(\lambda)=\lambda^2+1,\quad w^2(\lambda)=\frac{\lambda+\rho(\lambda)}{2\rho(\lambda)},
\ene
which is a two-sheeted Riemann surface for $\lambda$, and $\rho(\lambda)=\lambda+\mathcal{O}(\lambda^{-1}),\quad \lambda\rightarrow\infty$ such that
$\lim_{\lambda\to \infty}w(\lambda)=1$.
Then, we have
$\det(E(\lambda))=\det(Q)=1.$

It can be easy to find the unique Jost solutions $\psi_{\pm}(\lambda;x,t)$ of Lax pair (\ref{lax1}):
\bee
\psi_{\pm}(\lambda;x,t)=\mu_{\pm}(\lambda;x,t) e^{-i\rho(\lambda)(x+(4\lambda^2-2)t)\sigma_3},
\ene
with
\bee
\psi_{\pm}(\lambda;x,t)e^{i\rho(\lambda)(x+(4\lambda^2-2)t)\sigma_3}=E(\lambda)+o(1),\qquad
x\to \pm\infty,
\ene
where
\bee\label{mu}
\mu_{\pm}(\lambda;x,t)=E(\lambda)+\int_{\pm\infty}^{x}E(\lambda)e^{-i\rho(\lambda)(x-y)\widehat{\sigma}_3}(E(\lambda)^{-1}\Delta q(y,t)\mu_{\pm}(\lambda;{\color{red} y},t))dy
\ene
with $\Delta q(x,t)=\left[\!\!\begin{array}{cc}
0& q(x,t)-1 \vspace{0.05in}\\
1-q^{*}(x,t) & 0
\end{array}\!\!\right]$ and $e^{\widehat{\sigma}_3}\bullet=e^{\sigma_3}\bullet e^{-\sigma_3}$.

Let $\mu_{\pm j}$ {\color{red} stand} for their $j$th column. Then, according to Eq.~(\ref{mu}), it can be shown that $\mu_{-1}(\lambda;x,t),\mu_{+2}(\lambda;x,t)$ are bounded and analytic in $\mathbb{C}^+\setminus\Sigma_c$ and continuous up to the boundary ($\mathbb{C}^+$ is the upper half-plane of the complex $\lambda$-plane and $\Sigma_c=[-i,i]$), and $\mu_{+1}(\lambda;x,t),\mu_{-2}(\lambda;x,t)$ are bounded and analytic in $\mathbb{C}^-\setminus\Sigma_c$ and continuous up to the boundary ($\mathbb{C}^-$ is the lower half-plane of the complex $\lambda$-plane).

When $\lambda\in\mathbb{R}\cup\Sigma_c$, the Jost solutions $\psi_{\pm}(\lambda;x,t)$ have a relationship by a scattering matrix:
\bee
\psi_{+}(\lambda;x,t)=\psi_{-}(\lambda;x,t)S(\lambda),
\qquad
S(\lambda)=\left[\!\!\begin{array}{cc}
s_{11}^*(\lambda^*)& s_{12}^*(\lambda^*) \vspace{0.05in}\\
-s_{12}(\lambda) & s_{11}(\lambda)
\end{array}\!\!\right].
\ene

\begin{lemma} ~\cite{chen2019,bilman1} For the fixed $L\in\mathbb{R}$, suppose that $q(x,t)$ is a bounded classical solution of Eq.~(\ref{cmkdv}) defined
for $(x,t)$ in a simply connected domain $\Omega\subseteq\mathbb{R}$ which contains the point $(L,0)$. Then for each $\lambda\in\mathbb{C}$
there exists a unique simultaneous fundamental solution matrix $\psi=\psi^{\mathrm{in}}(x,t;\lambda)$, which satisfies Lax pairs (\ref{lax1}) and the initial
condition $\psi^{\mathrm{in}}(L,0;\lambda)=\mathbb{I}$. For each $(x,t)\in\Omega$, $\psi=\psi^{\mathrm{in}}(x,t;\lambda)$ is an entire function of $\lambda$ and $\det(\psi^{\mathrm{in}}(x,t;\lambda))=1$.
\end{lemma}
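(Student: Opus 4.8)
The plan is to recognize this statement as the standard fact that an integrable overdetermined linear system (a flat connection) on a simply connected domain in the $(x,t)$-plane admits a unique fundamental solution with prescribed value at one point, and then to track analyticity in $\lambda$ and the determinant separately. Write the Lax pair (\ref{lax1}) compactly as $\psi_x=X\psi$ and $\psi_t=T\psi$ with $X(x,t;\lambda)=-i\lambda\sigma_3+U$ and $T(x,t;\lambda)=-4i\lambda^3\sigma_3+W$. Since $q$ is a bounded classical solution of (\ref{cmkdv}), the matrices $U$, $W$, and hence $X$, $T$, are smooth on $\Omega$ and polynomial in $\lambda$. The crucial structural input is the compatibility (zero-curvature) identity $X_t-T_x+[X,T]=0$, which holds on $\Omega$ precisely because $q$ solves (\ref{cmkdv}); this is exactly the sense in which (\ref{cmkdv}) is the compatibility condition of (\ref{lax1}).

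First I would establish existence and uniqueness for each fixed $\lambda\in\mathbb{C}$. Fix $\lambda$, integrate the $x$-equation along $t=0$ starting from $(L,0)$ with $\psi(L,0;\lambda)=\mathbb{I}$, and then integrate the $t$-equation in the $t$-direction; on a simply connected $\Omega$ the flatness of the connection renders this construction path-independent, so that $\psi$ is well defined and single-valued (equivalently, one invokes the Frobenius integrability theorem). The key verification is that the residual $\Phi:=\psi_x-X\psi$ vanishes identically: using $\psi_t=T\psi$ and $\psi_{xt}=\psi_{tx}$ one computes
\[
\Phi_t=T\Phi+\bigl(T_x-X_t+[T,X]\bigr)\psi=T\Phi,
\]
where the bracketed term vanishes by the zero-curvature identity. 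Since $\Phi\equiv 0$ on $t=0$ by construction, uniqueness for the homogeneous linear ODE $\Phi_t=T\Phi$ forces $\Phi\equiv 0$ throughout $\Omega$, so $\psi$ satisfies both Lax equations simultaneously. Uniqueness of $\psi^{\mathrm{in}}$ then follows from the uniqueness theorem for linear systems of ODEs.

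Next I would upgrade to analyticity in $\lambda$. Recasting the initial value problem as a Volterra integral equation along a path in $\Omega$ from $(L,0)$ to $(x,t)$ and solving by Picard iteration produces a series whose partial sums are polynomials in $\lambda$, because $X$ and $T$ are polynomial in $\lambda$, and which converges uniformly for $\lambda$ in any compact subset of $\mathbb{C}$ by the usual Gronwall/factorial bound on the iterates. A locally uniform limit of entire functions is entire (Morera/Weierstrass), so $\psi^{\mathrm{in}}(x,t;\lambda)$ is entire in $\lambda$ for each fixed $(x,t)\in\Omega$. Finally, the determinant follows from Abel's (Jacobi's) formula: since $\sigma_3$ is traceless and $U$ is off-diagonal, $\mathrm{tr}\,X=0$, while direct inspection of the $t$-part of (\ref{lax1}) shows its diagonal entries are negatives of one another, so $\mathrm{tr}\,T=0$ as well. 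Hence $\partial_x\det\psi^{\mathrm{in}}=(\mathrm{tr}\,X)\det\psi^{\mathrm{in}}=0$ and likewise $\partial_t\det\psi^{\mathrm{in}}=0$, so $\det\psi^{\mathrm{in}}$ is constant on the connected set $\Omega$ and equals its value $\det\mathbb{I}=1$ at $(L,0)$.

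The main obstacle is the global, path-independent construction on a \emph{general} simply connected $\Omega$: the naive scheme of integrating first in $x$ and then in $t$ works directly only on product-like or star-shaped regions, so the argument must genuinely use flatness of the connection together with simple connectivity (the Frobenius theorem) to glue local solutions into a single-valued $\psi^{\mathrm{in}}$. A secondary technical point is making the analytic dependence on $\lambda$ uniform over the relevant part of $\Omega$; for the stated pointwise-in-$(x,t)$ conclusion the local exponential estimate on the Picard iterates already suffices.
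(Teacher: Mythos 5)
Your proof is correct and is the standard argument: the paper itself gives no proof of this lemma (it is quoted from \cite{chen2019,bilman1}), and the cited sources establish it exactly as you do, via the zero-curvature identity plus Frobenius for existence/uniqueness of the simultaneous solution, Picard iteration with polynomial-in-$\lambda$ coefficients for entirety, and Abel's formula with $\mathrm{tr}\,X=\mathrm{tr}\,T=0$ for $\det\psi^{\mathrm{in}}\equiv1$. Your residual computation $\Phi_t=T\Phi$ and your remark that simple connectivity is what makes the construction single-valued on a general $\Omega$ (read as a domain in $\mathbb{R}^2$, correcting the paper's typo $\Omega\subseteq\mathbb{R}$) are both accurate.
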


Let $\Sigma_0$ represent a clockwise circular contour centered on the origin and having a radius $r_0$ greater than 1. And let
\bee
M(x,t;\lambda)=\psi(x,t;\lambda)e^{i\rho(\lambda)(x+(4\lambda^2-2)t)\sigma_3},\quad \lambda\in\mathbb{C}\setminus(\Sigma_0\cup\Sigma_c), \quad (x,t)\in\mathbb{R}\times\mathbb{R}^+,
\ene
with
\bee\no
\psi(x,t;\lambda)=\begin{cases}
\begin{array}{ll}\left(\!\!\begin{array}{cc}
\dfrac{\psi_{-1}(x,t,\lambda)}{s_{11}(\lambda)}, & \psi_{+2}(x,t,\lambda)
\end{array}\!\!\right), & \lambda\in D_+,\vspace{0.05in}\\
\left(\!\!\begin{array}{cc}
\psi_{+1}(x,t,\lambda), & \dfrac{\psi_{-2}(x,t,\lambda)}{s_{11}^*(\lambda^*)}
\end{array}\!\!\right), & \lambda\in D_-,\vspace{0.05in}\\
\psi^{\mathrm{in}}(x,t;\lambda),&\lambda\in D_0,
\end{array}
\end{cases}
\ene
where $\Sigma_+:=\Sigma_0\cap\mathbb{C}^+$,~$\Sigma_-:=\Sigma_0\cap\mathbb{C}^-$,~$\Sigma_l:=\{\lambda\in\mathbb{R}\,\,|\,\,\lambda\leq-r_0\}$,~
$\Sigma_{r_0}:=\{\lambda\in\mathbb{R}\,\,|\,\,\lambda\geq r_0\}$,~$D_0:=\{\lambda\in\mathbb{C}\,\,|\,\,|\lambda|<r_0\}$,~$D_+:=\{\lambda\in\mathbb{C}\,\,|\,|\lambda|>r_0\}\cap\mathbb{C}^+$,
~$D_-:=\{\lambda\in\mathbb{C}\,\,|\,\,|\lambda|>r_0\}\cap\mathbb{C}^-$.

\begin{figure}[!t]
    \centering
 \vspace{-0.15in}
  {\scalebox{0.4}[0.4]{\includegraphics{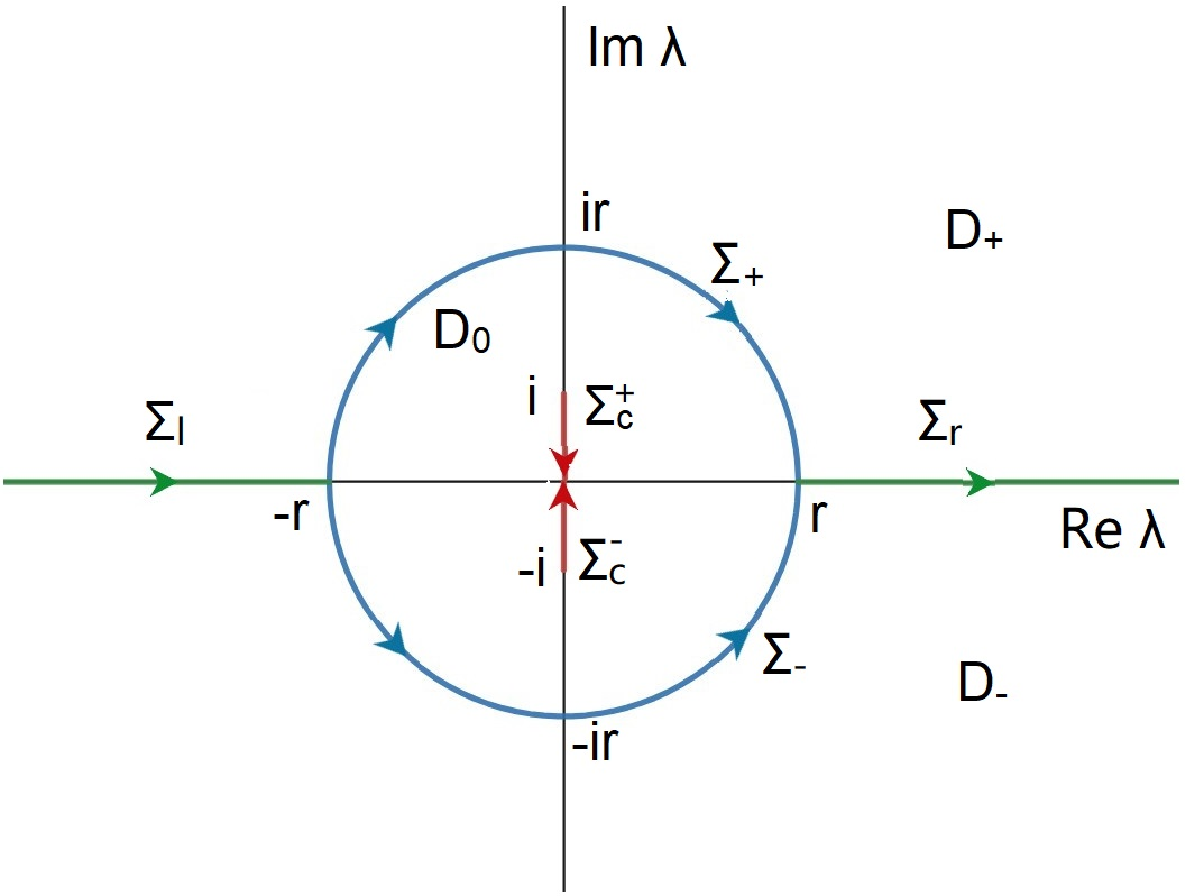}}}\hspace{-0.35in}
\vspace{0.05in}
\caption{The jump contour $\Sigma_l\cup\Sigma_r\cup\Sigma_+\cup\Sigma_-\cup\Sigma_c$ for $M(x,t,\lambda)$~\cite{chen2019}.}
   \label{D+D-}
\end{figure}

\begin{lemma} ~\cite{chen2019}  The modified Riemann-Hilbert problem of the c-mKdV equation (\ref{cmkdv}), that is,
\begin{itemize}
    \item{} Analyticity: $M(x,t;\lambda)$ is a meromorphic function in $\lambda\in\mathbb{C}\setminus(\Sigma_l\cup\Sigma_r\cup\Sigma_0\cup\Sigma_c)$;

 \item{} Jump conditions: The boundary values on the jump contour $\Sigma_l\cup\Sigma_r\cup\Sigma_0\cup\Sigma_c$ are related as:
\bee\no
M_+(x,t;\lambda)=M_-(x,t;\lambda)V(x,t;\lambda).
\ene

 \item{} Normalization: $M(x,t;\lambda)={\color{red}\mathbb{I}}+\mathcal{O}(\lambda^{-1})$ as $\lambda\rightarrow\infty$.
\end{itemize}
where the jump contour is given in Fig.~\ref{D+D-}, and the jump matrix $V(x,t;\lambda)$ is defined by
\bee\no
V(x,t;\lambda)=\begin{cases}
e^{-i\rho(\lambda)(x+(4\lambda^2-2)t)\widehat{\sigma}_3}V_1(\lambda),\quad \lambda\in \Sigma_0,\vspace{0.05in}\\
e^{2i\rho_+(\lambda)(x+(4\lambda^2-2)t)\sigma_3},\quad \lambda\in \Sigma_c,
\end{cases}
\ene
with
\bee\no
V_1(\lambda)=\begin{cases}
\left(\!\!\begin{array}{cc}
\dfrac{\psi_{-1}(L,0;\lambda)}{s_{11}(\lambda)}, & \psi_{+2}(L,0;\lambda)
\end{array}\!\!\right),\quad \lambda\in D_+,\vspace{0.1in}\\
\left(\!\!\begin{array}{cc}
\psi_{+1}(L,0;\lambda), & \dfrac{\psi_{-2}(L,0;\lambda)}{s_{11}^*(\lambda^*)}
\end{array}\!\!\right)^{-1},\quad\lambda\in D_-,\vspace{0.1in}\\
\left[\!\!\begin{array}{cc}
1
+|\hat{r}(\lambda)|^2& \hat{r}^*(\lambda) \vspace{0.05in}\\
\hat{r}(\lambda) & 1
\end{array}\!\!\right],\qquad\lambda\in \Sigma_l\cup\Sigma_r,\quad \hat{r}(\lambda)=\dfrac{s_{12}(\lambda)}{s_{11}(\lambda)},
\end{cases}
\ene
has a unique solution $M(x,t;\lambda)$ for $(x,t)\in\mathbb{R}\times\mathbb{R}^+$.
\end{lemma}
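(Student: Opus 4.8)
The plan is to establish existence by explicit construction from the scattering eigenfunctions and to establish uniqueness by a determinant-plus-Liouville argument; the decisive preliminary in both parts is that every jump matrix has unit determinant and that $M$ is in fact holomorphic (not merely meromorphic) in the stated domain.

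First I would verify that the piecewise-defined matrix $M(x,t;\lambda)$ introduced above actually solves the problem. Analyticity off the contour $\Sigma_l\cup\Sigma_r\cup\Sigma_0\cup\Sigma_c$ follows from the analyticity of the columns $\mu_{\pm j}$ recorded before the lemma (the pairs $\mu_{-1},\mu_{+2}$ in $\mathbb{C}^+\setminus\Sigma_c$ and $\mu_{+1},\mu_{-2}$ in $\mathbb{C}^-\setminus\Sigma_c$), together with the preceding lemma on the entire inner fundamental solution $\psi^{\mathrm{in}}$ on $D_0$. The only candidate poles come from the factors $1/s_{11}$ and $1/s_{11}^*$ appearing in $D_\pm$; since the zeros of $s_{11}$ (the discrete eigenvalues) together with the branch cut $\Sigma_c=[-i,i]$ are enclosed inside $D_0$ — this being the purpose of taking $r_0>1$ — we have $s_{11}\neq0$ throughout $D_+\cup D_-$, so $M$ is genuinely holomorphic off the contour. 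The normalization $M\to\mathbb{I}$ as $\lambda\to\infty$ follows from $\mu_\pm(\lambda)\to E(\lambda)\to\mathbb{I}$, using $\rho(\lambda)=\lambda+\mathcal{O}(\lambda^{-1})$ and $w(\lambda)\to1$. For the jump conditions I would compute $M_+M_-^{-1}$ on each portion of the contour separately: on $\Sigma_l\cup\Sigma_r$ the scattering relation $\psi_+=\psi_-S$ read column-wise yields the reflection-coefficient jump with $\hat r=s_{12}/s_{11}$; on $\Sigma_0$ the fact that the inner and outer eigenfunctions solve the same Lax pair forces them to differ by an $(x,t)$-independent matrix, which is exactly $V_1(\lambda)$ evaluated at $(L,0)$, the whole $(x,t)$-dependence being carried by $e^{-i\rho(x+(4\lambda^2-2)t)\widehat{\sigma}_3}$; on $\Sigma_c$ the eigenfunctions are continuous and the jump originates solely from the sign change $\rho_+=-\rho_-$ across the branch of $\rho$, producing the diagonal exponential factor.

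For uniqueness I would first show $\det V\equiv1$ on every piece. Because $\mathrm{tr}(-i\lambda\sigma_3+U)=0$, Abel's identity gives $\det\psi_\pm\equiv\det E=1$, whence $\det S=1$; a short Wronskian computation using $\psi_{+2}=s_{12}^*\psi_{-1}+s_{11}\psi_{-2}$ then gives $\det V_1=1$ on $\Sigma_0$, the reflection jump has determinant $(1+|\hat r|^2)-|\hat r|^2=1$, and the diagonal exponential on $\Sigma_c$ trivially has unit determinant. Consequently $\det M$ is continuous across the whole contour (since $\det M_+=\det M_-\det V=\det M_-$), hence extends to an entire function tending to $1$ at infinity, so $\det M\equiv1$ and $M$ is everywhere invertible. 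If $M_1,M_2$ are two solutions, then $R:=M_1M_2^{-1}$ satisfies $R_+=M_{1-}VV^{-1}M_{2-}^{-1}=R_-$ across the contour, hence continues analytically through it, and $R\to\mathbb{I}$ at infinity; a Liouville argument then forces $R\equiv\mathbb{I}$, i.e. $M_1=M_2$.

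The main obstacle is the behavior at the non-smooth points of the contour: the branch points $\lambda=\pm i$ (endpoints of $\Sigma_c$, where $w\sim\rho^{-1/2}\sim(\lambda\mp i)^{-1/4}$, so that $M$ itself acquires a fourth-root singularity) and the self-intersection points $\lambda=\pm r_0$ where $\Sigma_0$ meets $\Sigma_l\cup\Sigma_r$. At these points I must check that the local jumps obey the cyclic compatibility relation, so that no residual jump emanates from the junction and $R$ has no genuine singularity there, and that the singularity of $R$ at $\pm i$ is weaker than a simple pole. Since each $M_i=\mathcal{O}\big((\lambda\mp i)^{-1/4}\big)$ and $\det M_i=1$ yields $M_i^{-1}=\mathcal{O}\big((\lambda\mp i)^{-1/4}\big)$, the product satisfies $R=\mathcal{O}\big((\lambda\mp i)^{-1/2}\big)=o\big((\lambda\mp i)^{-1}\big)$, making the singularities removable; only after this verification does Liouville's theorem legitimately apply and the proof close.
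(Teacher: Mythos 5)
First, a point of context: the paper gives no proof of this lemma at all -- it is imported verbatim from \cite{chen2019} -- so there is no in-text argument to measure yours against. Judged on its own terms, your uniqueness half is sound and standard: $\det V\equiv 1$ on every arc (your Wronskian identity $\det(\psi_{-1},\psi_{+2})=s_{11}\det\psi_-$ is exactly the right computation), hence $\det M$ and then $R=M_1M_2^{-1}$ continue analytically across the contour, and Liouville's theorem applies once the junctions and endpoints are handled. One small correction there: in this formulation $M$ does \emph{not} acquire a fourth-root singularity at $\lambda=\pm i$. Those points lie inside $D_0$, where $M=\psi^{\mathrm{in}}e^{i\rho(\lambda)(x+(4\lambda^2-2)t)\sigma_3}$ with $\psi^{\mathrm{in}}$ entire and $\rho(\pm i)=0$, so $M$ is bounded there; the $(\lambda\mp i)^{-1/4}$ blow-up of $w(\lambda)$ lives in $E(\lambda)$, which never enters the formula for $M$ on $D_0$. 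Your removability estimate is therefore an overestimate, though a harmless one. You are also right, and it matters for uniqueness, to insist that ``meromorphic'' can be upgraded to ``analytic'' here: without that (or explicit residue conditions) uniqueness would fail trivially.

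The genuine gap is in the existence half. You exhibit a solution by assembling $\psi_{\pm}$, $\psi^{\mathrm{in}}$ and $1/s_{11}$ into the piecewise matrix $M$ -- but that construction presupposes that $q(x,t)$ is already a bounded classical solution at the point $(x,t)$ in question (the preceding lemma only guarantees $\psi^{\mathrm{in}}$ on a domain $\Omega$ where such a solution is given). The lemma asserts unique solvability for \emph{every} $(x,t)\in\mathbb{R}\times\mathbb{R}^+$, and it is precisely this unconditional solvability that lets one subsequently \emph{define} $q(x,t)=2i\lim_{\lambda\to\infty}\lambda M_{12}(x,t;\lambda)$; deriving existence from the eigenfunctions of $q$ is circular in that reading. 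The standard repair -- and the one this paper itself deploys when it proves the analogous solvability statement for Riemann--Hilbert Problem 3 in Theorem \ref{prop5} -- is Zhou's vanishing lemma: reorient the contour so that the jump satisfies the Schwarz-reflection condition $V(\lambda)=V(\lambda^*)^{\dagger}$ off the real axis together with the positivity condition on $\mathbb{R}$, conclude that the homogeneous $L^2$-vanishing problem has only the trivial solution, and then invoke Fredholm theory to get existence. Your proposal never touches this step, so as written it establishes uniqueness but only a conditional form of existence.
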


Then, the solution of c-mKdV equation (\ref{cmkdv}) can be given by the solution $M(x,t;\lambda)$:
\bee\no
q(x,t)=2i\lim_{\lambda\rightarrow\infty}\lambda M_{12}(x,t;\lambda).
\ene

We consider the following gauge transformation:
\bee\no
\widehat{\psi}(x,t;\lambda)=\begin{cases}
G(x,t;\lambda)\psi(x,t;\lambda),\quad \lambda\in D_+\cup D_-,\vspace{0.05in}\\
G(x,t;\lambda)\psi(x,t;\lambda)G(L,0;\lambda)^{-1},\quad \lambda\in D_0.
\end{cases}
\ene
By adjusting the parameters in function $G(x,t;\lambda)$ and performing multiple gauge transformations, higher-order rational soliton solutions of c-mKdV equation (\ref{cmkdv}) can be obtained. Thus one can find the $(2n-1,\, 2n)$th-order rational soliton solution of Eq.~(\ref{cmkdv}) in the form \cite{chen2019}
\bee \label{solu-n} \begin{array}{l}
\tilde{q}_{0/e}(x,t)
=1+2i\dfrac{\rm{det}({\mathcal P}_{o/e,4n}(x,t))+\rm{det}({\mathcal P}_{o/e,4n-1}(x,t))}{\rm{det}({\mathcal P}_{o/e}(x,t))},
\end{array}
\ene
where $\mathcal{P}_{o/e}(x,t)=\mathcal{P}_{(2n-1)/(2n)}(x,t)$ is a coefficient matrix that consists of $2n\times 2n$ blocks:
{\small\begin{align}\no
&\mathcal{P}_{o/e}(x,t)=\\\no
&\left(\!\!\!\begin{array}{cccccccc}
                  A^{[1]}_0& A^{[2]}_0&0&0&\cdots&\cdots&\cdots&0 \v\\
                  A^{[1]}_1& A^{[2]}_1& A^{[1]}_0& A^{[2]}_0&0&\cdots&\cdots&0 \v\\
                 \vdots & \vdots & \vdots & \vdots & \vdots & \vdots & \vdots & \vdots  \v\\
                  A^{[1]}_{n-1}&A^{[2]}_{n-1}&A^{[1]}_{n-2}&A^{[2]}_{n-2}&\cdots&\cdots& A^{[1]}_{0}&A^{[2]}_{0} \v\\
                  A^{[1]}_{n}+B^{[1]}_{0,n}& A^{[2]}_{n}+B^{[2]}_{0,n}&A^{[1]}_{n-1}+B^{[1]}_{0,n-1}&A^{[2]}_{n-1}+B^{[2]}_{0,n-1}&\cdots&\cdots&A^{[1]}_{1}+B^{[1]}_{0,1}
                  &A^{[2]}_{1}+B^{[2]}_{0,1} \v\\
                  A^{[1]}_{n+1}+B^{[1]}_{1,n}&A^{[2]}_{n+1}+B^{[2]}_{1,n}&A^{[1]}_{n}+B^{[1]}_{1,n-1}&A^{[2]}_{n}+B^{[2]}_{1,n-1}&\cdots&\cdots&A^{[1]}_{2}+B^{[1]}_{1,1}
                  &A^{[2]}_{2}+B^{[2]}_{1,1}\v\\
                  A^{[1]}_{n+2}+B^{[1]}_{2,n}&A^{[2]}_{n+2}+B^{[2]}_{2,n}&A^{[1]}_{n+1}+B^{[1]}_{2,n-1}&A^{[2]}_{n+1}+B^{[2]}_{2,n-1}&\cdots&\cdots&A^{[1]}_{3}+B^{[1]}_{2,1}
                  &A^{[2]}_{3}+B^{[2]}_{2,1}\v\\
                \vdots&\vdots & \vdots & \vdots & \vdots & \vdots & \vdots & \vdots \v\\
                A^{[1]}_{2n-1}\!+\!B^{[1]}_{n-1,n}& A^{[2]}_{2n-1}\!+\!B^{[2]}_{n-1,n}& A^{[1]}_{2n-2}\!+\!B^{[1]}_{n-1,n-1}&
                A^{[2]}_{2n-2}\!+\!B^{[2]}_{n-1,n-1}&\cdots&\cdots&
                A^{[1]}_{n}\!+\!B^{[1]}_{n-1,1}&
                 A^{[2]}_{n}\!+\!B^{[2]}_{n-1,1}
                 \end{array}\!\!\!
                 \right)
\end{align}}
with
\bee \no A_k^{[j]}=\left(\begin{array}{cc}
(w^+_{o/e,k}(x,t))_j&0\\
0&(w^-_{o/e,k}(x,t))_j
\end{array}
                 \right),\quad j=1,2,
\ene
\bee\no  B_{m,k}^{[j]}=:\left(\begin{array}{cc}
0&\sum_{l=0}^{m}\gamma_{(m-l)k}(w^+_{o/e,l}(x,t))_j\\
\sum_{l=0}^{m}(-1)^{k+m-l}\gamma_{(m-l)k}(w^-_{o/e,l}(x,t))_j&0
\end{array}\right),\quad j=1,2,\ene
$\gamma_{km}=\frac{(-1)^k}{(2i)^{m+k}}\left(\begin{array}{c}
m+k-1\\ k \end{array}  \right)$,\, $(m=1,2,\ldots;\, k=0,1,\ldots),$
$w^+_{o/e,k}(x,t)=D_k^+(x,t){\bf c}_{\infty,o/e}$ and $w^-_{o/e,k}(x,t)=D_k^-(x,t)\sigma_3{\bf c}_{\infty,o/e},\, k=0,1,2,\ldots$ with $D_k^+(x,t)$, $D_k^-(x,t)$ being the Taylor coefficients of the analytic function $\psi_{bg}(\lambda;x,t)\psi_{bg}(\lambda;0,0)^{-1}$ in $\lambda=\pm i$, i.e., $
\Psi^{in}_{bg}(\lambda;x,t)=\sum_{j=0}^{\infty}D_j^{\pm}(x,t)(\lambda\mp i)^j,$
and $\mathcal{P}_{o/e,k}$ stands for the matrix $\mathcal{P}_{o/e}$ with its $k$th column replaced by the vector $(0,\cdots, 0, -(w^+_{o/e,0}(x,t))_1, -(w^-_{o/e,0}(x,t))_1, -(w^+_{o/e,1}(x,t))_1, -(w^-_{o/e,1}(x,t))_1$, $\cdots$,
$-(w^+_{o/e,n-1}(x,t))_1, -(w^-_{o/e,n-1}(x,t))_1)^T$, the signs $(\cdot)_{o}$ and $(\cdot)_{e}$ correspond to the data ${\bf c}_{\infty}=(1, -1)^T$ and  ${\bf c}_{\infty}=(1, 1)^T$, respectively (see Ref.~ \cite{chen2019} for specific parameters).


\v It should be worth noting that though the c-mKdV equation (\ref{cmkdv}) and the NLS equation possesses the same ZS spectral problem, but the temporal parts of their Lax pairs are distinct such that the NLS equation admits the multi-rational RWs with finite numbers of extreme points, but the c-mKdV equation possesses the multi-rational solitons with infinitely many extreme points (see. Fig.~\ref{1-2-order}). Compared with the NLS equation, the phase function of the c-mKdV equation is different and has a higher degree, resulting in more asymptotic regions. The properties of limiting functions require more techniques and models to be proposed.  Some  highlights for the higher-order rational solitons are found for the Cauchy problem of the c-mKdV equation:


\begin{itemize}

\item [(1)]
The form of the obtained multi-rational soliton of the c-mKdV equation given by Eq.~(\ref{solu-n}) is very difficult to directly analyze the large $N$-order asymptotics. We here use the Riemann-Hilbert problems to  reconstruct multi-rational soliton solutions of the c-mKdV equation through a series of deformations to analyze the relevant properties of solution (see Proposition ~\ref{prop3} with RHP 2).

\item [(2)] Through the scale transform of spatio-tempotal variables and spectral parameter: $\{X:=nx,\, T:=n^3t,\, \Lambda:=\frac{\lambda}{n}\}$, we construct the Riemann-Hilbert problem corresponding to the limit functions of the multi-rational soliton solutions of the c-mKdV equation (\ref{cmkdv}). In order to analyze the PDE satisfied by the limit functions, we need to overcome the difficulties caused by the time part of the Riemann-Hilbert problem corresponding to the limit functions. At this point, we need to use higher-order constraints on the spatial part and take many computational techniques. As a result, Eq.~(\ref{ab-26-1}) show that the limit functions satisfy the new complex mKdV equation (\ref{cmkdv-1}) in the $(X, T)$-space.

\item [(3)] The Riemann-Hilbert problem corresponding to the limit functions of the c-mKdV equation is of odd power with respect to time variable, which affects the properties of the limit functions. In order to  analyze the symmetries of the limit functions (see Propositions \ref{sym-prop2} and \ref{prop-real}), we need to provide a new deformation for the Riemann-Hilbert problem. As a result, we find that the limit functions of the multi-rational soliton solution of c-mKdV equation are real (see Proposition \ref{prop-real}), which can also be reflected in the asymptotic part of the solution.

\item [(4)] Due to the difficulty caused by the term $e^{-i(\Lambda X+4\Lambda^3T)\sigma_3}$ in the Riemann-Hilbert problem, it is no longer easy to derive the ODEs satisfied by the limit functions. By using a series of analysis, we not only provide the ODEs (\ref{eq8}) with respect to space $X$ for fixed variable $T$ satisfied by the limit functions, but also provide the ODEs (\ref{5eq}) with respect to times $T$ for fixed variable $X$ satisfied by the limit functions. The ODEs with respect to space variable $X$ are identified with certain members of the Painlev\'{e}-III hierarchy (see Theorem \ref{new-th-1}).

\item [(5)] Because  the phase function contains the high-power parameter $z$ related to the spectral parameter, the c-mKdV equation has five asymptotic regions. And due to the limit function has special symmetry (\ref{27-real}) (the limit function is real), we need to use the  Painlev\'e-II differential equation (\ref{27-painleve-2}) to solve the leading term and error estimation of the limit function.

\end{itemize}

\v\noindent {\bf Main results}

Our main conclusions of this paper are summarized as follows:

\begin{theorem}\label{prop5} The Riemann-Hilbert Problem $3$ given by Proposition \ref{prop4} has a unique solution $N^{\pm}(\Lambda;X,T)$ with $\det(N^{\pm}(\Lambda;X,T))=1$. For each compact subset $K\subset\mathbb{R}^2$,
\bee\label{C-k}
\sup\limits_{|\Lambda|\neq1,(X,T)\in K}||N^{^\pm}(\Lambda;x,t)||=C(K)<\infty.
\ene
The function
\bee\label{fanyan-q1}
\widehat{q}^{\pm}(X,T)=2i\lim_{\Lambda\rightarrow\infty}\Lambda N^{\pm}_{12}(\Lambda;X,T)
\ene
is a global solution of the complex modified Korteweg-de Vries (c-mKdV) equation:
\bee\label{cmkdv-1}
\frac{\partial\widehat{q}}{\partial T}+\frac{\partial^3\widehat{q}}{\partial X^3}+6|\widehat{q}|^2\frac{\partial\widehat{q}}{\partial X}=0,\quad \widehat{q}=\widehat{q}(X,T).
\ene
\end{theorem}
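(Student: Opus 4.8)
The plan is to treat the three assertions in order: unique solvability with unit determinant, the uniform bound on compacts, and the c-mKdV flow, since the first two underpin the differentiation arguments needed for the last. For unique solvability I would exploit the Schwarz-type symmetries that RHP $3$ inherits from Proposition \ref{prop4}, which make the symmetric part of the jump matrix on $|\Lambda|=1$ positive definite. This permits an appeal to Zhou's vanishing lemma: the homogeneous problem (normalization $N\to 0$) has only the trivial solution, after which Fredholm theory for the associated Cauchy operator $I-\mathcal{C}_w$ on $L^2(|\Lambda|=1)$ gives existence and uniqueness. Uniqueness also follows directly from Liouville's theorem, since the ratio of two solutions has no jump across the circle, is bounded, and tends to $\mathbb{I}$ at infinity, hence equals $\mathbb{I}$. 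The determinant identity $\det N^{\pm}=1$ is the same Liouville argument applied to $\det N^{\pm}$, which is analytic across $|\Lambda|=1$ because each jump matrix has unit determinant, is bounded, and approaches $1$ at infinity. For the bound (\ref{C-k}) I would use that $(X,T)\mapsto \mathcal{C}_w$ depends continuously on the parameters and that the vanishing lemma holds uniformly over the compact $K$, so $\|(I-\mathcal{C}_w)^{-1}\|$ is uniformly bounded there; this controls the boundary values of $N^{\pm}$ on $|\Lambda|=1$, and the maximum-modulus principle propagates the bound to all $|\Lambda|\neq 1$.

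For the PDE the core move is the dressing argument. Set $\Psi^{\pm}(\Lambda;X,T)=N^{\pm}(\Lambda;X,T)\,e^{-i(\Lambda X+4\Lambda^3 T)\sigma_3}$. Since the phase $\Lambda X+4\Lambda^3 T$ is entire in $\Lambda$ and the conjugating exponential is precisely the $(X,T)$-dependence carried by the jump, $\Psi^{\pm}$ solves a Riemann--Hilbert problem whose jump on $|\Lambda|=1$ is independent of $X$ and $T$. Consequently $A:=\Psi_X\Psi^{-1}=N_XN^{-1}-i\Lambda\,N\sigma_3 N^{-1}$ and $B:=\Psi_T\Psi^{-1}=N_TN^{-1}-4i\Lambda^3\,N\sigma_3 N^{-1}$ have no jump across the circle and extend to entire functions of $\Lambda$. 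Expanding $N^{\pm}=\mathbb{I}+N_1\Lambda^{-1}+N_2\Lambda^{-2}+N_3\Lambda^{-3}+\cdots$ and invoking Liouville forces $A$ to be a degree-one and $B$ a degree-three polynomial in $\Lambda$:
\[
A=-i\Lambda\sigma_3+A_0,\qquad B=-4i\Lambda^3\sigma_3+B_2\Lambda^2+B_1\Lambda+B_0,
\]
with $A_0=-i[N_1,\sigma_3]$, so that $(A_0)_{12}=2i(N_1)_{12}=\widehat{q}^{\pm}$ and, using the Schwarz symmetry $(N_1)_{21}=-(N_1)_{12}^{*}$, $(A_0)_{21}=-(\widehat{q}^{\pm})^{*}$. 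Thus $A$ is exactly the spatial AKNS operator $-i\Lambda\sigma_3+U$ of (\ref{lax1}) with potential $\widehat{q}^{\pm}$.

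It then remains to pin down $B_2,B_1,B_0$ and read off the flow. I would determine them by matching powers of $\Lambda$ in the relation $N_X=AN+i\Lambda\,N\sigma_3$ (equivalently $\Psi_X=A\Psi$): this recursion expresses $N_2,N_3$ and the entries of $B_2,B_1,B_0$ in terms of $\widehat{q}^{\pm}$ and its $X$-derivatives, reproducing the temporal Lax operator $W$ of (\ref{lax1}) written in the $(X,T)$ variables. The compatibility $\Psi_{XT}=\Psi_{TX}$ is the zero-curvature relation $A_T-B_X+[A,B]=0$, and collecting its $\Lambda^0$ coefficient yields precisely (\ref{cmkdv-1}). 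Global existence and smoothness of $\widehat{q}^{\pm}$ on $\mathbb{R}^2$ follow from the solvability of RHP $3$ for every $(X,T)$ together with the uniform bound (\ref{C-k}), which legitimizes differentiating under the Cauchy integral and guarantees that $N_1,N_2,N_3$ are smooth in $(X,T)$.

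The hard part will be the time direction. Because the c-mKdV phase is cubic in $\Lambda$, the operator $B$ has degree three, so deriving the flow requires controlling the three subleading coefficients $N_1,N_2,N_3$ simultaneously and showing that the spatial recursion constrains $B_2,B_1,B_0$ to the exact c-mKdV form rather than leaving spurious lower-order terms. This is the ``higher-order constraint on the spatial part'' flagged in the introduction, and it is where the bulk of the computation and the genuine risk of error lie; the solvability and determinant claims, by contrast, are comparatively routine consequences of the vanishing lemma and Liouville's theorem.
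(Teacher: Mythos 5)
Your proposal is correct and follows essentially the same route as the paper: Zhou's vanishing lemma (via the Schwarz/conjugate-transpose symmetry of the reoriented jump, using $Q^{-1}=Q^{T}$) for unique solvability, Liouville's theorem for $\det N^{\pm}=1$, and the dressing argument on $P^{\pm}=N^{\pm}e^{-i(\Lambda X+4\Lambda^3T)\sigma_3}$ with Liouville forcing $A$ linear and $B$ cubic in $\Lambda$, after which the vanishing of the $\Lambda^{-1}$ and $\Lambda^{-2}$ coefficients of the spatial relation eliminates $N^{\pm[2]},N^{\pm[3]}$ and pins $B$ to the c-mKdV temporal operator, with the zero-curvature condition yielding (\ref{cmkdv-1}). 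The step you flag as the main risk — using the spatial recursion to constrain the three subleading coefficients of $B$ — is exactly how the paper proceeds.
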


\begin{theorem}\label{th1}
Let $q_k(x,t)$ {\color{red} be} the
$k$th-order rational soliton of the c-mKdV equation (\ref{cmkdv}) given by the formula (\ref{solu-n}). Then, if $k=2n,n\in\mathbb{N}$, one has
\bee
\frac{1}{n}q_{2n}\l(x,\,t\r)=\frac{1}{n}q_{2n}\l(\frac{X}{n},\,\frac{T}{n^3}\r)=\widehat{q}^{+}(X,T)+\mathcal{O}\l(\frac{1}{n}\r),\quad n\rightarrow\infty,
\ene
if $k=2n-1,n\in\mathbb{N}_+$, one has
\bee
\frac{1}{n}q_{2n-1}\l(x,\,t\r)=\frac{1}{n}q_{2n-1}\l(\frac{X}{n},\,\frac{T}{n^3}\r)=\widehat{q}^{-}(X,T)
+\mathcal{O}\l(\frac{1}{n}\r),\quad n\rightarrow\infty,
\ene
uniformly for $(X,T)$ in compact subsets of $\mathbb{R}^2$, where $\widehat{q}^{\pm}(X,T)$ satisfies the c-mKdV Eq.~(\ref{cmkdv-1}).
\end{theorem}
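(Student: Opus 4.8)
The plan is to prove Theorem~\ref{th1} by an approximation argument carried out entirely at the level of Riemann--Hilbert problems: I will realize the rescaled finite-order soliton as the solution of the RHP obtained from RHP~2 of Proposition~\ref{prop3} under the scaling $\{X=nx,\,T=n^3t,\,\Lambda=\lambda/n\}$, show that this problem differs from the limiting RHP~3 of Proposition~\ref{prop4} only through jump data that are $\mathcal{O}(1/n)$ uniformly on compact $(X,T)$-sets, and then convert closeness of jumps into closeness of solutions by a small-norm argument, using the uniform bound (\ref{C-k}) already supplied by Theorem~\ref{prop5}.

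First I would set up the comparison. Let $M(x,t;\lambda)$ solve RHP~2, so that the $k$th-order rational soliton of (\ref{solu-n}) is recovered by $q_k=2i\lim_{\lambda\to\infty}\lambda M_{12}$, and define the rescaled matrix $N^{(n)}(\Lambda;X,T):=M(X/n,\,T/n^3;\,n\Lambda)$. Since $\lambda/n=\Lambda$, the reconstruction formula gives directly
\[
\frac1n\,q_k\!\left(\frac Xn,\frac T{n^3}\right)=2i\lim_{\Lambda\to\infty}\Lambda\,N^{(n)}_{12}(\Lambda;X,T),
\]
so the task reduces to controlling $N^{(n)}-N^{\pm}$, where $N^{\pm}$ solves RHP~3 and $\widehat q^{\pm}$ is given by (\ref{fanyan-q1}). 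The analyticity structure is preserved under the change of variables; the substantive point is the jump. Here I would expand the phase: since $\rho(\lambda)=\sqrt{\lambda^2+1}=\lambda+\mathcal{O}(\lambda^{-1})$, one has $\rho(n\Lambda)=n\Lambda+\mathcal{O}(n^{-1})$, whence
\[
\rho(n\Lambda)\!\left(\frac Xn+(4n^2\Lambda^2-2)\frac T{n^3}\right)=\Lambda X+4\Lambda^3 T+\mathcal{O}(n^{-2}),
\]
which reproduces exactly the exponential $e^{-i(\Lambda X+4\Lambda^3T)\sigma_3}$ of RHP~3, with an $\mathcal{O}(n^{-2})$ remainder in the exponent that is uniform for bounded $(X,T)$ and for $\Lambda$ on the rescaled contour. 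At the same time the circle $\Sigma_0$, whose radius in the $n$th problem is taken commensurate with $n$, rescales to the fixed unit circle $|\Lambda|=1$ of RHP~3, consistent with the support $|\Lambda|\neq1$ appearing in (\ref{C-k}), while the branch cut $\Sigma_c=[-i,i]$ contracts to $[-i/n,i/n]$.

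Second, I would estimate the remaining constituents of the jump. On the reflectionless rational-soliton data the jumps on $\Sigma_l\cup\Sigma_r$ are trivial, so the only contributions come from the rescaled $\Sigma_0$-jump $V_1(n\Lambda)$ and from the contracting cut. Using the explicit Taylor data defining $w^{\pm}_{o/e,k}$ and $V_1$ together with the phase expansion above, I would show that the jump matrix $V^{(n)}$ of $N^{(n)}$ and the jump matrix $V^{\infty}$ of RHP~3 satisfy $\|V^{(n)}-V^{\infty}\|_{L^1\cap L^2\cap L^\infty}=\mathcal{O}(n^{-1})$ uniformly for $(X,T)$ in a fixed compact set $K$, the rate being dictated by the subleading coefficients of $V_1(n\Lambda)$ and of $\rho(n\Lambda)$; the contribution of the contracting cut is controlled by its shrinking length together with the boundedness of the phase near $\lambda=\pm i$.

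Finally, I would close the argument with small-norm theory. After reconciling the slightly different contours, form the error $E^{(n)}:=N^{(n)}(N^{\pm})^{-1}$; it is analytic off the union of the jump contours, tends to $\mathbb{I}$ at infinity, and has jump $\mathbb{I}+N^{\pm}_-\bigl(V^{(n)}(V^{\infty})^{-1}-\mathbb{I}\bigr)(N^{\pm}_-)^{-1}$, which by the previous step and the uniform bound (\ref{C-k}) is $\mathbb{I}+\mathcal{O}(n^{-1})$ in the relevant norms. Hence $E^{(n)}=\mathbb{I}+\mathcal{O}(n^{-1})$ uniformly on $K$, and extracting the $\Lambda^{-1}$ coefficient through the reconstruction formula turns this into $\tfrac1n q_k=\widehat q^{\pm}+\mathcal{O}(n^{-1})$, the parity of $k$ selecting the data ${\bf c}_\infty=(1,1)^T$ or $(1,-1)^T$ and hence $\widehat q^{+}$ or $\widehat q^{-}$. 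I expect the main obstacle to be precisely the uniform $L^1\cap L^2\cap L^\infty$ jump estimate in the presence of the contracting branch cut $[-i/n,i/n]$ and the behavior of $\rho$ near its branch points $\pm i$: one must verify that the collapse of the cut, the rescaling of $\Sigma_0$ to $|\Lambda|=1$, and the phase error all stay controlled uniformly in $(X,T)$ on compact sets, which is exactly where the odd, higher-degree time dependence of the c-mKdV phase makes the analysis heavier than in the NLS case.
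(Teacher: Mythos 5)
Your proposal follows essentially the same route as the paper: rescale RHP~2 via $X=nx$, $T=n^3t$, $\Lambda=\lambda/n$, show its jump is an $\mathcal{O}(1/n)$ perturbation of the RHP~3 jump, and close with a small-norm argument for the error matrix $\widehat M^{(k)}(n\Lambda;X/n,T/n^3)\,N^{\pm}(\Lambda;X,T)^{-1}$ using the uniform bound (\ref{C-k}) and the reconstruction formula. One clarification: the obstacle you flag at the end --- the contracting cut $[-i/n,i/n]$, residual jumps on $\Sigma_l\cup\Sigma_r$, and the matrix $V_1$ with the Taylor data $w^{\pm}_{o/e,k}$ --- does not actually arise, because RHP~2 (Proposition~\ref{prop2}) has already removed the cut through $\widehat M^{(k)}=M^{(k)}(M^{(0)})^{-1}$ and its only jump is the explicit matrix $\widehat E\,Q\bigl(\tfrac{\lambda\mp i}{\lambda\pm i}\bigr)^{n\sigma_3}Q^{-1}\widehat E^{-1}$ on $\Sigma_0$, so the $\mathcal{O}(1/n)$ jump estimate reduces to the elementary expansion displayed before Proposition~\ref{prop4}, with no branch-point behavior left to control.
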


\begin{theorem}\label{new-th-1}
The limit function $\widehat{q}^{\pm}(X,T)$ as a function of $X$ satisfies the following ordinary differential equations for fixed variable $T$:
\begin{align}\label{eq8}
\begin{aligned}
&3\widehat{q}\widehat{q}_{XX}-2\widehat{q}_X^2+4\widehat{q}^4+X(\widehat{q}\widehat{q}_{XXX}-\widehat{q}_X\widehat{q}_{XX}+4\widehat{q}^3\widehat{q}_X)\vspace{0.05in}\\
&+3T\bigg(\widehat{q}_X\widehat{q}_{XXXX}-\widehat{q}\widehat{q}_{XXXXX}
-6\widehat{q}^2(\widehat{q}\widehat{q}_{XX})_x-24\widehat{q}^2\widehat{q}_X
\widehat{q}_{XX}-4\widehat{q}^3\widehat{q}_{XXX}-24\widehat{q}^5\widehat{q}_X\bigg)=0,\vspace{0.05in}\\
&\bigg(\left(\widehat{q}+X\widehat{q}_X-3T\widehat{q}_{XXX}-18T\widehat{q}^2\widehat{q}_X\right)_X\bigg)^2+4\widehat{q}^2\left(\widehat{q}+X\widehat{q}_X-3T\widehat{q}_{XXX}-18T\widehat{q}^2\widehat{q}_X\right)^2-64\widehat{q}^2=0.
\end{aligned}
\end{align}

When $T=0$, Eq.~(\ref{lax-new-1}) becomes
\begin{align}\label{lax1-2g}
\begin{aligned}
&H_X(\Lambda; X,0)=\left[\!\!\begin{array}{cc}
-i\Lambda& \widehat{q}(X,0)  \vspace{0.05in}\\
-\widehat{q}(X,0) & i\Lambda
\end{array}\!\!\right]H(\Lambda; X,0),\vspace{0.05in}\\
&H_\Lambda(\Lambda; X,0)=\left[\!\!\begin{array}{cc}
-iX-\dfrac{i(2\widehat{q}_X+X\widehat{q}_{XX})}{4\widehat{q}}\Lambda^{-2}& X\widehat{q}\Lambda^{-1}+\dfrac{i}{2}(\widehat{q}+X\widehat{q}_X)\Lambda^{-2}  \vspace{0.05in}\\
-X\widehat{q}\Lambda^{-1}+\dfrac{i}{2}(\widehat{q}+X\widehat{q}_X)\Lambda^{-2} & iX+\dfrac{i(2\widehat{q}_X+X\widehat{q}_{XX})}{4\widehat{q}}\Lambda^{-2}
\end{array}\!\!\right]H(\Lambda; X,0),
\end{aligned}
\end{align}
which is the Lax pair of first member of the Painlev\'{e}-III equation hierarchy.

And when $T=0$ and $f:=\partial_X\ln\widehat{q}(X,0)$, Eq.~(\ref{eq8}) can be written as
\bee
X^2(ff_{XX}-f_X^2-f^4)+X(f_{XX}+ff_X-3f^3)+3f_X-3f^2+64=0{\color{red}.}
\ene
\end{theorem}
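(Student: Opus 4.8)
The plan is to realize $\widehat q^{\pm}$ as an isomonodromic potential and to read off the ODEs from a zero-curvature (compatibility) relation. Write $\theta(\Lambda;X,T)=\Lambda X+4\Lambda^3T$ and set $H(\Lambda;X,T)=N^{\pm}(\Lambda;X,T)e^{-i\theta\sigma_3}$, so that the $(X,T)$-dependence of the jump of $H$ is removed: on each arc of the contour of RHP~$3$ one has $H_+=H_-V_0$ with $V_0=V_0(\Lambda)$ independent of $(X,T)$. Differentiating $H$ in $X$ and in $T$ and invoking the uniqueness in Theorem~\ref{prop5}, the logarithmic derivatives $H_XH^{-1}$ and $H_TH^{-1}$ extend to entire functions of $\Lambda$; matching their growth at $\Lambda=\infty$ (where $N^{\pm}\to\mathbb{I}$) shows they are polynomials in $\Lambda$ of degrees $1$ and $3$ whose entries are built from $\widehat q:=\widehat q^{\pm}$ and its $X$-derivatives. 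This is exactly the $X$- and $T$-parts of the Lax pair, and the $X$-part is the AKNS problem in the first line of (\ref{lax1-2g}), with the potential real by Proposition~\ref{prop-real}.

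The new ingredient is the $\Lambda$-equation. First I would record the extra symmetry of RHP~$3$: since $\theta$ is invariant under $(\Lambda,X,T)\mapsto(\gamma\Lambda,\gamma^{-1}X,\gamma^{-3}T)$, matching the scaling of the c-mKdV equation in the Remark, and the limiting data $V_0(\Lambda)$ is piecewise constant in $\Lambda$, the Euler vector field $\mathcal{D}=\Lambda\partial_\Lambda-X\partial_X-3T\partial_T$ acts compatibly with the RHP. Concretely, $H_\Lambda H^{-1}$ has no jump across the arcs where $V_0$ is $\Lambda$-independent, and its only singularities are the irregular point $\Lambda=\infty$ (from $\theta_\Lambda=X+12\Lambda^2T$, producing the polynomial part $-i(X+12\Lambda^2T)\sigma_3$) and the point $\Lambda=0$ left behind when the branch cut $\Sigma_c$ collapses under $\Lambda=\lambda/n$. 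Hence $\mathcal{A}:=H_\Lambda H^{-1}$ is rational in $\Lambda$ with a double pole at $\Lambda=0$ and the prescribed behaviour at $\infty$. Expanding $H$ at $\Lambda=0$ and at $\Lambda=\infty$ and matching against the $X$-equation determines the Laurent coefficients of $\mathcal{A}$ in terms of $\widehat q,\widehat q_X,\widehat q_{XX},\dots$, yielding the $\Lambda$-equation (\ref{lax-new-1}); setting $T=0$ kills the $\Lambda^2$ term and leaves exactly (\ref{lax1-2g}).

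With both $H_X=\mathcal{U}H$ and $H_\Lambda=\mathcal{A}H$ in hand, the ODEs (\ref{eq8}) are the zero-curvature condition $\mathcal{U}_\Lambda-\mathcal{A}_X+[\mathcal{U},\mathcal{A}]=0$: expanding in powers of $\Lambda$ (the coefficients of $\Lambda^2,\Lambda,\Lambda^0,\Lambda^{-1},\Lambda^{-2}$) gives an overdetermined system, which I would reduce using (\ref{cmkdv-1}) to trade all $T$-derivatives for $X$-derivatives. The Euler combination $g:=\widehat q+X\widehat q_X+3T\widehat q_T$ appears naturally; using $\widehat q_T=-\widehat q_{XXX}-6\widehat q^2\widehat q_X$ it equals $\widehat q+X\widehat q_X-3T\widehat q_{XXX}-18T\widehat q^2\widehat q_X$, so the second line of (\ref{eq8}) is the relation $g_X^2+4\widehat q^2(g^2-16)=0$ coming from the lowest-order ($\Lambda^{-2}$) matching, while the first line is the next coefficient. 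Finally, at $T=0$ the $\Lambda$-equation (\ref{lax1-2g}) is compared with the standard $A_k,B_k$ Lax pair of the Painlev\'{e}-III hierarchy recorded in the Introduction to identify its first member, and substituting $f=\partial_X\ln\widehat q(X,0)$ into (\ref{eq8}) at $T=0$ is a direct computation giving the stated scalar equation.

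The main obstacle is establishing the $\Lambda$-equation rigorously rather than formally: one must control $N^{\pm}$ near the collapsed branch point $\Lambda=0$, proving the Laurent/analytic structure there so that $\mathcal{A}$ has exactly a double pole and no worse, and verify that the limiting jump $V_0$ is genuinely piecewise constant in $\Lambda$ so that $H_\Lambda H^{-1}$ is meromorphic. The cubic phase contributes the $12\Lambda^2T$ term, so for $T\neq0$ the relevant isomonodromy problem is a higher member of the hierarchy, and the zero-curvature expansion and its reduction via (\ref{cmkdv-1}) are correspondingly heavier; this bookkeeping, flagged in item~(4) of the highlights, is where the real work lies.
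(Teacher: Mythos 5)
Your overall strategy is the same as the paper's: build the $X$-Lax pair from RHP~3, adjoin a $\Lambda$-equation by isomonodromy, and read (\ref{eq8}) off the zero-curvature condition of the $(X,\Lambda)$ pair, using reality of $\widehat q$ and the scaling/Euler structure to reduce to the stated scalar equations. However, there is a concrete gap in your construction of the $\Lambda$-equation. With your choice $H=N^{\pm}e^{-i(\Lambda X+4\Lambda^3T)\sigma_3}$ the jump across $\Sigma_1$ is $V_0(\Lambda)=Qe^{\mp 2i\Lambda^{-1}\sigma_3}Q^{-1}$, which is independent of $(X,T)$ but \emph{not} of $\Lambda$; there is no arc on which it is $\Lambda$-independent, so $H_\Lambda H^{-1}$ picks up the nontrivial jump $H_-V_{0,\Lambda}V_0^{-1}H_-^{-1}$ and is not meromorphic. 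Moreover this $H$ is analytic at $\Lambda=0$ (both $N^{\pm}$ and the exponential are), so the double pole you invoke at the "collapsed branch point" does not materialize in your setup; the heuristic about $\Sigma_c$ collapsing does not by itself produce a singularity of $H_\Lambda H^{-1}$.

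The paper repairs exactly this point with an extra piecewise gauge transformation: $D^{\pm}$ defined in Eq.~(\ref{D}) absorbs $e^{\pm 2i\Lambda^{-1}\sigma_3}$ outside the circle and $e^{-i\theta\sigma_3}Qe^{i\theta\sigma_3}$ inside, after which $H=D\,e^{-i(\Lambda X+4\Lambda^3T+2\Lambda^{-1})\sigma_3}$ (Eq.~(\ref{H})) has the genuinely constant jump $Q^{-1}$ of RHP~5. It is the $2\Lambda^{-1}$ now sitting in the exponent that generates the double pole, via $C=D_\Lambda D^{-1}-i(X+12T\Lambda^2-2\Lambda^{-2})D\sigma_3D^{-1}$ in Eq.~(\ref{C-no}), and the explicit residue $C^{[-2]}=2iD(0;X,T)\sigma_3D(0;X,T)^{-1}$ of Eq.~(\ref{C-2-re}) carries the normalization $\mathrm{tr}\,C^{[-2]}=0$, $\det C^{[-2]}=4$ that supplies the constant $-64\widehat q^{\,2}$ in the second equation of (\ref{eq8}); your "$\Lambda^{-2}$ matching" alone would not fix that constant. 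Your identification of the Euler combination $g=\widehat q+X\widehat q_X+3T\widehat q_T=\widehat q+X\widehat q_X-3T\widehat q_{XXX}-18T\widehat q^{\,2}\widehat q_X$ and the relation $g_X^2+4\widehat q^{\,2}(g^2-16)=0$ is correct and matches Eqs.~(\ref{eq1})--(\ref{eq7}), but the route to it requires the additional deformation above, which you flagged as something to "verify" rather than supplied; as stated, the verification fails and a new transformation is needed.
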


\begin{theorem}\label{new-th-2}
The limit function $\widehat{q}^{\pm}(X,T)$ as a function of $T$ satisfies the following ordinary differential equations for fixed variable $X$:
\begin{align}\label{6eq}
\begin{aligned}
&\widehat{q}+X\widehat{q}_X+3T\widehat{q}_T-2\beta=0,\vspace{0.05in}\\
&4\alpha \widehat{q}+2\beta_X=0,\vspace{0.05in}\\
&X\widehat{q}_T-3(\widehat{q}_{XX}+2\widehat{q}^3)-3T(\widehat{q}_{XXT}+6\widehat{q}_T\widehat{q}^2)+4\beta \widehat{q}^2-4\alpha \widehat{q}_X=0,\vspace{0.05in}\\
&\alpha_T+2\beta(\widehat{q}_{XX}+2\widehat{q}^3)=0,\vspace{0.05in}\\
&\beta_T-2\alpha(\widehat{q}_{XX}+2\widehat{q}^3)=0,\vspace{0.05in}\\
&\alpha(X,T)^2+\beta(X,T)^2=4.
\end{aligned}
\end{align}

\end{theorem}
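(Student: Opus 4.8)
The plan is to read the two ODE systems off a compatible linear triple satisfied by the eigenfunction underlying RHP~3. By Theorem~\ref{prop5} the problem of Proposition~\ref{prop4} has a unique solution $N^{\pm}(\Lambda;X,T)$ with $\det N^{\pm}=1$, and $\widehat q^{\pm}=2i\lim_{\Lambda\to\infty}\Lambda N^{\pm}_{12}$. The entire $(X,T)$-dependence of the jump enters only through the scalar phase $\theta(\Lambda;X,T)=\Lambda X+4\Lambda^{3}T$ via the conjugating factor $e^{-i\theta\sigma_3}$, so I would set $H:=N^{\pm}e^{-i\theta\sigma_3}$ (the unnormalized eigenfunction already used in Theorem~\ref{new-th-1}). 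Then $H$ solves a Riemann--Hilbert problem whose jump is independent of $(X,T)$, hence the logarithmic derivatives $H_XH^{-1}$, $H_TH^{-1}$, $H_\Lambda H^{-1}$ carry no jump and are rational in $\Lambda$, fixed by their growth at $\Lambda=\infty$ and by the singularity at the collapsed branch point $\Lambda=0$. This yields a compatible triple $H_X=\mathcal U H$, $H_T=\mathcal W H$, $H_\Lambda=\mathcal V H$, where $\mathcal U=-i\Lambda\sigma_3+\widehat q\,(i\sigma_2)$ (the potential is real by Proposition~\ref{prop-real}, so $U=\widehat q\, i\sigma_2$), $\mathcal W$ is the cubic-in-$\Lambda$ c-mKdV time operator, and $\mathcal V$ has a pole at $\Lambda=0$; at $T=0$ this triple reduces to the Painlev\'e-III Lax pair of Theorem~\ref{new-th-1}.

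The four algebraic relations (the second, fourth, fifth and sixth equations of~(\ref{6eq})) I would obtain by evaluating the triple at the node $\Lambda=0$. Let $v=(v_1,v_2)^{\top}$ be the first column of $N^{\pm}(0;X,T)$ and define $\alpha+i\beta:=(v_1+iv_2)^2$, so $\alpha=v_1^2-v_2^2$ and $\beta=2v_1v_2$. The unit-determinant and reality/symmetry properties (Theorem~\ref{prop5}, Propositions~\ref{sym-prop2} and~\ref{prop-real}) force $v_1^2+v_2^2=2$, which is exactly $\alpha^2+\beta^2=4$. Setting $\Lambda=0$ in $H_X=\mathcal U H$ gives $\partial_X v_1=\widehat q\, v_2$, $\partial_X v_2=-\widehat q\, v_1$, whence $\beta_X=\partial_X(2v_1v_2)=-2\widehat q\,\alpha$. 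Computing $\mathcal W$ at $\Lambda=0$ gives $\mathcal W|_{\Lambda=0}=[U_X,U]+2U^{3}-U_{XX}=-(\widehat q_{XX}+2\widehat q^{3})\,i\sigma_2$ (the commutator vanishes since $U$ is real and antisymmetric); with $G:=\widehat q_{XX}+2\widehat q^{3}$ this gives $\partial_T v_1=-G v_2$, $\partial_T v_2=G v_1$, i.e. $(\alpha+i\beta)_T=2iG(\alpha+i\beta)$, which is the fourth and fifth equations.

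The two differential relations (the first and third equations of~(\ref{6eq})) are scaling identities. Since the c-mKdV equation~(\ref{cmkdv-1}) is invariant under $X\to cX,\ T\to c^{3}T$ and the phase obeys $X\theta_X+3T\theta_T=\Lambda\theta_\Lambda$, the operator $\mathcal L:=X\partial_X+3T\partial_T-\Lambda\partial_\Lambda$ annihilates $\theta$; as the jump of $H$ depends only on $\theta$, the matrix $R:=(\mathcal LH)H^{-1}=X\mathcal U+3T\mathcal W-\Lambda\mathcal V$ has no jump and is rational in $\Lambda$, with its top-degree terms cancelling (the $\Lambda^{3}$ and the $\Lambda\sigma_3$ contributions drop out). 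Matching the large-$\Lambda$ Laurent coefficients of $R$, using $\widehat q=2i\lim\Lambda N_{12}$, I expect to obtain the first equation $\widehat q+X\widehat q_X+3T\widehat q_T=2\beta$, identifying $2\beta$ with the action of the scaling generator on $\widehat q$. Matching the next coefficient (equivalently the residue structure of $R$ at $\Lambda=0$) together with the values of $\mathcal W$ and $\mathcal V$ there should give the third equation, which after $\widehat q_{XXT}+6\widehat q_T\widehat q^2=\partial_T G$ reads $X\widehat q_T-3G-3T\,\partial_T G+4\beta\widehat q^{2}-4\alpha\widehat q_X=0$.

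The hard part will be the time part, namely controlling $\mathcal V$ and closing the third equation for $T\neq0$, precisely the difficulty flagged in item~(4) of the introduction. Whereas at $T=0$ the spatial and spectral operators give the clean Painlev\'e-III structure of Theorem~\ref{new-th-1}, the cubic phase $4\Lambda^{3}T$ raises the order of the singularity of $\mathcal V$ at the collapsed branch point $\Lambda=0$ and couples the temporal flow $\mathcal W$ to the local data $v$ there. The delicate step is to expand $N^{\pm}$ near $\Lambda=0$, use the reality and symmetry constraints to pin down $v$ and its subleading coefficients, and extract the single Laurent coefficient of $R$ that closes the third equation; everything else is structural. I would use consistency of the whole system with the c-mKdV equation~(\ref{cmkdv-1})---equivalently the identity $G_X=-\widehat q_T$---as a check on the computation.
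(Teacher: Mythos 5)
Your overall strategy---a compatible triple $H_X=\mathcal UH$, $H_T=\mathcal WH$, $H_\Lambda=\mathcal VH$ with the equations read off from compatibility and from the data at $\Lambda=0$---is the same as the paper's (it uses the zero-curvature condition $C_T-B_\Lambda+[C,B]=0$ for the $T$- and $\Lambda$-operators and extracts the Laurent coefficients from $\Lambda^{5}$ down to $\Lambda^{-2}$), and your idea of getting the algebraic relations by evaluating the $X$- and $T$-flows at $\Lambda=0$ is a legitimate, arguably cleaner packaging of the paper's identity $C^{[-2]}=2iD(0;X,T)\sigma_3D(0;X,T)^{-1}$. However, there is a genuine gap at the foundation of the spectral equation. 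With your choice $H:=N^{\pm}e^{-i\theta\sigma_3}$ (the paper's $P^{\pm}$), the jump across $\Sigma_1$ is $Qe^{\mp2i\Lambda^{-1}\sigma_3}Q^{-1}$: this is independent of $(X,T)$, so $\mathcal U$ and $\mathcal W$ are fine, but it is \emph{not} independent of $\Lambda$, so $\mathcal V=H_\Lambda H^{-1}$ picks up the jump $H_+ \mapsto H_{-,\Lambda}H_-^{-1}+\mathcal{O}(\Lambda^{-2})\,H_-Q\sigma_3Q^{-1}H_-^{-1}$ and is not rational. The paper's intermediate step---passing to $D^{\pm}$ of RHP~4 and then to $H=D\,e^{-i(\Lambda X+4\Lambda^3T+2\Lambda^{-1})\sigma_3}$ of RHP~5, whose jump is the constant matrix $Q^{-1}$---is exactly what makes $\mathcal V$ exist, and it is not cosmetic: the full phase is $\theta+2\Lambda^{-1}$, which your scaling operator $\mathcal L=X\partial_X+3T\partial_T-\Lambda\partial_\Lambda$ does \emph{not} annihilate ($\mathcal L(2\Lambda^{-1})=2\Lambda^{-1}$). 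So the claimed cancellation structure of $R=X\mathcal U+3T\mathcal W-\Lambda\mathcal V$ has to be reworked; the surviving $2\Lambda^{-1}$ contribution is precisely where $\alpha,\beta$ enter the first and third equations, and those two equations (the third in particular, which the paper obtains from the $\Lambda^{-1}$ coefficient of the zero-curvature condition) are only anticipated, not derived, in your proposal.

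A second, more concrete error: your normalization at $\Lambda=0$ does not come out. The symmetries force $N^{\pm}(0;X,T)=\left[\begin{smallmatrix}a&b\\-b&a\end{smallmatrix}\right]$ with $a^2+b^2=\det=1$, so the first column $v$ of $N^{\pm}(0)$ satisfies $v_1^2+v_2^2=1$, not $2$, and $(v_1+iv_2)^2$ then has modulus $1$, giving $\alpha^2+\beta^2=1$ rather than $4$. Worse, the paper's $\alpha,\beta$ are built from $D(0)=N(0)Q$, and one checks $(v_1+iv_2)^2=-\tfrac{i}{2}(\alpha+i\beta)$, i.e.\ your $\alpha,\beta$ are (up to scale) the paper's $\beta,-\alpha$. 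Since the system (\ref{6eq}) is not symmetric under swapping $\alpha$ and $\beta$ (e.g.\ the scaling quantity $\widehat q+X\widehat q_X+3T\widehat q_T$ equals $2\beta$, while $4\alpha\widehat q+2\beta_X=0$), your identification would attach the scaling generator to the wrong component. Both issues are repairable---work with $D(0)=N(0)Q$ and set $\alpha+i\beta=2(d_{11}+id_{21})^2$---but as written the constant $4$ and the matching of the first two equations to the stated form do not follow.
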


\begin{theorem}\label{Big-X-q} Let $-1/96<a\leq0$ and $\Theta(z;a):=z+4az^3+2z^{-1}$. Then we know that the solution of Eq.~(\ref{cmkdv-1}) has the large $X$ asymptotics ($\widehat{q}^+(X,T)=-\widehat{q}^-(X,T)$)
\bee
\widehat{q}^+(X,aX^2)=\frac{\sqrt{2p}\cos(\phi(X,a))}{X^{\frac34}\sqrt{6ab+b^{-3}}}+\mathcal{O}(X^{-1}),\quad X\rightarrow+\infty,
\ene
where
\bee\no
\phi(X,a)=2X^{\frac12}\Theta(b;a)-p\ln(24ab+4b^{-3})-\frac{p\ln(X)}{2}-2p\ln(2b)+\mathrm{arg}(\Gamma(ip))-2\pi p^2-\frac{\pi}{4}.
\ene

\end{theorem}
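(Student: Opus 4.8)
The plan is to establish the result by a Deift--Zhou nonlinear steepest-descent analysis of the Riemann--Hilbert problem for $N^+(\Lambda;X,T)$ furnished by Proposition~\ref{prop4} and Theorem~\ref{prop5}, treating $X\to+\infty$ along the parabola $T=aX^2$. First I would substitute $T=aX^2$ into the controlling phase $e^{-i(\Lambda X+4\Lambda^3T)\sigma_3}$ and introduce the self-similar rescaling $\Lambda=X^{-1/2}z$, which is the unique scaling balancing the linear and cubic parts of the exponent (the saddles of $\Lambda X+4\Lambda^3 aX^2$ sit at $\Lambda^2=-(12aX)^{-1}=\mathcal{O}(X^{-1})$, i.e. at $z=\mathcal{O}(1)$). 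After absorbing the contribution of the cut $\Sigma_c$ near the origin into a $g$-function, the jumps are governed by $e^{\mp 2iX^{1/2}\Theta(z;a)\sigma_3}$ with the advertised phase $\Theta(z;a)=z+4az^3+2z^{-1}$, so that $X^{1/2}$ is the large parameter and the $2z^{-1}$ term records the branch-cut charge.

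Second, I would carry out the phase and sign analysis. The relevant saddles solve $\Theta'(z)=1+12az^2-2z^{-2}=0$, i.e. $12az^4+z^2-2=0$, whence $z^2=(-1\pm\sqrt{1+96a})/(24a)$. The hypothesis $-1/96<a\le 0$ is exactly the condition under which $1+96a>0$ and these saddles are real and distinct; I let $b>0$ denote the positive real saddle (so $b=\sqrt2$ at $a=0$) and record $\Theta''(b)=24ab+4b^{-3}=4(6ab+b^{-3})>0$, which already produces the curvature factors $\sqrt{6ab+b^{-3}}$ and $\ln(24ab+4b^{-3})$ seen in the statement. I would then draw the sign chart of $\mathrm{Re}\,(i\Theta(z;a))$ (equivalently of $\mathrm{Im}\,\Theta$); the five asymptotic regions mentioned in the introduction come from the sign pattern of this chart together with the pole of $\Theta$ at $z=0$, of which the present hypothesis singles out one. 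Opening lenses along the steepest-descent paths, I would deform the contour so that, away from fixed neighbourhoods of $\pm b$, all jumps decay to the identity at the exponential rate $e^{-cX^{1/2}}$.

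Third, in fixed discs around $\pm b$ I would build local parametrices from parabolic-cylinder (Weber) functions; their order parameter is the quantity $p$ fixed by the local reflection data of the limiting problem at the saddle, and the Weber connection formulae are what generate the constants $\arg(\Gamma(ip))$, $-2\pi p^2$, the amplitude $\sqrt{2p}$, while the factor $X^{\pm ip}$ yields the term $-\tfrac12 p\ln X$ in $\phi$. Matching these to the (essentially trivial) outer parametrix leaves an error problem whose jumps are uniformly $\mathcal{O}(X^{-1/2})$ near the saddles and exponentially small elsewhere, so the standard small-norm theory gives $zN_{12}^+(z)$ as a leading oscillatory term of size $\mathcal{O}(X^{-1/4})\propto(X^{1/2}\Theta''(b))^{-1/2}$ plus an $\mathcal{O}(X^{-1/2})$ remainder. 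Extracting $\widehat q^+=2i\lim_{\Lambda\to\infty}\Lambda N^+_{12}=2iX^{-1/2}\lim_{z\to\infty}zN^+_{12}$, the overall $X^{-1/2}$ from the scaling converts these into the leading $X^{-3/4}$ decay and the stated $\mathcal{O}(X^{-1})$ error; the two complex-conjugate contributions at $\pm b$ combine into the single cosine, and collecting the oscillatory phase $2X^{1/2}\Theta(b;a)$ with the parametrix constants reproduces $\phi(X,a)$ exactly. The relation $\widehat q^+=-\widehat q^-$ follows from the reality and symmetry established in Proposition~\ref{prop-real}.

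The main obstacle I anticipate is the global contour deformation driven by the correct $g$-function. Because $\Theta$ carries a simple pole at $z=0$ (the $2z^{-1}$ term), and because the jump on $|\Lambda|=1$ together with the cut $\Sigma_c$ recede to $|z|\sim X^{1/2}$ under the rescaling, determining the precise topology of the steepest-descent paths and the sign of $\mathrm{Re}\,(i\Theta)$ across the regions — and verifying that the lenses can be opened consistently without colliding with the receding cut — is the delicate step. The accompanying difficulty is pinning down $p$ and the sign conventions in the Weber connection formulae so that the constant in $\phi$ emerges precisely as $\arg(\Gamma(ip))-2\pi p^2-\tfrac{\pi}{4}$.
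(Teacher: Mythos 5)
Your overall strategy --- rescale $\Lambda=X^{-1/2}z$ along $T=aX^2$, identify the phase $\Theta(z;a)=z+4az^3+2z^{-1}$ with simple real saddles at $\pm b(a)$ for $-1/96<a\le 0$, open lenses, install parabolic-cylinder parametrices at $\pm b$, and close with a small-norm argument --- is the same as the paper's. However, two of your structural claims misdescribe the problem you would actually have to solve. First, the $2z^{-1}$ term does not ``record the branch-cut charge,'' and there is no cut $\Sigma_c$ left to absorb into a $g$-function: the limiting Riemann--Hilbert problem of Proposition \ref{prop4} has a single jump on $|\Lambda|=1$, and the $\pm2\Lambda^{-1}$ in the exponent is the $n\to\infty$ limit of the order-$n$ winding factor $\bigl(\tfrac{\lambda-i}{\lambda+i}\bigr)^{\pm n\sigma_3}$. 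The paper moves it into the phase not via a $g$-function but via the explicit piecewise conjugation $N^{\pm}\mapsto D^{\pm}$ of Eq.~(\ref{D}); no $g$-function appears in the large-$X$ analysis (only in the large-$T$ section), and the ``receding contour'' you identify as the main obstacle is a non-issue, since the jump matrix is analytic off $z=0$ and the circle $|z|=X^{1/2}$ is simply deformed to the fixed level curve of $\mathrm{Im}\,\Theta$ through $\pm b(a)$.

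Second, your account of where $p$ and the logarithmic terms of $\phi$ come from is too thin to produce the stated formula. After factorizing $Q^{-1}$ and opening lenses, a constant jump $2^{\sigma_3}$ survives on the whole band $I=[-b(a),b(a)]$ joining the two saddles; this forces the outer parametrix $\overline W^o(z;a)=\bigl(\tfrac{z+b}{z-b}\bigr)^{ip\sigma_3}$ with the specific value $p=\tfrac{\ln 2}{2\pi}$, and it is the conjugation of the local Weber parametrices by this outer function together with the conformal coordinates $\xi_{\pm b}=X^{1/4}g_{\pm b}(z;a)$ that generates $-\tfrac12p\ln X$, $-p\ln(24ab+4b^{-3})$ and $-2p\ln(2b)$. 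Calling the outer parametrix ``essentially trivial'' and never pinning down $p$ leaves no mechanism for these terms. A smaller but real slip: the jump deviation on the parametrix boundaries is $\mathcal{O}(X^{-1/4})$ (since $\xi\sim X^{1/4}$ there and $U(\xi)\xi^{ip\sigma_3}=\mathbb{I}+\mathcal{O}(\xi^{-1})$), not $\mathcal{O}(X^{-1/2})$ as you write; it is exactly this $X^{-1/4}$, multiplied by the $X^{-1/2}$ from the reconstruction formula, that gives the $X^{-3/4}$ amplitude, so your error bookkeeping is internally inconsistent even though you land on the correct exponents.
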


\begin{theorem}\label{new-prop} Let $a\rightarrow-1/96=:a_c$ and $\Theta(z;a):=4az^3+z+2z^{-1}$. If we assume that $a-a_c=\mathcal{O}(X^{-\frac13})$, then we have
the solution of Eq.~(\ref{cmkdv-1}) has the large $X$ asymptotics
\bee
\widehat{q}^+(X,aX^2)=-\frac{4\cdot3^{\frac13}|\mathcal{V}_1(-96\cdot3^\frac13(a+\frac{1}{96}) X^{\frac13})|}{X^{\frac{2}{3}}}\sin(\phi(X,a))+\mathcal{O}(X^{-\frac56}),\quad X\rightarrow+\infty,
\ene
where
\bee\no
\phi(X,a)=-\frac{\ln(2)}{6\pi}\ln(X)+\frac{16}{3}X^{\frac12}+64(a+\frac{1}{96})X^{\frac12}+\mathrm{arg}(\mathcal{V}_1(-96\cdot3^\frac13(a+\frac{1}{96}) X^{\frac13}))-\frac{\ln(2)}{\pi}\ln(4\cdot3^{-\frac13}).
\ene

\end{theorem}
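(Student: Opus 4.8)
The plan is to prove Theorem \ref{new-prop} by a Deift--Zhou nonlinear steepest--descent analysis of the Riemann--Hilbert problem for $N^{+}(\Lambda;X,T)$ (Proposition \ref{prop4}, RHP~3) restricted to the parabolic slice $T=aX^{2}$, carried out in the double--scaling limit $X\to+\infty$, $a\to a_{c}=-1/96$ with $a-a_{c}=\mathcal{O}(X^{-1/3})$. First I would reuse the $g$--function and contour deformation already built for the regular regime of Theorem \ref{Big-X-q}: on $T=aX^{2}$ the controlling phase is $2X^{1/2}\Theta(z;a)$ with $\Theta(z;a)=z+4az^{3}+2z^{-1}$, the natural rescaling being $\Lambda=X^{-1/2}z$ so that the two parts $\Lambda X$ and $4a\Lambda^{3}X^{2}$ both become $\mathcal{O}(X^{1/2})$, the remaining $2z^{-1}$ entering through the background/$g$--function. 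The saddles solve $\Theta'(z)=1+12az^{2}-2z^{-2}=0$, i.e.\ $12az^{4}+z^{2}-2=0$, and the regular analysis of Theorem \ref{Big-X-q} attaches a parabolic--cylinder parametrix to each simple saddle $b$.

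The decisive observation is the coalescence at $a=a_{c}$. One has $\Theta'(2)=\tfrac12+48a=0$ and $\Theta''(2)=48a+\tfrac12=0$ exactly at $a_{c}=-1/96$, while $\Theta'''(2)=24a-12\cdot2^{-4}=-1\neq0$; by the oddness $\Theta(-z;a)=-\Theta(z;a)$ the mirror saddle $z=-2$ behaves identically. For $a$ just above $a_{c}$ the two merging saddles sit at $z_{\pm}\approx 2\mp\sqrt{96}\,(a-a_{c})^{1/2}$, a separation of order $(a-a_{c})^{1/2}$. Expanding $X^{1/2}\Theta$ about $z=2$ and completing the cube by the shift $\delta=z-2\mapsto\tilde\delta$ that kills the quadratic term (the shift is $h=48(a-a_{c})+\mathcal{O}((a-a_{c})^{2})$) reaches the Painlev\'e--II (Flaschka--Newell) normal form: a cubic term $\propto X^{1/2}\tilde\delta^{3}$ plus a linear term with coefficient $\propto(a-a_{c})X^{1/2}$. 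Setting $\tilde\delta=\mathcal{O}(X^{-1/6})\zeta$ to make the cubic $\mathcal{O}(1)$ forces the local separation $X^{-1/6}$ to match the saddle separation $(a-a_{c})^{1/2}$, and renders the linear coefficient $\propto(a-a_{c})X^{1/3}\zeta$. This is precisely the Painlev\'e--II similarity variable, which is $\mathcal{O}(1)$ iff $a-a_{c}=\mathcal{O}(X^{-1/3})$; this explains the hypothesis and fixes $s=-96\cdot3^{1/3}(a+\tfrac{1}{96})X^{1/3}$ up to the constant bookkeeping of the model normalization.

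With the scales in hand I would build the local parametrix in disks around $z=\pm2$ from the Painlev\'e--II model RHP (\ref{27-painleve-2}), the relevant solution $\mathcal{V}_{1}$ being singled out by the connection/Stokes data read off from the outer parametrix and pinned down by the reality symmetry (\ref{27-real}) of Proposition \ref{prop-real}. Matching to the outer parametrix on the disk boundaries yields an error RHP with near--identity jump; the small--norm theory then gives a relative correction of one further local scale, i.e.\ $\mathcal{O}(X^{-5/6})$. Reconstructing $\widehat q^{+}$ via $\widehat q^{+}=2i\lim_{\Lambda\to\infty}\Lambda N^{+}_{12}$, the off--diagonal entry of the Painlev\'e--II parametrix contributes at the $\Lambda$--scale of the local region, namely $X^{-1/2}\cdot X^{-1/6}=X^{-2/3}$, which is exactly the stated amplitude (the analogous count $X^{-1/2}\cdot X^{-1/4}=X^{-3/4}$ recovers the regular amplitude of Theorem \ref{Big-X-q}). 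The phase $\phi(X,a)$ then assembles from the stationary value $2X^{1/2}\Theta(2;a_{c})=\tfrac{16}{3}X^{1/2}$, the linear detuning $2X^{1/2}\partial_{a}\Theta(2;a_{c})(a-a_{c})=64(a+\tfrac{1}{96})X^{1/2}$ (since $\partial_{a}\Theta(2)=4\cdot2^{3}=32$), the argument $\mathrm{arg}(\mathcal{V}_{1})$ of the model solution, and the logarithmic and constant shifts $-\tfrac{\ln2}{6\pi}\ln X$ and $-\tfrac{\ln2}{\pi}\ln(4\cdot3^{-1/3})$ coming from the $X$--dependence and the constant part of the parametrix normalization.

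The hard part will be the uniform construction and matching of the Painlev\'e--II parametrix across the entire window $a-a_{c}=\mathcal{O}(X^{-1/3})$: one must control the crossover where the quadratic and cubic terms of $X^{1/2}\Theta$ are comparable, verify that $\mathcal{V}_{1}$ is pole--free there (so the parametrix is invertible and the error is genuinely $\mathcal{O}(X^{-5/6})$), and extract the exact constants $96$, $3^{1/3}$ and the $\ln2$ terms from the connection formulae. A secondary technical point is to treat the two symmetric saddles $z=\pm2$ simultaneously while respecting the reality constraint (\ref{27-real}), so that the two local contributions combine into the single real $\sin$--form with the stated $|\mathcal{V}_{1}|$ prefactor; here the identity $\widehat q^{+}=-\widehat q^{-}$ of Theorem \ref{Big-X-q} should make the bookkeeping of the $\pm2$ contributions symmetric and thus tractable.
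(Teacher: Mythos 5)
Your proposal follows essentially the same route as the paper: the paper also identifies the coalescence of the two saddles near $z=\pm 2$ at $a_c=-1/96$, passes to a cubic normal form (via the Chester--Friedman--Ursell conformal map $2\Theta=M^3+rM-s$ with $\xi=X^{1/6}M$, $y=X^{1/3}r$, which is the rigorous version of your ``complete the cube'' step), installs a Painlev\'e-II local parametrix (RHP 8) matched to the outer parametrix $((z-z_1)/(z-z_2))^{ip\sigma_3}$, and closes with a small-norm argument giving the $\mathcal{O}(X^{-2/3})$ leading term and $\mathcal{O}(X^{-5/6})$ error. Your scale counting, the similarity variable $-96\cdot 3^{1/3}(a+\tfrac{1}{96})X^{1/3}$, and the phase contributions ($\tfrac{16}{3}X^{1/2}$ from $\Theta(2;a_c)=\tfrac83$, $64(a-a_c)X^{1/2}$ from $\partial_a\Theta(2)=32$, and the $\ln 2$ terms from the $\xi^{ip\sigma_3}$ normalization) all agree with the paper's computation.
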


The rest of this paper is arranged as follows.
In Sec. 2, we perform a series of deformations on the Riemann-Hilbert problem corresponding to infinite-order rational solitons of the c-mKdV equation (\ref{cmkdv}), and we construct a Riemann-Hilbert problem corresponding to the limit function which is a new solution of the c-mKdV equation (\ref{cmkdv}) in the rescaled variables, $X$ and $T$, and prove the existence and uniqueness of the Riemann-Hilbert problem's solution. In Sec. 3, we analyze the properties of the limit functions $\widehat{q}^{\pm}(X,T)$, including the symmetry and ordinary differential equation it satisfies. In Sec. 4, we study the large $X$ and transitional asymptotic behaviors of $\widehat{q}^{\pm}(X,T)$. And we provide some results under the condition of large $T$. Finally, we give some conclusions in Sec. 5.

\section{Large-order asymptotics of multi-rational solitons under NZB}

Before we study the near-field asymptotics of multi-rational solitons of the c-mKdV equation with nonzero background (\ref{cmkdv}),
we firstly present some two Riemann-Hilbert problems (RHPs), and the multi-rational solutions of the  c-mKdV equation (\ref{cmkdv}) expressed by using the RHP 2.


\begin{prop}\label{prop1} Riemann-Hilbert Problem 1.\\
Let $(x,t)\in\mathbb{R}^2$ and $k\in\mathbb{N}$. Find a $2\times 2$ matrix $M^{(k)}(\lambda;x,t)$ that satisfies the following properties:

\begin{itemize}

 \item {} Analyticity: $M^{(k)}(\lambda; x,t)$ is analytic in $\{\lambda|\lambda\in\mathbb{C}\setminus(\Sigma_c\cup\Sigma_0)\}$ and takes
continuous boundary values on $\Sigma_c\cup\Sigma_0$.

 \item {} Jump condition: The boundary values on the jump contour $\Sigma_c\cup\Sigma_0$ are defined as
 \bee
M^{(k)}_+(\lambda; x,t)=M^{(k)}_-(\lambda; x,t)V^{(k)}(\lambda; x,t),\quad \lambda\in\Sigma_c\cup\Sigma_0,
\ene
where the jump matrix is defined as
\bee \no
V^{(k)}
=\l\{
\begin{array}{ll}
\!\!\! e^{2i\rho_+(\lambda)(x+(4\lambda^2-2)t)\sigma_3},& \lambda\in\Sigma_c, \v\\
\!\!\! e^{-i\rho(\lambda)(x+(4\lambda^2-2)t)\sigma_3}Q\left(\dfrac{\lambda-i}{\lambda+i}\right)^{n\sigma_3}
Q^{-1}E(\lambda)e^{i\rho(\lambda)(x+(4\lambda^2-2)t)\sigma_3},& \lambda\in\Sigma_0,\,\, k=2n,\, n\in\mathbb{N}, \v\\
\!\!\! e^{-i\rho(\lambda)(x+(4\lambda^2-2)t)\sigma_3}Q\left(\dfrac{\lambda+i}{\lambda-i}\right)^{n\sigma_3}Q^{-1}E(\lambda)e^{i\rho(\lambda)(x+(4\lambda^2-2)t)\sigma_3},&\lambda\in\Sigma_0,
\,\, k=2n-1,\, n\in\mathbb{N}_+.
\end{array}\r.
\ene

 \item {} Normalization: $M^{(k)}(\lambda; x,t)$ tends to the identity matrix as $\lambda\rightarrow\infty$.
\end{itemize}
\end{prop}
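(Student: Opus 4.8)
The plan is to establish Riemann-Hilbert Problem 1 not as a free-standing existence claim but as the reflectionless, confluent specialization of the modified Riemann-Hilbert problem recalled above, and then to verify its three defining conditions directly. First I would restrict the general scattering data to the pure rational-soliton sector: set the reflection coefficient $\hat r(\lambda)\equiv 0$, so that the jump on $\Sigma_l\cup\Sigma_r$ collapses to the identity and only the contours $\Sigma_0$ and $\Sigma_c$ survive. The remaining data is the $n$-fold discrete spectrum: the $(2n-1,2n)$th-order rational solitons correspond to an $n$th-order coalescence of discrete eigenvalues at the branch points $\lambda=\pm i$ (equivalently an $n$th-order zero of $s_{11}$ there), which is exactly the configuration dressed by the gauge transformation $\widehat\psi=G\psi$ introduced just before the statement.

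Second, I would compute the accumulated dressing factor. Applying the gauge/Darboux transformation $n$ times with parameters tuned to the rational-soliton data, and then passing to the confluent limit in which the eigenvalues collapse onto $\pm i$, the iterated factor telescopes into the Blaschke-type matrix $Q\left(\frac{\lambda\mp i}{\lambda\pm i}\right)^{n\sigma_3}Q^{-1}$. Here $Q$ is precisely the matrix diagonalizing the background potential $U^{bg}$, so this factor is diagonal in the background eigenbasis; the parity of $k$ (that is, $k=2n$ versus $k=2n-1$) selects the sign of the exponent, matching the data ${\bf c}_\infty=(1,\mp 1)^T$. Conjugating by the background phase $e^{\mp i\rho(\lambda)(x+(4\lambda^2-2)t)\sigma_3}$ and inserting the background eigenfunction matrix $E(\lambda)$ then reproduces exactly the two branches of $V^{(k)}$ on $\Sigma_0$, while the square-root branch structure of $\rho$ across the cut yields the jump $e^{2i\rho_+(\lambda)(x+(4\lambda^2-2)t)\sigma_3}$ on $\Sigma_c$.

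Third, I would verify the three conditions. Analyticity off $\Sigma_c\cup\Sigma_0$ is immediate, since the only singularities of the dressing factor sit at $\lambda=\pm i\in\Sigma_c$. For the normalization, $M^{(k)}$ is assembled from the background matrix $E(\lambda)$ and the dressing factor, both of which tend to $\mathbb{I}$ as $\lambda\to\infty$ (using $\rho(\lambda)=\lambda+\mathcal{O}(\lambda^{-1})$, $w(\lambda)\to1$, and the Blaschke ratio $\to1$), so $M^{(k)}\to\mathbb{I}$. The jump condition follows by matching boundary values inside and outside the circle $\Sigma_0$ and across $\Sigma_c$. Uniqueness is the standard Liouville argument: every jump matrix has unit determinant, so $\det M^{(k)}\equiv1$, and any two solutions differ by an entire matrix tending to $\mathbb{I}$, hence coincide. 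Existence I would establish independently by checking that the explicit determinant formula (\ref{solu-n}) furnishes a solution, thereby bypassing any abstract solvability argument.

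The main obstacle is the confluent-limit computation in the second step: rigorously showing that the $n$-fold coalescence of discrete eigenvalues at the branch points produces precisely the integer-power Blaschke factor with the correct conjugation by $Q$ and the correct phase dressing, and that the square-root branches of $\rho$ and $w$ are tracked consistently across $\Sigma_c$ so that the two cases $k=2n$ and $k=2n-1$ separate cleanly. This is where the delicate bookkeeping of the robust IST resides; once the dressing factor is identified, the verification of analyticity, jump, and normalization is routine.
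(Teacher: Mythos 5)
You should first be aware that the paper supplies no proof of Proposition~\ref{prop1} at all: it is stated as the \emph{formulation} of the Riemann--Hilbert problem encoding the $k$th-order rational solitons, imported from the robust inverse scattering transform of \cite{chen2019,bilman1} (the gauge transformation $\widehat\psi=G\psi$ mentioned just above it is the only hint the paper gives of its origin). Your reconstruction --- reflectionless reduction so that only $\Sigma_0\cup\Sigma_c$ survives, iterated Darboux/gauge dressing concentrated at the branch points $\lambda=\pm i$, telescoping of the accumulated factor into the Blaschke-type power $Q\left(\frac{\lambda\mp i}{\lambda\pm i}\right)^{n\sigma_3}Q^{-1}$, and the cut jump $e^{2i\rho_+(\cdot)\sigma_3}$ coming from $\rho_+=-\rho_-$ on $\Sigma_c$ --- is the right picture of where this formulation comes from, and your verification of the normalization and the Liouville uniqueness argument are sound.

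However, the one step that would actually constitute a proof is exactly the step you defer: showing that the $n$-fold confluent Darboux iteration at $\lambda=\pm i$ produces precisely the factor $Q\left(\frac{\lambda\mp i}{\lambda\pm i}\right)^{n\sigma_3}Q^{-1}$ conjugated by the background phases, with the parity of $k$ selecting the sign of the exponent. Flagging this as ``the main obstacle'' and leaving it uncomputed means the proposal does not close; everything else in your argument is routine bookkeeping once that identity is in hand. Two smaller inaccuracies: $Q$ does not diagonalize $U^{bg}$ (the matrix $U^{bg}$ is a rotation, diagonalized only by the complex eigenvectors $(1,\mp i)^T$; the real rotation $Q$ merely commutes with it --- its actual role is tied to the degenerate limit of $E(\lambda)$ at the branch points, where $\rho(\pm i)=0$ and $E$ becomes singular). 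And your fallback existence argument via the determinant formula (\ref{solu-n}) does not work as stated, since (\ref{solu-n}) only produces the scalar $q_k(x,t)$, not the matrix $M^{(k)}(\lambda;x,t)$; existence of the matrix solution must come either from the dressing construction itself or from a vanishing-lemma argument of the type the paper invokes later for Riemann--Hilbert Problem~3.
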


Then, the $k$th-order rational soliton solution of the c-mKdV equation (\ref{cmkdv}) with with finite density initial conditions (\ref{id}) can be given by the formula:
\bee\label{fanyan}
q(x,t)=q_k(x,t)=2i\lim_{\lambda\rightarrow\infty}\lambda M_{12}^{(k)}(\lambda; x,t),\quad k\in\mathbb{N}.
\ene

In particular, if $k=0$, then the solution of Riemann-Hilbert Problem 1 is
\begin{align}
\begin{aligned}
M^{(0)}(\lambda;x,t)
=\begin{cases}
E(\lambda),\quad \mathrm{if}~\lambda~\mathrm{exterior}~\mathrm{to}~\Sigma_0,\vspace{0.05in}\\
E(\lambda)e^{-i\rho(\lambda)(x+(4\lambda^2-2)t)\widehat\sigma_3}E(\lambda)^{-1},
\quad \mathrm{if}~\lambda~\mathrm{in}~\mathrm{the}~\mathrm{interior}~\mathrm{of}~\Sigma_0.
\end{cases}
\end{aligned}
\end{align}

Let
\begin{align}\label{E-hat}
\begin{aligned}
\widehat{E}(\lambda;x,t)&:=E(\lambda)e^{-i\rho(\lambda)(x+(4\lambda^2-2)t)\sigma_3}E(\lambda)^{-1}\vspace{0.05in}\\
&=\frac{(x+(4\lambda^2-2)t)\sin(\theta)}{\theta}\left[\!\!\begin{array}{cc}
-i\lambda& 1  \vspace{0.05in}\\
-1 & i\lambda
\end{array}\!\!\right]+\cos(\theta)\mathbb{I},
\end{aligned}
\end{align}
with $\theta=\rho(\lambda)(x+(4\lambda^2-2)t)$. Then according to the formula (\ref{fanyan}), we have
\bee\no
q_0(x,t)=2i\lim_{\lambda\rightarrow\infty}\lambda M_{12}^{(0)}(\lambda; x,t)=2i\lim_{\lambda\rightarrow\infty}\lambda E_{12}(\lambda)=1.
\ene

The Riemann-Hilbert Problem 1 without jump across $\Sigma_c$ is easily formulated. To this end, we will make the following transformation:
\bee\label{M-hat}
\widehat{M}^{(k)}(\lambda;x,t):=M^{(k)}(\lambda;x,t)M^{(0)}(\lambda;x,t)^{-1}.
\ene

\begin{prop}\label{prop2} Riemann-Hilbert Problem 2.\\
Let $(x,t)\in\mathbb{R}$ and $k\in\mathbb{N}$. Find a $2\times 2$ matrix $\widehat{M}^{(k)}(\lambda;x,t)$ defined by Eq.~(\ref{M-hat}) that satisfies the following properties:

\begin{itemize}

\item{} Analyticity: $\widehat{M}^{(k)}(\lambda; x,t)$ is analytic in $\{\lambda|\lambda\in\mathbb{C}\setminus\Sigma_0\}$ and takes
continuous boundary values on $\Sigma_0$ from both the exterior and interior.

\item{} Jump conditions: The boundary values on the jump contour $\Sigma_0$ are defined as:\\
if $k=2n,\,n\in\mathbb{N}$, then
\begin{align}
\begin{aligned}
\widehat{M}^{(k)}_+(\lambda; x,t)=&\widehat{M}^{(k)}_-(\lambda; x,t)\widehat{E}(\lambda; x,t)Q\left(\frac{\lambda-i}{\lambda+i}\right)^{n\sigma_3}Q^{-1}\widehat{E}(\lambda; x,t)^{-1},\quad \lambda\in\Sigma_0,
\end{aligned}
\end{align}
and if $k=2n-1,\, n\in\mathbb{N}_+$, then
\begin{align}
\begin{aligned}
\widehat{M}^{(k)}_+(\lambda; x,t)=&\widehat{M}^{(k)}_-(\lambda; x,t)\widehat{E}(\lambda; x,t)Q\left(\frac{\lambda+i}{\lambda-i}\right)^{n\sigma_3}Q^{-1} \widehat{E}(\lambda; x,t)^{-1},\quad \lambda\in\Sigma_0,
\end{aligned}
\end{align}
where the matrix $\widehat{E}(\lambda; x,t)$ is given by Eq.~(\ref{E-hat}).

\item{} Normalization: $\widehat{M}^{(k)}(\lambda; x,t)$ tends to the identity matrix  $\mathbb{I}$ as $\lambda\rightarrow\infty$.
\end{itemize}

\end{prop}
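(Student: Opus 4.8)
The plan is to verify directly that the matrix $\widehat{M}^{(k)}$ introduced in \eqref{M-hat} inherits, from the two solutions $M^{(k)}$ and $M^{(0)}$ of Riemann--Hilbert Problem~\ref{prop1}, exactly the three properties listed in the statement. Because $\det M^{(0)}\equiv1$, the factor $M^{(0)}$ is invertible throughout $\mathbb{C}\setminus(\Sigma_c\cup\Sigma_0)$, so the product $\widehat{M}^{(k)}=M^{(k)}(M^{(0)})^{-1}$ is automatically analytic there and satisfies $\det\widehat{M}^{(k)}\equiv1$. Hence the genuine content of the proposition is that the jump across the branch cut $\Sigma_c$ disappears and that the jump across $\Sigma_0$ collapses to the conjugated form displayed in the statement.

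First I would dispose of $\Sigma_c$. The decisive observation is that the $\Sigma_c$-block of the jump matrix $V^{(k)}$ in Proposition~\ref{prop1}, namely $e^{2i\rho_+(\lambda)(x+(4\lambda^2-2)t)\sigma_3}$, carries no dependence on $k$; in particular it coincides with the $\Sigma_c$-jump of $M^{(0)}$. Writing $V_c$ for this common matrix and using $M^{(k)}_+=M^{(k)}_-V_c$ together with $M^{(0)}_+=M^{(0)}_-V_c$, one obtains
\[
\widehat{M}^{(k)}_+=M^{(k)}_-V_cV_c^{-1}(M^{(0)}_-)^{-1}=M^{(k)}_-(M^{(0)}_-)^{-1}=\widehat{M}^{(k)}_-,
\]
so $\widehat{M}^{(k)}$ is continuous across $\Sigma_c$. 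Being bounded, analytic on either side and continuous up to the arc, it extends analytically across $\Sigma_c$ by Morera's theorem, which leaves $\Sigma_0$ as the only contour of nonanalyticity.

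Next I would compute the jump on $\Sigma_0$. Orienting $\Sigma_0$ so that the $+$ side is its exterior, the explicit $k=0$ solution gives $M^{(0)}_+=E(\lambda)$ and, after reading the $\widehat\sigma_3$-conjugation in the interior value and using the definition \eqref{E-hat}, $M^{(0)}_-=E(\lambda)e^{-i\rho(\lambda)(x+(4\lambda^2-2)t)\widehat\sigma_3}E(\lambda)^{-1}=\widehat{E}(\lambda;x,t)\,e^{i\vartheta\sigma_3}$, where $\vartheta=\rho(\lambda)(x+(4\lambda^2-2)t)$. Then
\[
(\widehat{M}^{(k)}_-)^{-1}\widehat{M}^{(k)}_+=M^{(0)}_-\,V^{(k)}|_{\Sigma_0}\,(M^{(0)}_+)^{-1},
\]
and substituting $V^{(k)}|_{\Sigma_0}=e^{-i\vartheta\sigma_3}Q(\tfrac{\lambda-i}{\lambda+i})^{n\sigma_3}Q^{-1}E(\lambda)e^{i\vartheta\sigma_3}$ for $k=2n$, the trailing $e^{i\vartheta\sigma_3}$ of $M^{(0)}_-$ cancels the leading $e^{-i\vartheta\sigma_3}$ of $V^{(k)}$, while the remaining $E(\lambda)e^{i\vartheta\sigma_3}E(\lambda)^{-1}$ reassembles as $\widehat{E}^{-1}$. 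What survives is precisely $\widehat{E}\,Q(\tfrac{\lambda-i}{\lambda+i})^{n\sigma_3}Q^{-1}\widehat{E}^{-1}$, and the case $k=2n-1$ is identical after replacing $\tfrac{\lambda-i}{\lambda+i}$ by $\tfrac{\lambda+i}{\lambda-i}$.

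Normalization is then immediate: since $w(\lambda)\to1$ and $\lambda-\rho(\lambda)\to0$ as $\lambda\to\infty$, one has $E(\lambda)\to\mathbb{I}$, hence $M^{(0)}\to\mathbb{I}$ in the exterior region, and $M^{(k)}\to\mathbb{I}$ by Proposition~\ref{prop1}, giving $\widehat{M}^{(k)}\to\mathbb{I}$. I expect the single delicate point to be the computation on $\Sigma_0$: one must pin down the orientation of the clockwise contour and interpret the $\widehat\sigma_3$-conjugation in the interior value of $M^{(0)}$ correctly, so that the phase factors $e^{\pm i\vartheta\sigma_3}$ telescope and the bare $E(\lambda)$ appearing inside $V^{(k)}$ is converted into the $\widehat{E}$-conjugation. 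Everything else—the determinant and invertibility, the cancellation on $\Sigma_c$, the analytic continuation across the cut, and the limit at infinity—is routine bookkeeping that follows directly from Proposition~\ref{prop1}.
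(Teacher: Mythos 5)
Your proposal is correct and follows essentially the same route as the paper's own proof: both verify that the ratio $\widehat{M}^{(k)}=M^{(k)}(M^{(0)})^{-1}$ has no jump on $\Sigma_c$ because the $\Sigma_c$-block of $V^{(k)}$ is independent of $k$, then conjugate the $\Sigma_0$-jump through the boundary values of $M^{(0)}$ so that the phase factors $e^{\pm i\rho(\lambda)(x+(4\lambda^2-2)t)\sigma_3}$ telescope and $E(\lambda)$ is absorbed into $\widehat{E}$, and finally read off the normalization from $M^{(0)},M^{(k)}\to\mathbb{I}$. The only cosmetic difference is that you substitute the explicit exterior/interior values $M^{(0)}_+=E(\lambda)$ and $M^{(0)}_-=\widehat{E}e^{i\vartheta\sigma_3}$, while the paper manipulates the $k=0$ jump relation abstractly; the computation is the same.
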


\begin{proof}
If $\lambda\in\Sigma_c$, it follows from Eq.~(\ref{M-hat}) that we have
\begin{align}
\begin{aligned}
\widehat{M}^{(k)}_+(\lambda;x,t)&=M^{(k)}_+(\lambda;,x,t)M^{(0)}_+(\lambda;,x,t)^{-1}\vspace{0.05in}\\
&=M^{(k)}_-(\lambda; x,t)e^{2i\rho_+(\lambda)(x+(4\lambda^2-2)t)\sigma_3}\l(M^{(0)}_-(\lambda; x,t)e^{2i\rho_+(\lambda)(x+(4\lambda^2-2)t)\sigma_3}\r)^{-1}\vspace{0.05in}\\
&=M^{(k)}_-(\lambda; x,t)M^{(0)}_-(\lambda; x,t)^{-1}\vspace{0.05in}\\
&=\widehat{M}^{(k)}_-(\lambda;x,t)\mathbb{I}.
\end{aligned}
\end{align}
This indicates that $\widehat{M}^{(k)}(\lambda;x,t)$ has no jump across $\Sigma_c$.
According to the Riemann-Hilbert Problem 1 and Eq.~(\ref{M-hat}), we know that $\widehat{M}^{(k)}(\lambda; x,t)$ is analytic in $\{\lambda|\lambda\in\mathbb{C}\setminus\Sigma_0\}$ and takes
continuous boundary values on $\Sigma_0$ from both the exterior and interior.

If $\lambda\in\Sigma_0,k=2n,n\in\mathbb{N}$, we have
\begin{align}
\begin{aligned}
\widehat{M}^{(k)}_+(\lambda;x,t)=&M^{(k)}_+(\lambda;,x,t)M^{(0)}_+(\lambda;,x,t)^{-1}\vspace{0.05in}\\
=&M^{(k)}_-(\lambda; x,t)M^{(0)}_-(\lambda; x,t)^{-1}M^{(0)}_-(\lambda; x,t)e^{-i\rho(\lambda)(x+(4\lambda^2-2)t)\sigma_3}\vspace{0.05in}\\
&\times Q \left(\frac{\lambda-i}{\lambda+i}\right)^{n\sigma_3}Q^{-1}E(\lambda)e^{i\rho(\lambda)(x+(4\lambda^2-2)t)\sigma_3}\vspace{0.05in}\\
&\times \bigg(M^{(0)}_-(\lambda; x,t)e^{-i\rho(\lambda)(x+(4\lambda^2-2)t)\sigma_3}E(\lambda)e^{i\rho(\lambda)(x+(4\lambda^2-2)t)\sigma_3}\bigg)^{-1}\vspace{0.05in}\\
=&\widehat{M}^{(k)}_-(\lambda; x,t)\widehat{E}(\lambda; x,t)Q\left(\frac{\lambda-i}{\lambda+i}\right)^{n\sigma_3}Q^{-1}\widehat{E}(\lambda; x,t)^{-1}.
\end{aligned}
\end{align}

If $\lambda\in\Sigma_0,k=2n-1,n\in\mathbb{N}_+$, then we have
\begin{align}
\begin{aligned}
\widehat{M}^{(k)}_+(\lambda;x,t)=&M^{(k)}_+(\lambda;,x,t)M^{(0)}_+(\lambda;,x,t)^{-1}\vspace{0.05in}\\
=&M^{(k)}_-(\lambda; x,t)M^{(0)}_-(\lambda; x,t)^{-1}M^{(0)}_-(\lambda; x,t)e^{-i\rho(\lambda)(x+(4\lambda^2-2)t)\sigma_3}\vspace{0.05in}\\
&\times Q \left(\frac{\lambda+i}{\lambda-i}\right)^{n\sigma_3}Q^{-1}E(\lambda)e^{i\rho(\lambda)(x+(4\lambda^2-2)t)\sigma_3}\vspace{0.05in}\\
&\times \bigg(M^{(0)}_-(\lambda; x,t)e^{-i\rho(\lambda)(x+(4\lambda^2-2)t)\sigma_3}E(\lambda)e^{i\rho(\lambda)(x+(4\lambda^2-2)t)\sigma_3}\bigg)^{-1}\vspace{0.05in}\\
=&\widehat{M}^{(k)}_-(\lambda; x,t)\widehat{E}(\lambda; x,t)Q\left(\frac{\lambda+i}{\lambda-i}\right)^{n\sigma_3}Q^{-1}\widehat{E}(\lambda; x,t)^{-1}.
\end{aligned}
\end{align}

Finally, according to Eq.~(\ref{M-hat}), we know that
$\widehat{M}^{(k)}(\lambda; x,t)\to \mathbb{I}$ as $\lambda\rightarrow\infty$.
Thus the proof is completed.
\end{proof}

\begin{prop}\label{prop3} The solution of Eq.~(\ref{cmkdv}) with finite density initial conditions (\ref{id}) can be expressed as:
\bee\label{qk-1}
q_k(x,t)=1+2i\lim_{\lambda\rightarrow\infty}\lambda \widehat{M}_{12}^{(k)}(\lambda; x,t),\quad k\in\mathbb{N},
\ene
where $\widehat{M}^{(k)}(\lambda; x,t)$ satisfies the Riemann-Hilbert Problem 2.
\end{prop}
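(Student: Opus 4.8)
The plan is to start from the recovery formula (\ref{fanyan}) furnished by Riemann--Hilbert Problem 1, namely $q_k(x,t)=2i\lim_{\lambda\to\infty}\lambda M^{(k)}_{12}(\lambda;x,t)$, and to track precisely what the defining transformation (\ref{M-hat}) does to the coefficient of $\lambda^{-1}$ in the $(1,2)$ entry at infinity. Inverting (\ref{M-hat}) gives $M^{(k)}=\widehat{M}^{(k)}M^{(0)}$, and since for $\lambda$ exterior to $\Sigma_0$ one has $M^{(0)}(\lambda;x,t)=E(\lambda)$, the identity $M^{(k)}(\lambda;x,t)=\widehat{M}^{(k)}(\lambda;x,t)E(\lambda)$ holds throughout a neighbourhood of $\lambda=\infty$. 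Thus the whole statement reduces to comparing the subleading coefficients of $M^{(k)}$ and of $\widehat{M}^{(k)}$, with the factor $E(\lambda)$ accounting for the discrepancy, i.e.\ for the additive constant $1$ in (\ref{qk-1}).

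Concretely, I would write the two large-$\lambda$ expansions $\widehat{M}^{(k)}(\lambda;x,t)=\mathbb{I}+\lambda^{-1}\widehat{M}_1(x,t)+\mathcal{O}(\lambda^{-2})$, which is legitimate by the analyticity of $\widehat{M}^{(k)}$ exterior to $\Sigma_0$ together with its normalization to $\mathbb{I}$ in RHP 2, and $E(\lambda)=\mathbb{I}+\lambda^{-1}E_1+\mathcal{O}(\lambda^{-2})$, and then extract the subleading term of the product. The only computational ingredient is the expansion of $E(\lambda)$: from $\rho^2=\lambda^2+1$ with the branch $\rho(\lambda)=\lambda+\mathcal{O}(\lambda^{-1})$ one gets $\rho(\lambda)=\lambda+\tfrac12\lambda^{-1}+\mathcal{O}(\lambda^{-3})$, whence $i(\lambda-\rho(\lambda))=-\tfrac{i}{2}\lambda^{-1}+\mathcal{O}(\lambda^{-3})$, while $w(\lambda)=1+\mathcal{O}(\lambda^{-2})$. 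Substituting into the defining formula for $E(\lambda)$ yields $E_1=\left(\begin{smallmatrix}0 & -i/2\\ -i/2 & 0\end{smallmatrix}\right)$; in particular $(E_1)_{12}=-\tfrac{i}{2}$. Multiplying the two expansions gives $(M^{(k)})_{12}=\lambda^{-1}\bigl[(\widehat{M}_1)_{12}+(E_1)_{12}\bigr]+\mathcal{O}(\lambda^{-2})=\lambda^{-1}\bigl[(\widehat{M}_1)_{12}-\tfrac{i}{2}\bigr]+\cdots$, and applying $2i\lim_{\lambda\to\infty}\lambda(\cdot)$ produces $q_k=2i(\widehat{M}_1)_{12}+2i\cdot(-\tfrac{i}{2})=1+2i\lim_{\lambda\to\infty}\lambda\widehat{M}^{(k)}_{12}$, which is exactly (\ref{qk-1}).

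The one point demanding care is the asymptotic analysis of $E(\lambda)$ near $\lambda=\infty$: one must fix the branches of $\rho$ and of $w$ so that $\rho(\lambda)=\lambda+\mathcal{O}(\lambda^{-1})$ and $w(\lambda)\to1$ as specified in the excerpt, since these choices pin down the sign of $(E_1)_{12}$ and hence the exact constant $+1$ rather than some other value. As a consistency check I would note that the case $k=0$ gives $\widehat{M}^{(0)}=M^{(0)}(M^{(0)})^{-1}=\mathbb{I}$, so $\widehat{M}^{(0)}_{12}\equiv 0$ and the formula returns $q_0=1$, in agreement with the background value computed earlier from $M^{(0)}=E(\lambda)$. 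This confirms that the transformation (\ref{M-hat}) merely shifts the reconstruction formula by the background contribution carried in $E_1$, which is the content of the proposition.
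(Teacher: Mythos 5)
Your proposal is correct and follows essentially the same route as the paper: both start from the reconstruction formula (\ref{fanyan}), write $M^{(k)}=\widehat{M}^{(k)}M^{(0)}$ with $M^{(0)}=E(\lambda)$ near infinity, and read off the subleading coefficient of the $(1,2)$ entry of the product, the constant $1$ coming from $2i\lim_{\lambda\to\infty}\lambda E_{12}(\lambda)=1$. Your explicit computation of $E_1$ via the branch choices for $\rho$ and $w$ just makes transparent what the paper uses implicitly from its earlier evaluation of $q_0$.
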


\begin{proof}
According to Eq.~(\ref{fanyan}), we obtain
\begin{align}
\begin{aligned}
q_k(x,t)&=2i\lim_{\lambda\rightarrow\infty}\lambda M_{12}^{(k)}(\lambda; x,t),\quad k\in\mathbb{N}\vspace{0.05in}\\
&=2i\lim_{\lambda\rightarrow\infty}\lambda \left(\widehat{M}^{(k)}(\lambda; x,t)M^{(0)}(\lambda;,x,t)\right)_{12}\vspace{0.05in}\\
&=2i\lim_{\lambda\rightarrow\infty}\lambda \left(\widehat{M}_{11}^{(k)}(\lambda; x,t)M^{(0)}_{12}(\lambda;,x,t)+
\widehat{M}_{12}^{(k)}(\lambda; x,t)M^{(0)}_{22}(\lambda;,x,t)\right)\vspace{0.05in}\\
&=1+2i\lim_{\lambda\rightarrow\infty}\lambda \widehat{M}_{12}^{(k)}(\lambda; x,t).
\end{aligned}
\end{align}
Thus the proof is completed.
\end{proof}

Now we consider the near-field asymptotics of rational solitons of the c-mKdV equation (\ref{cmkdv}) with finite density initial conditions  (\ref{id}). We will perform the following scale transforms for the Riemann-Hilbert Problem 2:
\bee\label{trans}
X:=nx,\qquad T:=n^3t,\qquad \Lambda:=\frac{\lambda}{n}.
\ene
We choose a radius of $n$ for the contour $\Sigma_0$, then the jump matrix possesses the following asymptotic behavior:
\begin{align}
\begin{aligned}
&\widehat{E}(\lambda; x,t)Q\left(\dfrac{\lambda-i}{\lambda+i}\right)^{\pm n\sigma_3}Q^{-1}\widehat{E}(\lambda; x,t)^{-1}\vspace{0.05in}\\
=~&E(\lambda)e^{-i\rho(\lambda)(x+(4\lambda^2-2)t)\sigma_3}E(\lambda)^{-1}Q
\left(\dfrac{\lambda-i}{\lambda+i}\right)^{\pm n\sigma_3}Q^{-1}
E(\lambda)e^{i\rho(\lambda)(x+(4\lambda^2-2)t)\sigma_3}E(\lambda)^{-1}\vspace{0.05in}\\
=~&\left(\mathbb{I}+\mathcal{O}\left(\frac{1}{n}\right)\right)e^{-i(\Lambda X+4\Lambda^3T)\sigma_3}Qe^{\mp2i\Lambda^{-1}\sigma_3}Q^{-1}e^{i(\Lambda X+4\Lambda^3T)\sigma_3},\quad n\to \infty,
\end{aligned}
\end{align}
which holds uniformly for $|\Lambda|=1$ and $(X,T)$ in compact subsets of $\mathbb{R}^2$.

\begin{prop}\label{prop4} Riemann-Hilbert Problem 3.\\
Let $(X,T)$ be in compact subsets of $\mathbb{R}^2$. Find a $2\times 2$ matrix $N^{\pm}(\Lambda;X,T)$ that satisfies the following properties:

\begin{itemize}
    \item{} Analyticity: $N^{\pm}(\Lambda; X,T)$ is analytic in $\mathbb{C}\setminus\Sigma_1:=\{\Lambda|~|\Lambda|\neq1,\Lambda\in\mathbb{C}\}$ and takes
continuous boundary values on $\Sigma_1:=\{\Lambda|~|\Lambda|=1,\Lambda\in\mathbb{C}\}$.

\item{} Jump condition: Assuming clockwise orientation of $\Sigma_1$, the boundary values on the jump contour $\Sigma_1$ are related as:
\begin{align}
\begin{aligned}
N^{\pm}_+(\Lambda; X,T)=&N^{\pm}_-(\Lambda; X,T)e^{-i(\Lambda X+4\Lambda^3T)\sigma_3}Qe^{\mp2i\Lambda^{-1}\sigma_3}Q^{-1}e^{i(\Lambda X+4\Lambda^3T)\sigma_3},\quad \Lambda\in\Sigma_1,
\end{aligned}
\end{align}

\item{} Normalization: $N^{\pm}(\Lambda; X,T)$ tends to the identity matrix $\mathbb{I}$ as $\Lambda\rightarrow\infty$.
\end{itemize}
\end{prop}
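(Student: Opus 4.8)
The plan is to obtain Riemann-Hilbert Problem $3$ as the $n\to\infty$ limit of the scale-transformed Riemann-Hilbert Problem $2$, so that the proof amounts to tracking each factor of the jump matrix of Proposition \ref{prop2} under the substitution \eqref{trans} with the contour $\Sigma_0$ taken to be the circle of radius $n$. First I would record the purely geometric facts: under $\Lambda=\lambda/n$ the circle $|\lambda|=n$ is carried exactly onto $\Sigma_1=\{|\Lambda|=1\}$ with the clockwise orientation preserved, the exterior and interior of $\Sigma_0$ map to $|\Lambda|>1$ and $|\Lambda|<1$ respectively, and the normalization $\widehat M^{(k)}\to\mathbb I$ as $\lambda\to\infty$ becomes $N^{\pm}\to\mathbb I$ as $\Lambda\to\infty$. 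These are immediate once the substitution is in place, and they account for the analyticity and normalization clauses of the statement.

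The heart of the argument is the asymptotic analysis of the three factors in the jump. With $\lambda=n\Lambda$, $x=X/n$, $t=T/n^3$ I would expand $\rho(n\Lambda)=\sqrt{n^2\Lambda^2+1}=n\Lambda+\tfrac{1}{2n\Lambda}+\mathcal O(n^{-3})$ and multiply by $x+(4\lambda^2-2)t=(X+4\Lambda^2T)/n-2T/n^3$, so that the large factor $n\Lambda$ and the small factor $\mathcal O(1/n)$ combine to give $\rho(\lambda)\big(x+(4\lambda^2-2)t\big)=\Lambda X+4\Lambda^3T+\mathcal O(1/n)$, uniformly for $|\Lambda|=1$ and $(X,T)$ in a compact set. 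Next I would treat the Blaschke-type factor $\big(\tfrac{\lambda-i}{\lambda+i}\big)^{\pm n}=\big(1-\tfrac{2i}{n\Lambda+i}\big)^{\pm n}$, a $1^{\infty}$ indeterminate form whose limit $e^{\mp 2i\Lambda^{-1}}$ follows by taking logarithms, whence $Q\big(\tfrac{\lambda-i}{\lambda+i}\big)^{\pm n\sigma_3}Q^{-1}\to Qe^{\mp 2i\Lambda^{-1}\sigma_3}Q^{-1}$. Finally, since $w(\lambda)\to 1$ and $\lambda-\rho(\lambda)\to 0$ as $\lambda\to\infty$, I would show $E(n\Lambda)=\mathbb I+\mathcal O(1/n)$ on $|\Lambda|=1$, so that the conjugating factor $\widehat E(\lambda;x,t)=E(\lambda)e^{-i\rho\sigma_3}E(\lambda)^{-1}$ reduces to $e^{-i(\Lambda X+4\Lambda^3T)\sigma_3}+\mathcal O(1/n)$, the exponent remaining bounded precisely because of the cancellation noted above.

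Assembling the three estimates, the jump matrix of the rescaled Problem $2$ equals $\big(\mathbb I+\mathcal O(1/n)\big)\,e^{-i(\Lambda X+4\Lambda^3T)\sigma_3}Qe^{\mp 2i\Lambda^{-1}\sigma_3}Q^{-1}e^{i(\Lambda X+4\Lambda^3T)\sigma_3}$ uniformly on $\Sigma_1$ for compact $(X,T)$, which is exactly the jump condition of Problem $3$ up to the stated error; this is the display immediately preceding the statement. I would close by observing that the limiting jump matrix $V^{\pm}(\Lambda;X,T)$ is well defined and unimodular, since $\det Q=1$ and both exponential factors are generated by trace-free matrices, so Problem $3$ is a genuine normalized Riemann-Hilbert problem posed on the single closed contour $\Sigma_1$. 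Its solvability and the unit-determinant property of $N^{\pm}$ are then the content of Theorem \ref{prop5}, which follows from this uniform convergence together with standard small-norm perturbation theory for Riemann-Hilbert problems.

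The main obstacle I anticipate is controlling the phase $\rho(\lambda)\big(x+(4\lambda^2-2)t\big)$: it is the product of a factor growing like $n$ and a factor decaying like $1/n$, so a naive bound yields only $\mathcal O(1)$, and one must exploit the exact cancellation of the leading terms to secure both the finite limit $\Lambda X+4\Lambda^3T$ and the genuinely uniform $\mathcal O(1/n)$ error. The accompanying $1^{\infty}$ limit of the Blaschke factor demands the same care. Keeping all of these estimates uniform simultaneously in $\Lambda\in\Sigma_1$ and in $(X,T)$ over compact subsets of $\mathbb R^2$ is what makes the limit rigorous rather than merely formal.
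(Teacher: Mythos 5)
Your proposal is correct and follows essentially the same route as the paper: the statement of Riemann--Hilbert Problem 3 is justified there by exactly the displayed asymptotic expansion of the rescaled jump matrix $\widehat{E}(\lambda;x,t)Q\big(\tfrac{\lambda-i}{\lambda+i}\big)^{\pm n\sigma_3}Q^{-1}\widehat{E}(\lambda;x,t)^{-1}$ into $\big(\mathbb{I}+\mathcal{O}(1/n)\big)e^{-i(\Lambda X+4\Lambda^3T)\sigma_3}Qe^{\mp2i\Lambda^{-1}\sigma_3}Q^{-1}e^{i(\Lambda X+4\Lambda^3T)\sigma_3}$, which is precisely what your three factor-by-factor estimates (the cancellation in $\rho(n\Lambda)\big(x+(4\lambda^2-2)t\big)$, the $1^{\infty}$ limit of the Blaschke factor, and $E(n\Lambda)=\mathbb{I}+\mathcal{O}(1/n)$) establish. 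You also correctly defer existence, uniqueness, and unimodularity of $N^{\pm}$ to Theorem \ref{prop5}, where the paper invokes Zhou's vanishing lemma and small-norm theory.
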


Then, we will solve the Riemann-Hilbert Problem 3 and provide the proof of Theorem \ref{prop5}.

\begin{proof}
To show solvability and uniqueness, we will prove that the jump matrices and the jump conditions in Riemann-Hilbert problem 3 satisfy the hypotheses of Zhou's vanishing lemma~\cite{zhou1989-siam}. For this reason, we reorient the jump contour $\Sigma_1$ to the upper half-plane in a clockwise orientation and the lower half-plane in a counterclockwise orientation. Then, we define the following piecewise function:
\begin{align}
\begin{aligned}
&V^{\pm}(\Lambda;X,T)=\begin{cases}
e^{-i(\Lambda X+4\Lambda^3T)\sigma_3}Qe^{\mp2i\Lambda^{-1}\sigma_3}Q^{-1}e^{i(\Lambda X+4\Lambda^3T)\sigma_3},\quad |\Lambda|=1,\,\,\mathrm{Im}(\Lambda)>0,\vspace{0.05in}\\
e^{-i(\Lambda X+4\Lambda^3T)\sigma_3}Qe^{\pm2i\Lambda^{-1}\sigma_3}Q^{-1}e^{i(\Lambda X+4\Lambda^3T)\sigma_3},\quad |\Lambda|=1,\,\,\mathrm{Im}(\Lambda)<0.
\end{cases}
\end{aligned}
\end{align}

For $|\Lambda|=1,\mathrm{Im}(\Lambda)>0$, since $Q^{-1}=Q^{T}$, thus we have
\begin{align}
\begin{aligned}
V^{\pm}(\Lambda;X,T)&=e^{-i(\Lambda X+4\Lambda^3T)\sigma_3}Qe^{\mp2i\Lambda^{-1}\sigma_3}Q^{-1}e^{i(\Lambda X+4\Lambda^3T)\sigma_3}\vspace{0.05in}\\
&=\left(e^{i(\Lambda^* X+4\Lambda^{*3}T)\sigma_3}Qe^{\pm2i\Lambda^{*-1}\sigma_3}Q^{-1}e^{-i(\Lambda^* X+4\Lambda^{*3}T)\sigma_3}\right)^*\vspace{0.05in}\\
&=\left(e^{-i(\Lambda^* X+4\Lambda^{*3}T)\sigma_3}Qe^{\pm2i\Lambda^{*-1}\sigma_3}Q^{-1}e^{i(\Lambda^* X+4\Lambda^{*3}T)\sigma_3}\right)^{\dag}\vspace{0.05in}\\
&=\l(V^{\pm}(\Lambda^*;X,T)\r)^{\dag},
\end{aligned}
\end{align}
where the superscript $\dag$ denotes the conjugate transpose of the matrix. Considering the normalization condition $N^{\pm}(\Lambda;X,T)\rightarrow\mathbb{I}$ as $\Lambda\rightarrow\infty$, we have satisfied all the hypotheses of the vanishing lemma. Consequently, the Riemann-Hilbert problem 3 is uniquely solvable. The matrix $N^{\pm}(\Lambda;X,T)$ has the following Laurent expansion.
\bee\label{N}
N^{\pm}(\Lambda;X,T)=\mathbb{I}+\sum_{j=1}^{\infty}N^{\pm[j]}(X,T)\Lambda^{-j},\quad |\Lambda|>1,
\ene
and analytic Fredholm theory implies that each $N^{\pm[j]}(X,T)$ is real analytic on $\mathbb{R}^2$. We will show that $\widehat{q}^{\pm}(X,T)$ is a global solution of the complex modified Korteweg-de Vries equation. To this end, we define
\bee\no
P^{\pm}(\Lambda;X,T):=N^{\pm}(\Lambda;X,T)e^{-i(\Lambda X+4\Lambda^3T)\sigma_3}.
\ene
Then, $P^{\pm}(\Lambda;X,T)$ satisfies the following Riemann-Hilbert problem:

\begin{itemize}
\item{} Analyticity: $P^{\pm}(\lambda; X,T)$ is analytic in $\mathbb{C}\setminus\Sigma_1$ and takes continuous boundary values on $\Sigma_1$.

\item{} Jump conditions: Assuming clockwise orientation of $\Sigma_1$, the boundary values on the jump contour $\Sigma_1$ are related as:
\bee
P^{\pm}_+(\Lambda; X,T)=P^{\pm}_-(\Lambda; X,T)Qe^{\mp2i\Lambda^{-1}\sigma_3}Q^{-1},\quad \Lambda\in\Sigma_1,
\ene

\item{} Normalization: $P^{\pm}(\Lambda; X,T)$ tends to $e^{-i(\Lambda X+4\Lambda^3T)}$ as $\Lambda\rightarrow\infty$.
\end{itemize}

We define two matrices:
\begin{align}\label{N-lax0}
\left\{\begin{aligned}
&A^{\pm}(\Lambda; X,T):=P^{\pm}_X(\Lambda; X,T)P^{\pm}(\Lambda; X,T)^{-1},\vspace{0.05in}\\
&B^{\pm}(\Lambda; X,T):=P^{\pm}_T(\Lambda; X,T)P^{\pm}(\Lambda; X,T)^{-1}.
\end{aligned}\right.
\end{align}

They can be defined by continuity for $\Sigma_1$ and they are entire functions of $\Lambda$. Firstly, we provide the expression for $N^{\pm}(\Lambda;X,T)^{-1}$.
\begin{align}\label{N-1}
\begin{aligned}
N^{\pm}(\Lambda;X,T)^{-1}=\,&\mathbb{I}-N^{\pm[1]}(X,T)\Lambda^{-1}+(N^{\pm[1]}(X,T)^2-N^{\pm[2]}(X,T))\Lambda^{-2}\vspace{0.05in}\\
&+(N^{\pm[1]}(X,T)N^{\pm[2]}(X,T)+N^{\pm[2]}(X,T)N^{\pm[1]}(X,T)\vspace{0.05in}\\
&-N^{\pm[3]}(X,T)-N^{\pm[1]}(X,T)^3)\Lambda^{-3}+\mathcal{O}(\Lambda^{-4}), \quad \Lambda\rightarrow\infty.
\end{aligned}
\end{align}

According to Eqs.~(\ref{N}), (\ref{N-lax0}) and (\ref{N-1}), we have
\bee\label{A}
A^{\pm}(\Lambda;X,T)=-i\Lambda\sigma_3+i[\sigma_3,N^{\pm[1]}(X,T)]+\mathcal{O}(\Lambda^{-1}),\quad \Lambda\rightarrow\infty,
\ene
and
\begin{align}\label{B}
\begin{aligned}
B^{\pm}(\Lambda;X,T)=&-4i\Lambda^3\sigma_3+4i\Lambda^2[\sigma_3,N^{\pm[1]}(X,T)]+4i\Lambda\bigg([\sigma_3.N^{\pm[2]}(X,T)]\vspace{0.05in}\\
&+[N^{\pm[1]}(X,T),\sigma_3N^{\pm[1]}(X,T)])+4i\bigg([\sigma_3,N^{\pm[3]}(X,T)]+[N^{\pm[2]}(X,T),\sigma_3N^{\pm[1]}(X,T)]\vspace{0.05in}\\
&+[\sigma_3N^{\pm[1]}(X,T)^2,N^{\pm[1]}(X,T)]+[N^{\pm[1]}(X,T),\sigma_3N^{\pm[2]}(X,T)]\bigg)+\mathcal{O}(\Lambda^{-1}),\quad \Lambda\rightarrow\infty.
\end{aligned}
\end{align}
Using Liouville's theorem, we get
\begin{align}\label{AB1}
\begin{aligned}
A^{\pm}(\Lambda;X,T)=&-i\Lambda\sigma_3+i[\sigma_3,N^{\pm[1]}(X,T)],\vspace{0.05in}\\
B^{\pm}(\Lambda;X,T)=&-4i\Lambda^3\sigma_3+4i\Lambda^2[\sigma_3,N^{\pm[1]}(X,T)]+4i\Lambda\bigg([\sigma_3,N^{\pm[2]}(X,T)]\vspace{0.05in}\\
&+[N^{\pm[1]}(X,T),\sigma_3N^{\pm[1]}(X,T)]\bigg)+4i\bigg([\sigma_3,N^{\pm[3]}(X,T)]+[N^{\pm[2]}(X,T),\sigma_3N^{\pm[1]}(X,T)]\vspace{0.05in}\\
&+[\sigma_3N^{\pm[1]}(X,T)^2,N^{\pm[1]}(X,T)]+[N^{\pm[1]}(X,T),\sigma_3N^{\pm[2]}(X,T)]\bigg).
\end{aligned}
\end{align}

To eliminate matrices $N^{\pm[3]}(X,T)])$, we provide the error terms for $\Lambda^{-1}$ and $\Lambda^{-2}$ in Eq.~(\ref{A}).
\bee\label{Lambda-1}
P_{\Lambda^{-1}}:\quad &i[\sigma_3,N^{\pm[2]}(X,T)]+i[N^{\pm[1]}(X,T),\sigma_3N^{\pm[1]}(X,T)]+N_X^{\pm[1]}(X,T),
\ene
and
\begin{align}\label{Lambda-2}
\begin{aligned}
P_{\Lambda^{-2}}:\quad &i[\sigma_3,N^{\pm[3]}(X,T)]+i[N^{\pm[2]}(X,T),\sigma_3N^{\pm[1]}(X,T)]+i[\sigma_3N^{\pm[1]}(X,T)^2,N^{\pm[1]}(X,T)]\vspace{0.05in}\\
&+i[N^{\pm[1]}(X,T),\sigma_3N^{\pm[2]}(X,T)]+N_X^{\pm[2]}(X,T)-N_X^{\pm[1]}(X,T)N^{\pm[1]}(X,T).
\end{aligned}
\end{align}

Because of the existence and uniqueness of the solution to the Riemann-Hilbert problem 3, we have the following symmetric relation:
\bee\label{sym}
N^{\pm}(\Lambda;X,T)=\sigma_2N^{\pm}(\Lambda^*;X,T)^*\sigma_2.
\ene

Note that $\widehat{q}^{\pm}(X,T)=2iN^{\pm[1]}_{12}(X,T)$ and solving $\{P_{\Lambda^{-1},11}=0,\, P_{\Lambda^{-1},22}=0\}$, we obtain
\begin{align}\label{N-lax}
\left\{\begin{aligned}
&N^{\pm[1]}_{11,X}(X,T)=-\frac{i}{2}|\widehat{q}^{\pm}(X,T)|^2,\vspace{0.05in}\\
&N^{\pm[1]}_{22,X}(X,T)=\frac{i}{2}|\widehat{q}^{\pm}(X,T)|^2.
\end{aligned}\right.
\end{align}

Using Eqs.~(\ref{B}),(\ref{Lambda-1}) and(\ref{N-lax}), we have
\begin{align}\label{B-new1}
\begin{aligned}
B^{\pm}(\Lambda;X,T)=&-4i\Lambda^3\sigma_3+4i\Lambda^2[\sigma_3,N^{\pm[1]}(X,T)]-4\Lambda N^{\pm[1]}_X(X,T)\vspace{0.05in}\\
&+4i\bigg([\sigma_3,N^{\pm[3]}(X,T)]+[N^{\pm[2]}(X,T),\sigma_3N^{\pm[1]}(X,T)]\vspace{0.05in}\\
&+[\sigma_3N^{\pm[1]}(X,T)^2,N^{\pm[1]}(X,T)]+[N^{\pm[1]}(X,T),\sigma_3N^{\pm[2]}(X,T)\bigg).
\end{aligned}
\end{align}

To solve $\{P_{\Lambda^{-1},12}=0,\, P_{\Lambda^{-1},21}=0\}$, we have
\begin{align}\label{N2}
\begin{aligned}
&N^{\pm[2]}_{12}(X,T)=N^{\pm[1]}_{12}(X,T)N^{\pm[1]}_{22}(X,T)+\frac{i}{2}N^{\pm[1]}_{12,X}(X,T),\vspace{0.05in}\\
&N^{\pm[2]}_{21}(X,T)=N^{\pm[1]}_{21}(X,T)N^{\pm[1]}_{11}(X,T)-\frac{i}{2}N^{\pm[1]}_{21,X}(X,T).
\end{aligned}
\end{align}

Let
\begin{align}\no
\begin{aligned}
M^0:=&[\sigma_3,N^{\pm[3]}(X,T)]+[N^{\pm[2]}(X,T),\sigma_3N^{\pm[1]}(X,T)]+[\sigma_3N^{\pm[1]}(X,T)^2,N^{\pm[1]}(X,T)]\vspace{0.05in}\\
&+[N^{\pm[1]}(X,T),\sigma_3N^{\pm[2]}(X,T)].
\end{aligned}
\end{align}

Extracting the elements of matrix $M^0$ at the diagonal, we have
\begin{align}\label{M-diag}
\begin{aligned}
M^0_{11}=&2N^{\pm[1]}_{12}(X,T)(N^{\pm[1]}_{21}(X,T)N^{\pm[1]}_{11}(X,T)+N^{\pm[1]}_{22}(X,T)N^{\pm[1]}_{21}(X,T))\vspace{0.05in}\\
&-N^{\pm[2]}_{12}(X,T)N^{\pm[1]}_{21}(X,T)-N^{\pm[1]}_{12}(X,T)N^{\pm[2]}_{21}(X,T)\vspace{0.05in}\\
&-N^{\pm[1]}_{12}(X,T)N^{\pm[2]}_{21}(X,T)-N^{\pm[2]}_{12}(X,T)N^{\pm[1]}_{21}(X,T),\vspace{0.05in}\\
M^0_{22}=&-2N^{\pm[1]}_{21}(X,T)(N^{\pm[1]}_{11}(X,T)N^{\pm[1]}_{12}(X,T)+N^{\pm[1]}_{12}(X,T)N^{\pm[1]}_{22}(X,T))\vspace{0.05in}\\
&+N^{\pm[2]}_{21}(X,T)N^{\pm[1]}_{12}(X,T)+N^{\pm[1]}_{21}(X,T)N^{\pm[2]}_{12}(X,T)\vspace{0.05in}\\
&+N^{\pm[1]}_{21}(X,T)N^{\pm[2]}_{12}(X,T)+N^{\pm[2]}_{21}(X,T)N^{\pm[1]}_{12}(X,T).
\end{aligned}
\end{align}

Substituting Eqs.~(\ref{N-lax}) and (\ref{N2}) into Eq.~(\ref{M-diag}), we obtain
\begin{align}\label{M-diag-1}
\begin{aligned}
&M^0_{11}=-\frac{i}{4}\widehat{q}^{\pm*}_X\widehat{q}^{\pm}
+\frac{i}{4}\widehat{q}^{\pm}_X\widehat{q}^{\pm*},\vspace{0.05in}\\
&M^0_{22}=\frac{i}{4}\widehat{q}^{\pm*}_X\widehat{q}^{\pm}-\frac{i}{4}\widehat{q}^{\pm}_X\widehat{q}^{\pm*}.
\end{aligned}
\end{align}

Solving $\{P_{\Lambda^{-2},12}=0,\,P_{\Lambda^{-2},21}=0\}$, we obtain
\begin{align}\label{M-offdiag}
\begin{aligned}
&M^0_{12}=-iN^{\pm[1]}_{11,X}(X,T)N^{\pm[1]}_{12}(X,T)-iN^{\pm[1]}_{12,X}(X,T)N^{\pm[1]}_{22}(X,T)+iN^{\pm[2]}_{12,X}(X,T),\vspace{0.05in}\\
&M^0_{21}=-iN^{\pm[1]}_{21,X}(X,T)N^{\pm[1]}_{11}(X,T)-iN^{\pm[1]}_{22,X}(X,T)N^{\pm[1]}_{21}(X,T)+iN^{\pm[2]}_{21,X}(X,T).
\end{aligned}
\end{align}

Substituting Eqs.~(\ref{N-lax}) and (\ref{N2}) into Eq.~(\ref{M-offdiag}), we obtain
\begin{align}\label{M-offdiag-1}
\begin{aligned}
&M^0_{12}=\frac{i}{4}\widehat{q}^{\pm}_{XX}+\frac{i}{2}|\widehat{q}^{\pm}|^2\widehat{q}^{\pm},\vspace{0.05in}\\
&M^0_{21}=-\frac{i}{4}\widehat{q}^{\pm*}_{XX}-\frac{i}{2}|\widehat{q}^{\pm}|^2\widehat{q}^{\pm*}.
\end{aligned}
\end{align}

Substituting Eqs.~(\ref{N-lax}), (\ref{N2}), (\ref{M-diag-1}), and (\ref{M-offdiag-1}) into Eq.~(\ref{AB1}), we have
\begin{align}\label{N-lax-1}
\begin{aligned}
&A^{\pm}(\Lambda;X,T)=\left[\!\!\begin{array}{cc}
-i\Lambda& \widehat{q}^{\pm}(X,T)  \vspace{0.05in}\\
-\widehat{q}^{\pm}(X,T)^* & i\Lambda
\end{array}\!\!\right],\vspace{0.05in}\\
&B^{\pm}(\Lambda;X,T)=\left[\!\!\begin{array}{cc}
-4i\Lambda^3+2i\Lambda|\widehat{q}^{\pm}|^2+\widehat{q}_X^{\pm*}\widehat{q}^{\pm}-\widehat{q}^{\pm}_X\widehat{q}^{\pm*}& 4\Lambda^2\widehat{q}^{\pm}+2i\Lambda \widehat{q}^{\pm}_X-\widehat{q}^{\pm}_{XX}-2|\widehat{q}^{\pm}|^2\widehat{q}^{\pm}  \vspace{0.05in}\\
-4\Lambda^2\widehat{q}^{\pm*}+2i\Lambda \widehat{q}_X^{\pm*}+\widehat{q}_{XX}^{\pm*}+2|\widehat{q}^{\pm}|^2\widehat{q}^{\pm*} & 4i\Lambda^3-2i\Lambda|\widehat{q}^{\pm}|^2-\widehat{q}_X^{\pm*}\widehat{q}^{\pm}+\widehat{q}^{\pm}_X\widehat{q}^{\pm*}
\end{array}\!\!\right].
\end{aligned}
\end{align}

Using Eqs.~(\ref{N-lax}) and (\ref{N-lax-1}), we get the system of equations about the matrix functions $P^{\pm}(\Lambda; X,T)$:
\begin{align}\label{ab-26-1}
\begin{aligned}
&P_X^{\pm}(\Lambda; X,T)=A^{\pm}(\Lambda;X,T)P^{\pm}(\Lambda; X,T),\vspace{0.05in}\\
&P_T^{\pm}(\Lambda; X,T)=B^{\pm}(\Lambda;X,T)P^{\pm}(\Lambda; X,T),
\end{aligned}
\end{align}
which constitute the Lax pair for the c-mKdV equation. According to compatibility condition $A_T^{\pm}(\Lambda;X,T)-B_X^{\pm}(\Lambda;X,T)+[A^{\pm}(\Lambda;X,T),B^{\pm}(\Lambda;X,T)]=0$, we show that the function $q^{\pm}(X,T)$ is a solution of Eq.~(\ref{cmkdv-1}). Thus the proof is completed.

\end{proof}

Below we present the large-order asymptotic behavior of rational soliton solution $q_k(x,t)$ of the c-mKdV equation (\ref{cmkdv}) with finite density initial conditions (\ref{id}) after making a transformation (\ref{trans}). Our main result is given by the following Theorem. Firstly, we present a small-norm Riemann-Hilbert Problem.

Let
\bee
F(\Lambda;X,T):=\widehat{M}^{(k)}\l(n\Lambda;\frac{X}{n},\frac{T}{n^3}\r)N^{\pm}(\Lambda;X,T)^{-1},
\ene
where if $k=2n,n\in\mathbb{N}$, we choose the $+$ sign, and if $k=2n-1,n\in\mathbb{N}_+$, we choose the $-$ sign.

\begin{lemma}\label{le1} Small-norm Riemann-Hilbert Problem.\\
Let $(X,T)$ be in compact subsets of $\mathbb{R}^2$. Find a $2\times 2$ matrix $F(\Lambda;X,T)$ that satisfies the following properties:

\begin{itemize}
  \item{}
 Analyticity: $F(\Lambda;X,T)$ is analytic in $\mathbb{C}\setminus\Sigma_1$ and takes continuous boundary values on $\Sigma_1$.

\item{} Jump conditions: Assuming clockwise orientation of $\Sigma_1$, the boundary values on the jump contour $\Sigma_1$ are related as:
\bee
F_+(\Lambda;X,T)=F_-(\Lambda;X,T)N_-^{\pm}(\Lambda;X,T)\l(\mathbb{I}+\mathcal{O}\l(\frac{1}{n}\r)\r)N_-^{\pm}(\Lambda;X,T)^{-1},\quad \Lambda\in\Sigma_1,
\ene

\item{} Normalization: $F(\Lambda;X,T)$ tends to the identity matrix as $\Lambda\rightarrow\infty$.
\end{itemize}
\end{lemma}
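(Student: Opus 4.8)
The plan is to verify that $F$, as defined just above the statement, satisfies the three listed properties; since $F$ is the ratio of the rescaled solution of Riemann-Hilbert Problem 2 (Proposition \ref{prop2}) and the solution of Riemann-Hilbert Problem 3 (Proposition \ref{prop4}, whose existence, uniqueness, and uniform bound are guaranteed by Theorem \ref{prop5}), this amounts to a direct computation of Deift--Zhou small-norm type. First I would treat analyticity. Under the scale transform (\ref{trans}), with the contour $\Sigma_0$ chosen of radius $n$, the circle $|\lambda|=n$ is mapped to $\Sigma_1=\{|\Lambda|=1\}$, so both factors $\widehat{M}^{(k)}(n\Lambda;X/n,T/n^3)$ and $N^{\pm}(\Lambda;X,T)$ are analytic on $\mathbb{C}\setminus\Sigma_1$. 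Because $\det(N^{\pm})=1$ by Theorem \ref{prop5}, the inverse $N^{\pm}(\Lambda;X,T)^{-1}$ is again analytic off $\Sigma_1$, so $F$ is a product of matrices analytic in $\mathbb{C}\setminus\Sigma_1$ and is therefore analytic there; its continuous boundary values on $\Sigma_1$ are inherited from the continuity of each factor.

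Second, for the jump condition I would use the two jump relations on $\Sigma_1$. Writing $F_{\pm}=\widehat{M}^{(k)}_{\pm}(N^{\pm}_{\pm})^{-1}$, the rescaled $\widehat{M}^{(k)}$ carries the jump $\widehat{M}^{(k)}_+=\widehat{M}^{(k)}_-(\mathbb{I}+\mathcal{O}(1/n))V_N$, where $V_N:=e^{-i(\Lambda X+4\Lambda^3T)\sigma_3}Qe^{\mp2i\Lambda^{-1}\sigma_3}Q^{-1}e^{i(\Lambda X+4\Lambda^3T)\sigma_3}$ is precisely the jump matrix of $N^{\pm}$ in Riemann-Hilbert Problem 3 --- this is exactly the content of the asymptotic expansion established just before Proposition \ref{prop4}. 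Substituting $N^{\pm}_+=N^{\pm}_-V_N$ and $\widehat{M}^{(k)}_-=F_-N^{\pm}_-$ then gives
\[
F_+=\widehat{M}^{(k)}_-(\mathbb{I}+\mathcal{O}(1/n))V_NV_N^{-1}(N^{\pm}_-)^{-1}=F_-N^{\pm}_-\l(\mathbb{I}+\mathcal{O}\l(\frac{1}{n}\r)\r)(N^{\pm}_-)^{-1},
\]
which is the claimed jump. The choice of sign ($+$ for $k=2n$, $-$ for $k=2n-1$) is forced by the two cases of the jump in Proposition \ref{prop2} and is already built into the definition of $F$. For the normalization, since $\widehat{M}^{(k)}\to\mathbb{I}$ and $N^{\pm}\to\mathbb{I}$ as their spectral arguments tend to infinity and $n\Lambda\to\infty$ whenever $\Lambda\to\infty$, one obtains $F\to\mathbb{I}$.

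The step requiring care is not the formal algebra but the uniformity that makes this genuinely a \emph{small-norm} problem. The error $\mathcal{O}(1/n)$ in the expansion of the $\widehat{M}^{(k)}$-jump holds uniformly for $|\Lambda|=1$ and $(X,T)$ in compact subsets of $\mathbb{R}^2$, and conjugation by $N^{\pm}_-$ preserves this because $\sup_{|\Lambda|\neq1,(X,T)\in K}\|N^{\pm}\|=C(K)<\infty$ by the estimate (\ref{C-k}) of Theorem \ref{prop5}; hence $N^{\pm}_-(\mathbb{I}+\mathcal{O}(1/n))(N^{\pm}_-)^{-1}=\mathbb{I}+\mathcal{O}(1/n)$ uniformly on $\Sigma_1\times K$. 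I expect this uniformity --- controlling the conjugating factor over the whole contour and all $(X,T)\in K$ simultaneously --- to be the main point to pin down; once it is in hand, the standard small-norm theory yields existence and uniqueness of $F$ together with $F=\mathbb{I}+\mathcal{O}(1/n)$, which is the ingredient that ultimately feeds into Theorem \ref{th1}.
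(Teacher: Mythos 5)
Your proposal is correct and follows essentially the same route as the paper: both verify the jump by writing $F_{\pm}=\widehat{M}^{(k)}_{\pm}(N^{\pm}_{\pm})^{-1}$, inserting the rescaled jump $(\mathbb{I}+\mathcal{O}(1/n))V_N$ of $\widehat{M}^{(k)}$ together with $N^{\pm}_+=N^{\pm}_-V_N$, and cancelling $V_NV_N^{-1}$. The additional points you raise (analyticity, normalization, and the uniform bound (\ref{C-k}) needed to absorb the conjugation by $N^{\pm}_-$) are not spelled out in the paper's proof of the lemma itself but are exactly the ingredients the paper invokes afterwards in the proof of Theorem \ref{th1}.
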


\begin{proof}
If $\Lambda\in\Sigma_1$, we have
\begin{align}\no
\begin{aligned}
F_+(\Lambda;X,T)&=\widehat{M}_+^{(k)}\left(n\Lambda;\frac{X}{n},\frac{T}{n^3}\right)N_+^{\pm}(\Lambda;X,T)^{-1}\vspace{0.05in}\\
&=\widehat{M}^{(k)}_-\l(n\Lambda;\frac{X}{n},\frac{T}{n^3}\r)\l(\mathbb{I}+\mathcal{O}\l(\frac{1}{n}\r)\r)e^{-i(\Lambda X+4\Lambda^3T)\sigma_3}Qe^{\mp2i\Lambda^{-1}\sigma_3}Q^{-1}e^{i(\Lambda X+4\Lambda^3T)\sigma_3}\vspace{0.05in}\\
&\qquad \times \left(N^{\pm}_-(\Lambda; x,t)e^{-i(\Lambda X+4\Lambda^3T)\sigma_3}Qe^{\mp2i\Lambda^{-1}\sigma_3}Q^{-1}e^{i(\Lambda X+4\Lambda^3T)\sigma_3}\right)^{-1}\vspace{0.05in}\\
&=\widehat{M}^{(k)}_-\l(n\Lambda;\frac{X}{n},\frac{T}{n^3}\r)\l(\mathbb{I}+\mathcal{O}\l(\frac{1}{n}\r)\r)N^{\pm}_-(\Lambda; x,t)^{-1}\vspace{0.05in}\\
&=F_-(\Lambda;X,T)N_-^{\pm}(\Lambda;X,T)\l(\mathbb{I}+\mathcal{O}\l(\frac{1}{n}\r)\r)N_-^{\pm}(\Lambda;X,T)^{-1},\quad \Lambda\in\Sigma_1.
\end{aligned}
\end{align}

Thus the proof is completed.
\end{proof}

According to Lemma \ref{le1}, we provide the proof of Theorem \ref{th1}.

\begin{proof}
To prove this theorem, we only need to solve the small-norm Riemann-Hilbert Problem in Lemma \ref{le1}. If $k=2n,n\in\mathbb{N}$, we choose the $+$ sign, and if $k=2n-1,n\in\mathbb{N}_+$, we choose the $-$ sign. Selecting a compact $K\subset\mathbb{R}^2$ and using $\det(N^{\pm}(\Lambda;x,t))=1$ along with Eq.~(\ref{C-k}), we can obtain
\bee
F_+(\Lambda;X,T)=F_-(\Lambda;X,T)\l(\mathbb{I}+\mathcal{O}\l(\frac{1}{n}\r)\r),\quad \Lambda\in\Sigma_1,
\ene
uniformly for (X,T) in compact subsets $K$. According to the standard theory of small-norm Riemann-Hilbert Problem, it follows that
\bee
F(\Lambda;X,T)=\mathbb{I}+\mathcal{O}\l(\frac{1}{n}\r),\quad \Lambda\in\Sigma_1,
\ene
uniformly for $(X,T)\in K$. Moreover, the matrix function $F_(\Lambda;X,T)$ has the following Laurent expansion.
\bee\label{F}
F(\Lambda;X,T)=\mathbb{I}+\sum_{j=1}^{\infty}F^{[j]}(X,T)\Lambda^{-j},\quad |\Lambda|>1,
\ene
uniformly for $(X,T)\in K$. Then, every coefficient $F^{[j]}(X,T)$ of Eq.~(\ref{F}) is $\mathcal{O}(\frac{1}{n})$. Using Eq.~(\ref{qk-1}), we have
\begin{align}\no
\begin{aligned}
\frac{1}{n}q_k\l(\frac{X}{n},\frac{T}{n^3}\r)&=\frac{1}{n}+\frac{2i}{n}\lim_{\lambda\rightarrow\infty}\lambda\widehat{M}^{(k)}_{12}\l(\lambda;\frac{X}{n},\frac{T}{n^3}\r)\vspace{0.05in}\\
&=\frac{1}{n}+2i\lim_{\Lambda\rightarrow\infty}\Lambda\widehat{M}^{(k)}_{12}\l(n\Lambda;\frac{X}{n},\frac{T}{n^3}\r)\vspace{0.05in}\\
&=\frac{1}{n}+2i\lim_{\Lambda\rightarrow\infty}\Lambda\left(F_{11}(\Lambda;X,T)N^{\pm}_{12}(\Lambda;X,T)+F_{12}(\Lambda;X,T)N^{\pm}_{22}(\Lambda;X,T)
\right)\vspace{0.05in}\\
&=\frac{1}{n}+2i\lim_{\Lambda\rightarrow\infty}\Lambda N^{\pm}_{12}(\Lambda;X,T)+2iF^{[1]}_{12}(X,T)\vspace{0.05in}\\
&=\widehat{q}^{\pm}(X,T)+\mathcal{O}\l(\frac{1}{n}\r)
\end{aligned}
\end{align}
holds uniformly for $(X,T)\in K$. Thus the proof is completed.
\end{proof}

\section{Some basic properties of the near-field limit in $(X, T)$-space}

As mentioned earlier, performing a transformation $(\ref{trans})$ can provide the solution $\widehat{q}^{\pm}(X,T)$, satisfying the c-mKdV equation (\ref{cmkdv-1}). To better analyze the properties of $\widehat{q}^{\pm}(X,T)$, let
\bee\label{D}
D^{\pm}(\Lambda;X,T)
=\begin{cases}
N^{\pm}(\Lambda;X,T)e^{-i(\Lambda X+4\Lambda^3T)\sigma_3}Qe^{i(\Lambda X+4\Lambda^3T)\sigma_3},\quad |\Lambda|<1,\vspace{0.05in}\\
N^{\pm}(\Lambda;X,T)e^{\pm2i\Lambda^{-1}\sigma_3},\quad |\Lambda|>1.
\end{cases}
\ene
Then we construct a Riemann-Hilbert Problem about the matrix functions $D^{\pm}(\Lambda;X,T)$.
\begin{prop} Riemann-Hilbert Problem 4.\\
Let $(X,T)$ be in compact subsets of $\mathbb{R}^2$. Find a $2\times 2$ matrix $D^{\pm}(\Lambda;X,T)$ that satisfies the following properties:

\begin{itemize}
    \item{} Analyticity: $D^{\pm}(\Lambda; X,T)$ is analytic in $\mathbb{C}\setminus\Sigma_1$ and takes continuous boundary values on $\Sigma_1$.

\item{} Jump conditions: Assuming clockwise orientation of $\Sigma_1$, the boundary values on the jump contour $\Sigma_1$ are related as:
\bee\label{RH4}
D^{\pm}_+(\Lambda; X,T)=D^{\pm}_-(\Lambda; X,T)e^{-i(\Lambda X+4\Lambda^3T\pm2\Lambda^{-1})\sigma_3}Q^{-1}e^{i(\Lambda X+4\Lambda^3T\pm2\Lambda^{-1})\sigma_3},\quad \Lambda\in\Sigma_1.
\ene

\item{} Normalization: $D^{\pm}(\Lambda; x,t)$ tends to the identity matrix $\mathbb{I}$ as $\Lambda\rightarrow\infty$.
\end{itemize}
\end{prop}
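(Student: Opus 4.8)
The plan is to verify directly the three Riemann-Hilbert properties for the piecewise matrix $D^{\pm}$ defined in (\ref{D}), exploiting the fact that $N^{\pm}$ is already known to solve Riemann-Hilbert Problem 3 (Proposition \ref{prop4}), with its analyticity, jump relation and normalization granted by Theorem \ref{prop5}. Since $D^{\pm}$ is obtained from $N^{\pm}$ by right multiplication by explicit conjugating factors built from $\sigma_3$-exponentials and the constant matrix $Q$, no new existence or uniqueness argument is required; everything reduces to algebra with these factors. I would therefore present the proof as a sequence of three short verifications.

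For analyticity I would treat the two regions separately. In the interior $|\Lambda|<1$ the factor $e^{-i(\Lambda X+4\Lambda^3T)\sigma_3}Qe^{i(\Lambda X+4\Lambda^3T)\sigma_3}$ is entire in $\Lambda$, being a product of exponentials of polynomials in $\Lambda$ with the constant $Q$, so $D^{\pm}$ inherits the analyticity of $N^{\pm}$ on the open disk, including at $\Lambda=0$. In the exterior $|\Lambda|>1$ the factor $e^{\pm2i\Lambda^{-1}\sigma_3}$ is analytic away from the origin, hence throughout $|\Lambda|>1$; its essential singularity at $\Lambda=0$ is harmless because that point lies in the interior region, where the other branch is used. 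Continuity of the boundary values on $\Sigma_1$ then follows from the corresponding continuity of $N^{\pm}$ together with continuity of the conjugating factors.

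For the jump I would fix the clockwise orientation so that the $+$ boundary value is taken from the exterior and the $-$ value from the interior. Writing $\Phi:=\Lambda X+4\Lambda^3T$, I substitute $D^{\pm}_+=N^{\pm}_+e^{\pm2i\Lambda^{-1}\sigma_3}$ and $D^{\pm}_-=N^{\pm}_-e^{-i\Phi\sigma_3}Qe^{i\Phi\sigma_3}$ into $(D^{\pm}_-)^{-1}D^{\pm}_+$, and then insert the jump relation of Proposition \ref{prop4}, namely $N^{\pm}_+=N^{\pm}_-e^{-i\Phi\sigma_3}Qe^{\mp2i\Lambda^{-1}\sigma_3}Q^{-1}e^{i\Phi\sigma_3}$. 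The factor $N^{\pm}_-$ cancels, the identity $Q^{-1}Q=\mathbb{I}$ removes the inner conjugation, and commutativity and additivity of $\sigma_3$-exponentials collapse the remaining product to $e^{-i(\Phi\pm2\Lambda^{-1})\sigma_3}Q^{-1}e^{i(\Phi\pm2\Lambda^{-1})\sigma_3}$, which is precisely (\ref{RH4}). For normalization I note that $\Lambda\to\infty$ lies in the exterior region, where $D^{\pm}=N^{\pm}e^{\pm2i\Lambda^{-1}\sigma_3}$; since $N^{\pm}\to\mathbb{I}$ by Theorem \ref{prop5} and $e^{\pm2i\Lambda^{-1}\sigma_3}\to\mathbb{I}$, we obtain $D^{\pm}\to\mathbb{I}$.

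The only delicate point, and thus the step I expect to be the main obstacle, is the bookkeeping of the $\pm/\mp$ signs and the orientation convention in the jump computation: one must track which sign of $e^{\pm2i\Lambda^{-1}\sigma_3}$ accompanies the $N^{\pm}$ jump so that the inner exponential combines with $\Phi$ to produce the \emph{same} symmetric argument $\Phi\pm2\Lambda^{-1}$ on both sides of $Q^{-1}$. A useful consistency check is to carry out the two sign choices in parallel: choosing the $+$ branch merges $e^{-i\Phi\sigma_3}$ with $e^{-2i\Lambda^{-1}\sigma_3}$ into $e^{-i(\Phi+2\Lambda^{-1})\sigma_3}$, while the $-$ branch yields $e^{-i(\Phi-2\Lambda^{-1})\sigma_3}$, matching the two signs in (\ref{RH4}). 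Once the orientation is pinned down so that the exterior limit is the $+$ value, the verification is entirely mechanical and no further analytic input is needed.
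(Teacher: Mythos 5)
Your proposal is correct and follows essentially the same route as the paper: substitute the two branches of the definition of $D^{\pm}$ into the boundary relation, invoke the jump of $N^{\pm}$ from Riemann--Hilbert Problem 3, cancel via $Q^{-1}Q=\mathbb{I}$, and merge the commuting $\sigma_3$-exponentials into the symmetric phase $\Lambda X+4\Lambda^3T\pm2\Lambda^{-1}$, with the normalization following from $N^{\pm}\to\mathbb{I}$ and $e^{\pm2i\Lambda^{-1}\sigma_3}\to\mathbb{I}$. Your explicit sign bookkeeping even catches what appears to be a stray factor of $i$ in the exponent of the paper's final displayed line, and your added remarks on analyticity (which the paper leaves implicit) are accurate.
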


\begin{proof}
If $\Lambda\in\Sigma_1$, we have
\begin{align}\no
\begin{aligned}
D^{\pm}_+(\Lambda;X,T)&=N_+^{\pm}(\Lambda;X,T)e^{\pm2i\Lambda^{-1}\sigma_3}\vspace{0.05in}\\
&=N^{\pm}_-(\Lambda; x,t)e^{-i(\Lambda X+4\Lambda^3T)\sigma_3}Qe^{\mp2i\Lambda^{-1}\sigma_3}Q^{-1}e^{i(\Lambda X+4\Lambda^3T\pm2\Lambda^{-1})\sigma_3}\vspace{0.05in}\\
&=N_-^{\pm}(\Lambda;X,T)e^{-i(\Lambda X+4\Lambda^3T)\sigma_3}Qe^{i(\Lambda X+4\Lambda^3T)\sigma_3}\bigg(e^{-i(\Lambda X+4\Lambda^3T)\sigma_3}Q^{-1}
e^{i(\Lambda X+4\Lambda^3T)\sigma_3}\bigg)\vspace{0.05in}\\
&~~\times e^{-i(\Lambda X+4\Lambda^3T)\sigma_3}Qe^{\mp2i\Lambda^{-1}\sigma_3}Q^{-1}e^{i(\Lambda X+4\Lambda^3T\pm2\Lambda^{-1})\sigma_3}\vspace{0.05in}\\
&=D^{\pm}_-(\Lambda; x,t)e^{-i(\Lambda X+4\Lambda^3T\pm2i\Lambda^{-1})\sigma_3}Q^{-1}e^{i(\Lambda X+4\Lambda^3T\pm2i\Lambda^{-1})\sigma_3},\quad \Lambda\in\Sigma_1.
\end{aligned}
\end{align}

Since $e^{\pm2i\Lambda^{-1}\sigma_3}\rightarrow\mathbb{I},N^{\pm}(\Lambda; x,t)\rightarrow\mathbb{I}$ as $\Lambda\rightarrow\infty$, thus we have $D^{\pm}(\Lambda; x,t)\rightarrow\mathbb{I}$ as $\Lambda\rightarrow\infty$.
Thus the proof is completed.
\end{proof}

According to Eq.~(\ref{fanyan-q1}), we can recover $\widehat{q}^{\pm}(X,T)$ by the following formula:
\bee\label{fanyan-q2}
\widehat{q}^{\pm}(X,T)=2i\lim_{\Lambda\rightarrow\infty}\Lambda D^{\pm}_{12}(\Lambda;X,T).
\ene

\subsection{Symmetries}
$D^+(\Lambda;X,T)$ and $D^-(\Lambda;X,T)$ given by Eq.~(\ref{D}) have the following symmetric relations.

\begin{prop}\label{prop-sch}
Schwarz symmetry:
\bee\label{sch-sym}
D^{\pm}(\Lambda;X,T)=\sigma_2D^{\pm}(\Lambda^*;X,T)^*\sigma_2,\quad |\Lambda|\neq1,\quad (X,T)\in\mathbb{R}^2.
\ene
\end{prop}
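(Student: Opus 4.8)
The plan is to propagate the already-established Schwarz symmetry of $N^{\pm}$, namely Eq.~(\ref{sym}), $N^{\pm}(\Lambda;X,T)=\sigma_2 N^{\pm}(\Lambda^*;X,T)^*\sigma_2$, through the piecewise definition~(\ref{D}) of $D^{\pm}$. Since complex conjugation preserves the modulus $|\Lambda|$, the reflection $\Lambda\mapsto\Lambda^*$ stays within the same region ($|\Lambda|<1$ or $|\Lambda|>1$), so I would treat the two branches of~(\ref{D}) separately and verify the claimed identity in each.

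First I would record the three algebraic facts on which everything hinges: (i) $\sigma_2\sigma_3\sigma_2=-\sigma_3$, so that conjugating any factor $e^{c\sigma_3}$ by $\sigma_2$ flips the sign of $c$; (ii) $Q$ has real entries, hence $Q^*=Q$, and a direct $2\times2$ computation gives $\sigma_2 Q\sigma_2=Q$; and (iii) because $X,T\in\mathbb{R}$, the phase $\theta(\Lambda):=\Lambda X+4\Lambda^3T$ satisfies $\theta(\Lambda^*)^*=\theta(\Lambda)$, and likewise $(\Lambda^{*-1})^*=\Lambda^{-1}$ for the $2\Lambda^{-1}$ term.

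For the exterior piece $|\Lambda|>1$, where $D^{\pm}=N^{\pm}e^{\pm2i\Lambda^{-1}\sigma_3}$, I would form $\sigma_2 D^{\pm}(\Lambda^*;X,T)^*\sigma_2$, push the complex conjugate inside to convert $e^{\pm2i\Lambda^{*-1}\sigma_3}$ into $e^{\mp2i\Lambda^{-1}\sigma_3}$ via fact~(iii), and then insert $\sigma_2^2=\mathbb{I}$ between the two factors; fact~(i) turns the $\mp$ sign back into $\pm$, while Eq.~(\ref{sym}) converts $\sigma_2 N^{\pm}(\Lambda^*)^*\sigma_2$ into $N^{\pm}(\Lambda)$, reproducing $D^{\pm}(\Lambda;X,T)$ exactly. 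For the interior piece $|\Lambda|<1$, where $D^{\pm}=N^{\pm}e^{-i\theta\sigma_3}Qe^{i\theta\sigma_3}$, I would do the same bracketing, splitting the conjugated product into the four blocks $(\sigma_2 N^{\pm}(\Lambda^*)^*\sigma_2)(\sigma_2 e^{i\theta(\Lambda)\sigma_3}\sigma_2)(\sigma_2 Q\sigma_2)(\sigma_2 e^{-i\theta(\Lambda)\sigma_3}\sigma_2)$ by repeated use of $\sigma_2^2=\mathbb{I}$; facts~(i)--(iii) and Eq.~(\ref{sym}) then collapse this to $N^{\pm}(\Lambda)e^{-i\theta(\Lambda)\sigma_3}Qe^{i\theta(\Lambda)\sigma_3}=D^{\pm}(\Lambda;X,T)$.

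The argument carries no genuine analytic obstacle: the symmetry of $N^{\pm}$ is already guaranteed by the uniqueness of the solution of Riemann--Hilbert Problem~3, and the remainder is bookkeeping. The only point demanding care is the conjugation identity $\sigma_2 Q\sigma_2=Q$ (so that the $Q$ factor survives unchanged rather than turning into $Q^{-1}$) together with the sign tracking of the phase factors under the combined operation of complex conjugation and $\sigma_2$-conjugation; getting those signs right is precisely what makes both the interior and exterior pieces close up to the stated definition~(\ref{D}) rather than to a sign-modified version of it.
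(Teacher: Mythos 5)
Your proof is correct, but it follows a genuinely different route from the paper's. The paper does not use the formula~(\ref{D}) at all in its proof: it takes the jump relation~(\ref{RH4}) of Riemann--Hilbert Problem~4, shows via $\sigma_2Q^{-1}\sigma_2=Q^{-1}$ and the sign flip $\sigma_2 e^{c\sigma_3}\sigma_2=e^{-c\sigma_3}$ that $\sigma_2D^{\pm}(\Lambda^*;X,T)^*\sigma_2$ satisfies the identical jump condition (and normalization), and then concludes by uniqueness of the solution of RHP~4. You instead push the already-proved symmetry~(\ref{sym}) of $N^{\pm}$ through the explicit piecewise definition~(\ref{D}), checking the two regions $|\Lambda|<1$ and $|\Lambda|>1$ separately; the algebraic facts you isolate --- $\sigma_2\sigma_3\sigma_2=-\sigma_3$, $\sigma_2Q\sigma_2=Q$ (which I verified by direct computation), and reality of the phase under $\Lambda\mapsto\Lambda^*$ combined with conjugation --- are exactly what is needed, and both branches close up with the correct signs. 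Your version buys directness: it needs no second appeal to a vanishing-lemma/uniqueness argument and sidesteps any bookkeeping about how the $\pm$ boundary values on $\Sigma_1$ behave under $\Lambda\mapsto\Lambda^*$, at the price of depending on Eq.~(\ref{sym}) having been established earlier (it is, in the proof of Theorem~\ref{prop5}). The paper's version is self-contained at the level of RHP~4 and generalizes more readily to symmetries that are not visible from an explicit factorization.
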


\begin{proof}
Using Eq.~(\ref{RH4}), we have
\begin{align}\label{R-sym-1}
\begin{aligned}
\sigma_2D^{\pm}_+(\Lambda^*; X,T)^*\sigma_2=&\sigma_2D^{\pm}_-(\Lambda^*; X,T)^*\sigma_2\sigma_2e^{i(\Lambda X+4\Lambda^3T\pm2\Lambda^{-1})\sigma_3}Q^{-1}\vspace{0.05in}\\
&\times e^{-i(\Lambda X+4\Lambda^3T\pm2\Lambda^{-1})\sigma_3}\sigma_2,\quad \Lambda\in\Sigma_1.
\end{aligned}
\end{align}

Using $\sigma_2Q^{-1}\sigma_2=Q^{-1}$, we have
\begin{align}\label{R-sym-2}
\begin{aligned}
&\sigma_2e^{i(\Lambda X+4\Lambda^3T\pm2\Lambda^{-1})\sigma_3}Q^{-1}e^{-i(\Lambda X+4\Lambda^3T\pm2\Lambda^{-1})\sigma_3}\sigma_2\vspace{0.05in}\\
=&e^{-i(\Lambda X+4\Lambda^3T\pm2\Lambda^{-1})\sigma_3}\sigma_2Q^{-1}\sigma_2e^{i(\Lambda X+4\Lambda^3T\pm2\Lambda^{-1})\sigma_3}\vspace{0.05in}\\
=&e^{-i(\Lambda X+4\Lambda^3T\pm2\Lambda^{-1})\sigma_3}Q^{-1}e^{i(\Lambda X+4\Lambda^3T\pm2\Lambda^{-1})\sigma_3}
\end{aligned}
\end{align}

Substituting Eq.~(\ref{R-sym-2}) into Eq.~(\ref{R-sym-1}), we obtain
\begin{align}\no
\begin{aligned}
\sigma_2D^{\pm}_+(\Lambda^*; X,T)^*\sigma_2=&\,\sigma_2D^{\pm}_-(\Lambda^*; X,T)^*\sigma_2e^{-i(\Lambda X+4\Lambda^3T\pm2\Lambda^{-1})\sigma_3}Q^{-1} e^{i(\Lambda X+4\Lambda^3T\pm2\Lambda^{-1})\sigma_3},\quad \Lambda\in\Sigma_1.
\end{aligned}
\end{align}
Then we obtain the symmetry (\ref{sch-sym}).
Thus the proof is completed.
\end{proof}

\begin{prop}\label{sym1}
The following identities (symmetries) hold:
\bee
D^{\mp}(\Lambda;X,T)
=\begin{cases}
-\sigma_3D^{\pm}(\Lambda;X,T)e^{-2i(\Lambda X+4\Lambda^3T)\sigma_3}\sigma_1,\quad |\Lambda|<1,\vspace{0.05in}\\
\sigma_3D^{\pm}(\Lambda;X,T)e^{\mp 4i\Lambda^{-1}\sigma_3}\sigma_3,\qquad
\qquad |\Lambda|>1.
\end{cases}
\ene
\end{prop}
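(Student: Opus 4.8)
The plan is to establish both identities at once by the standard device of checking that the proposed right-hand side solves the Riemann--Hilbert Problem~4 carrying the \emph{opposite} sign, and then invoking uniqueness. Accordingly, I would set
\[
\widetilde{D}(\Lambda;X,T):=\begin{cases}
-\sigma_3 D^{\pm}(\Lambda;X,T)\,e^{-2i(\Lambda X+4\Lambda^3T)\sigma_3}\sigma_1,& |\Lambda|<1,\\[2pt]
\sigma_3 D^{\pm}(\Lambda;X,T)\,e^{\mp4i\Lambda^{-1}\sigma_3}\sigma_3,& |\Lambda|>1,
\end{cases}
\]
and aim to prove $\widetilde{D}=D^{\mp}$. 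Because the transformation \eqref{D} is invertible, Theorem~\ref{prop5} shows that Riemann--Hilbert Problem~4 (for either choice of sign) has a unique solution; hence it suffices to verify that $\widetilde{D}$ possesses the three defining properties of the problem solved by $D^{\mp}$.

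Analyticity and normalization are straightforward. On $|\Lambda|<1$ the factor $D^{\pm}$ is analytic up to and including $\Lambda=0$ (since $N^{\pm}$ is analytic there by \eqref{D} and the conjugating exponentials are entire) while $e^{-2i(\Lambda X+4\Lambda^3T)\sigma_3}$ is entire; on $|\Lambda|>1$ the factor $D^{\pm}$ is analytic and $e^{\mp4i\Lambda^{-1}\sigma_3}$ is analytic for $\Lambda\neq0$. Thus $\widetilde{D}$ is analytic on $\mathbb{C}\setminus\Sigma_1$. As $\Lambda\to\infty$ we have $D^{\pm}\to\mathbb{I}$ and $e^{\mp4i\Lambda^{-1}\sigma_3}\to\mathbb{I}$, so $\widetilde{D}\to\sigma_3\mathbb{I}\sigma_3=\mathbb{I}$; one also checks $\det\widetilde{D}=1$.

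The real content is the jump relation on $\Sigma_1$. Write $\phi:=\Lambda X+4\Lambda^3T$ and $\varphi_{\pm}:=\phi\pm2\Lambda^{-1}$, and recall from the construction of Riemann--Hilbert Problem~4 that the exterior boundary value supplies the ``$+$'' side and the interior one the ``$-$'' side. Substituting the known jump $D^{\pm}_+=D^{\pm}_-\,e^{-i\varphi_{\pm}\sigma_3}Q^{-1}e^{i\varphi_{\pm}\sigma_3}$ from \eqref{RH4} and carrying out the phase bookkeeping
\[
e^{i\varphi_{\pm}\sigma_3}e^{\mp4i\Lambda^{-1}\sigma_3}=e^{i\varphi_{\mp}\sigma_3},\qquad
e^{-2i\phi\sigma_3}e^{i\varphi_{\mp}\sigma_3}=e^{-i\varphi_{\pm}\sigma_3},
\]
together with the anticommutation rule $\sigma_1 e^{a\sigma_3}=e^{-a\sigma_3}\sigma_1$, the desired identity $\widetilde{D}_+=\widetilde{D}_-\,e^{-i\varphi_{\mp}\sigma_3}Q^{-1}e^{i\varphi_{\mp}\sigma_3}$ reduces, after cancelling the common invertible factors, to the single algebraic relation
\[
Q^{-1}\sigma_3=-\sigma_1 Q^{-1},
\]
which I would confirm directly from $Q=\tfrac{\sqrt2}{2}\bigl(\begin{smallmatrix}1&-1\\1&1\end{smallmatrix}\bigr)$ and $Q^{-1}=Q^{T}$. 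The main obstacle is entirely one of disciplined bookkeeping: keeping the exterior/interior ($+$/$-$) orientation straight, tracking the three distinct phases $\phi,\varphi_+,\varphi_-$ through the several exponential merges, and accounting for the signs contributed by $\sigma_3$, $\sigma_1$ and the leading $-1$ in the interior piece. Once the jump is verified, uniqueness of Riemann--Hilbert Problem~4 with sign $\mp$ forces $\widetilde{D}=D^{\mp}$, which is precisely the asserted pair of symmetries.
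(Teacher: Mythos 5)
Your proposal is correct and follows essentially the same route as the paper: both arguments hinge on the single algebraic identity $Q^{-1}\sigma_3=-\sigma_1Q^{-1}$ (the paper writes it as $-\sigma_1Q^{-1}\sigma_3=Q^{-1}$) to convert the jump matrix $V^{\pm}$ of Riemann--Hilbert Problem~4 into $V^{\mp}$, and then identify the resulting piecewise-defined matrix with $D^{\mp}$ by uniqueness. Your write-up merely makes explicit the analyticity, normalization, and uniqueness steps that the paper leaves implicit.
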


\begin{proof}
To simplify the representation, let
\begin{align}\no
\begin{aligned}
V^{+}(\Lambda; X,T)=e^{-i(\Lambda X+4\Lambda^3T+2\Lambda^{-1})\sigma_3}Q^{-1}e^{i(\Lambda X+4\Lambda^3T+2\Lambda^{-1})\sigma_3},\vspace{0.05in}\\
V^{-}(\Lambda; X,T)=e^{-i(\Lambda X+4\Lambda^3T-2\Lambda^{-1})\sigma_3}Q^{-1}e^{i(\Lambda X+4\Lambda^3T-2\Lambda^{-1})\sigma_3}.
\end{aligned}
\end{align}

Using $-\sigma_1Q^{-1}\sigma_3=Q^{-1}$, we have
\begin{align}\label{V+V-}
\begin{aligned}
V^{+}(\Lambda; X,T)&=e^{-i(\Lambda X+4\Lambda^3T+2\Lambda^{-1})\sigma_3}Q^{-1}e^{i(\Lambda X+4\Lambda^3T+2\Lambda^{-1})\sigma_3}\vspace{0.05in}\\
&=-e^{-i(\Lambda X+4\Lambda^3T+2\Lambda^{-1})\sigma_3}\sigma_1Q^{-1}\sigma_3e^{i(\Lambda X+4\Lambda^3T+2\Lambda^{-1})\sigma_3}\vspace{0.05in}\\
&=-e^{-i(\Lambda X+4\Lambda^3T)\sigma_3}e^{-2i\Lambda^{-1}\sigma_3}\sigma_1Q^{-1}\sigma_3e^{i(\Lambda X+4\Lambda^3T)\sigma_3}e^{2i\Lambda^{-1}\sigma_3}\vspace{0.05in}\\
&=-e^{-i(\Lambda X+4\Lambda^3T)\sigma_3}\sigma_1e^{2i\Lambda^{-1}\sigma_3}Q^{-1}e^{-2i\Lambda^{-1}\sigma_3}e^{i(\Lambda X+4\Lambda^3T)\sigma_3}\sigma_3e^{4i\Lambda^{-1}\sigma_3}\vspace{0.05in}\\
&=-e^{-2i(\Lambda X+4\Lambda^3T)\sigma_3}\sigma_1e^{-i(\Lambda X+4\Lambda^3T)\sigma_3}e^{2i\Lambda^{-1}\sigma_3}Q^{-1}e^{-2i\Lambda^{-1}\sigma_3}e^{i(\Lambda X+4\Lambda^3T)\sigma_3}\sigma_3e^{4i\Lambda^{-1}\sigma_3}\vspace{0.05in}\\
&=-e^{-2i(\Lambda X+4\Lambda^3T)\sigma_3}\sigma_1e^{-i(\Lambda X+4\Lambda^3T-2i\Lambda^{-1})\sigma_3}Q^{-1}e^{i(\Lambda X+4\Lambda^3T-2i\Lambda^{-1})\sigma_3}\sigma_3e^{4i\Lambda^{-1}\sigma_3}\vspace{0.05in}\\
&=-e^{-2i(\Lambda X+4\Lambda^3T)\sigma_3}\sigma_1V^-(\Lambda; X,T)\sigma_3e^{4i\Lambda^{-1}\sigma_3}.
\end{aligned}
\end{align}

According to Eq.~(\ref{RH4}), we have
\begin{align}\label{D+D+}
\begin{aligned}
D^{+}_+(\Lambda; X,T)&=D^{+}_-(\Lambda; x,t)V^+(\Lambda; X,T)\vspace{0.05in}\\
&=-D^{+}_-(\Lambda; X,T)e^{-2i(\Lambda X+4\Lambda^3T)\sigma_3}\sigma_1V^-(\Lambda; X,T)\sigma_3e^{4i\Lambda^{-1}\sigma_3}.
\end{aligned}
\end{align}

Rewriting Eq.~(\ref{D+D+}), we have
\bee\label{D+D+1}
D^{+}_+(\Lambda; X,T)e^{-4i\Lambda^{-1}\sigma_3}\sigma_3=-D^{+}_-(\Lambda; X,T)e^{-2i(\Lambda X+4\Lambda^3T)\sigma_3}\sigma_1V^-(\Lambda; X,T).
\ene

To ensure the normalization of the Riemann-Hilbert Problem 4, both sides of Eq.~(\ref{D+D+1}) are multiplied by matrix $\sigma_3$ simultaneously,
\bee\label{D+D+12}
\sigma_3D^{+}_+(\Lambda; X,T)e^{-4i\Lambda^{-1}\sigma_3}\sigma_3=-\sigma_3D^{+}_-(\Lambda; X,T)e^{-2i(\Lambda X+4\Lambda^3T)\sigma_3}\sigma_1V^-(\Lambda; X,T).
\ene

Then, we have
\bee
D^{-}(\Lambda;X,T)
=\begin{cases}
-\sigma_3D^{+}(\Lambda;X,T)e^{-2i(\Lambda X+4\Lambda^3T)\sigma_3}\sigma_1,\quad |\Lambda|<1,\vspace{0.05in}\\
\sigma_3D^{+}(\Lambda;X,T)e^{-4i\Lambda^{-1}\sigma_3}\sigma_3,\quad |\Lambda|>1.
\end{cases}
\ene

On the other hand, using Eq.~(\ref{V+V-}), we obtain
\bee
V^{-}(\Lambda; X,T)=-e^{-2i(\Lambda X+4\Lambda^3T)\sigma_3}\sigma_1V^{+}(\Lambda; X,T)\sigma_3e^{-4i\Lambda^{-1}\sigma_3}.
\ene

According to Eq.~(\ref{RH4}), we have
\begin{align}\label{D-D-}
\begin{aligned}
D^{-}_+(\Lambda; X,T)&=D^{-}_-(\Lambda; X,T)V^-(\Lambda; X,T)\vspace{0.05in}\\
&=-D^{-}_-(\Lambda; X,T)e^{-2i(\Lambda X+4\Lambda^3T)\sigma_3}\sigma_1V^{+}(\Lambda; X,T)\sigma_3e^{-4i\Lambda^{-1}\sigma_3}.
\end{aligned}
\end{align}

Rewriting Eq.~(\ref{D-D-}), we have
\bee
\sigma_3D^{-}_+(\Lambda; X,T)e^{4i\Lambda^{-1}\sigma_3}\sigma_3=-\sigma_3D^{-}_-(\Lambda; X,T)e^{-2i(\Lambda X+4\Lambda^3T)\sigma_3}\sigma_1V^+(\Lambda; X,T).
\ene

Then, we have
\bee
D^{+}(\Lambda;X,T)
=\begin{cases}
-\sigma_3D^{-}(\Lambda;X,T)e^{-2i(\Lambda X+4\Lambda^3T)\sigma_3}\sigma_1,\quad |\Lambda|<1,\vspace{0.05in}\\
\sigma_3D^{-}(\Lambda;X,T)e^{4i\Lambda^{-1}\sigma_3}\sigma_3,\quad |\Lambda|>1.
\end{cases}
\ene

Thus the proof is completed.
\end{proof}

From Proposition \ref{sym1}, we know that $\widehat{q}^+(X,T)=-\widehat{q}^-(X,T)$ holds for all $(X,T)\in\mathbb{R}^2$.

\begin{prop}\label{sym-prop2}
The following relations hold:
\begin{align}\no
\begin{aligned}
&D^{\mp}(-\Lambda;-X,-T)=D^{\pm}(\Lambda;X,T),\vspace{0.05in}\\
&\widehat{q}^{\pm}(-X,-T)=\widehat{q}^{\pm}(X,T).
\end{aligned}
\end{align}
\end{prop}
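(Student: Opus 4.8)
The plan is to prove both identities by a reflection argument combined with the uniqueness of the solution of Riemann-Hilbert Problem 4 (guaranteed by Theorem \ref{prop5} through Zhou's vanishing lemma). First I would introduce the candidate matrix $\widetilde{D}(\Lambda;X,T):=D^{\mp}(-\Lambda;-X,-T)$ and argue that it solves Riemann-Hilbert Problem 4 for the opposite choice of sign, that is, the problem whose unique solution is $D^{\pm}(\Lambda;X,T)$. The analyticity and normalization are immediate: the unit circle $\Sigma_1$ is invariant under $\Lambda\mapsto-\Lambda$, so $\widetilde{D}$ is analytic in $\mathbb{C}\setminus\Sigma_1$, and $-\Lambda\to\infty$ as $\Lambda\to\infty$ forces $\widetilde{D}(\Lambda;X,T)\to\mathbb{I}$.

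The crux is the jump relation. Setting $\mu=-\Lambda$, $X'=-X$, $T'=-T$ in the phase appearing in Eq.~(\ref{RH4}) for $D^{\mp}$, I would compute
\begin{align}\no
\begin{aligned}
\mu X'+4\mu^3T'\mp2\mu^{-1}
&=(-\Lambda)(-X)+4(-\Lambda)^3(-T)\mp2(-\Lambda)^{-1}\\
&=\Lambda X+4\Lambda^3T\pm2\Lambda^{-1},
\end{aligned}
\end{align}
the point being that the sign flip in the $2\Lambda^{-1}$ term is exactly what toggles the superscript $\mp$ into $\pm$. Hence the conjugated jump matrix $e^{-i(\cdots)\sigma_3}Q^{-1}e^{i(\cdots)\sigma_3}$ of $D^{\mp}$ at $(\mu;X',T')$ is identical to that of $D^{\pm}$ at $(\Lambda;X,T)$. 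Before concluding, I must check that the boundary-value labels are preserved: because $\Lambda\mapsto-\Lambda$ is the rotation by $\pi$, which is orientation-preserving, the clockwise orientation of $\Sigma_1$ is retained and the $+$/$-$ sides are not interchanged, so $\widetilde{D}_\pm(\Lambda;X,T)=D^{\mp}_\pm(-\Lambda;-X,-T)$. Substituting the jump (\ref{RH4}) for $D^{\mp}$ then shows $\widetilde{D}$ obeys precisely the jump of $D^{\pm}$. I expect this orientation-and-sign bookkeeping to be the main (though not deep) obstacle.

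With all three Riemann-Hilbert conditions verified, uniqueness yields the first identity $D^{\mp}(-\Lambda;-X,-T)=D^{\pm}(\Lambda;X,T)$. For the second identity I would feed this into the reconstruction formula (\ref{fanyan-q2}) and change variables $\mu=-\Lambda$:
\begin{align}\no
\begin{aligned}
\widehat{q}^{\pm}(X,T)
&=2i\lim_{\Lambda\to\infty}\Lambda D^{\pm}_{12}(\Lambda;X,T)
=2i\lim_{\Lambda\to\infty}\Lambda D^{\mp}_{12}(-\Lambda;-X,-T)\\
&=-2i\lim_{\mu\to\infty}\mu D^{\mp}_{12}(\mu;-X,-T)
=-\widehat{q}^{\mp}(-X,-T),
\end{aligned}
\end{align}
where the extra minus sign arises from $\Lambda=-\mu$. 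Finally, invoking the relation $\widehat{q}^{+}(X,T)=-\widehat{q}^{-}(X,T)$ recorded just after Proposition \ref{sym1} (equivalently $\widehat{q}^{\mp}=-\widehat{q}^{\pm}$), I conclude $\widehat{q}^{\pm}(X,T)=-\widehat{q}^{\mp}(-X,-T)=\widehat{q}^{\pm}(-X,-T)$, which is the desired symmetry.
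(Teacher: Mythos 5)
Your proposal is correct and follows essentially the same route as the paper: substitute $\Lambda\to-\Lambda$, $X\to-X$, $T\to-T$ in the jump relation (\ref{RH4}) so that the sign of the $2\Lambda^{-1}$ term flips, invoke uniqueness of Riemann--Hilbert Problem 4 to get the matrix identity, and then combine the reconstruction formula (\ref{fanyan-q2}) (picking up a minus sign from the change of variable in the limit) with $\widehat{q}^{+}=-\widehat{q}^{-}$. Your explicit checks of the orientation of $\Sigma_1$ and of the normalization are details the paper leaves implicit, but the argument is the same.
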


\begin{proof}
Using Eq.~(\ref{RH4}) and replacing $\Lambda$ with $-\Lambda$, $X$ with $-X$, $T$ with $-T$, we have
\bee\label{RH42}
D^{\pm}_+(-\Lambda; -X,-T)=D^{\pm}_-(-\Lambda; -X,-T)e^{-i(\Lambda X+4\Lambda^3T\mp2\Lambda^{-1})\sigma_3}Q^{-1}e^{i(\Lambda X+4\Lambda^3T\mp2\Lambda^{-1})\sigma_3},\quad \Lambda\in\Sigma_1.
\ene
Then we obtain
\bee\no
D^{\mp}(-\Lambda;-X,-T)=D^{\pm}(\Lambda;X,T).
\ene

According to Eq.~(\ref{RH4}) and (\ref{fanyan-q2}), we have
\begin{align}\no
\begin{aligned}
\widehat{q}^{\pm}(-X,-T)&=2i\lim_{\Lambda\rightarrow\infty}\Lambda D^{\pm}_{12}(\Lambda;-X,-T)\vspace{0.05in}\\
&=2i\lim_{\Lambda\rightarrow\infty}\Lambda D^{\mp}_{12}(-\Lambda;X,T)\vspace{0.05in}\\
&=-2i\lim_{\Lambda\rightarrow\infty}\Lambda D^{\mp}_{12}(\Lambda;X,T)\vspace{0.05in}\\
&=-\widehat{q}^{\mp}(X,T)=\widehat{q}^{\pm}(X,T).
\end{aligned}
\end{align}

Thus the proof is completed.
\end{proof}

\begin{prop}\label{prop-real}
The following relations hold:
\begin{align}\label{27-real}
\begin{aligned}
&D^{\pm}(-\Lambda^*;X,T)^*=D^{\pm}(\Lambda;X,T),\vspace{0.05in}\\
&\widehat{q}^{\pm}(X,T)=\widehat{q}^{\pm}(X,T)^*.
\end{aligned}
\end{align}
\end{prop}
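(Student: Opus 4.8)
The plan is to reuse the uniqueness-of-RHP strategy that proved Propositions~\ref{prop-sch} and~\ref{sym-prop2}: I exhibit a matrix built from $D^{\pm}$ that solves the \emph{same} Riemann-Hilbert Problem~4, and then appeal to the uniqueness supplied by the vanishing-lemma argument behind Theorem~\ref{prop5}. Concretely, set $\widehat{D}^{\pm}(\Lambda;X,T):=D^{\pm}(-\Lambda^*;X,T)^*$. Analyticity is immediate: $\Lambda\mapsto-\Lambda^*$ is anti-holomorphic and entrywise conjugation is anti-holomorphic, so their composition is holomorphic on $\mathbb{C}\setminus\Sigma_1$; the normalization $\widehat{D}^{\pm}\to\mathbb{I}$ follows since $-\Lambda^*\to\infty$ as $\Lambda\to\infty$. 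A bookkeeping point that must be checked up front is that the $\pm$ boundary-value labels are \emph{not} interchanged: because $|-\Lambda^*|=|\Lambda|$, the involution $\Lambda\mapsto-\Lambda^*$ preserves each of the two components $\{|\Lambda|<1\}$ and $\{|\Lambda|>1\}$ of $\mathbb{C}\setminus\Sigma_1$, so the limits defining the boundary values $\widehat{D}^{\pm}_{+}$ and $\widehat{D}^{\pm}_{-}$ are taken from the same sides as those defining $D^{\pm}_{+}$ and $D^{\pm}_{-}$.

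For the jump, substitute $\Lambda\mapsto-\Lambda^*$ in Eq.~(\ref{RH4}). Since $\Lambda$, $\Lambda^3$ and $\Lambda^{-1}$ are all odd, the entire phase $\Lambda X+4\Lambda^3T\pm2\Lambda^{-1}$ is negated \emph{uniformly}, becoming $-(\Lambda^*X+4\Lambda^{*3}T\pm2\Lambda^{*-1})$. Conjugating the resulting identity and using that $X,T,\sigma_3,Q^{-1}$ are real (so conjugation sends $i\mapsto-i$ and $\Lambda^*\mapsto\Lambda$), the two sign reversals compensate and the conjugated jump returns to $e^{-i(\Lambda X+4\Lambda^3T\pm2\Lambda^{-1})\sigma_3}Q^{-1}e^{i(\Lambda X+4\Lambda^3T\pm2\Lambda^{-1})\sigma_3}$, exactly the jump of $D^{\pm}$ with the same sign. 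Hence $\widehat{D}^{\pm}$ solves Riemann-Hilbert Problem~4, and uniqueness forces $\widehat{D}^{\pm}=D^{\pm}$, which is the first line of Eq.~(\ref{27-real}).

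The reality of $\widehat{q}^{\pm}$ then follows algebraically. Replacing $\Lambda\mapsto-\Lambda^*$ in the identity just proved rewrites it as $D^{\pm}(\Lambda;X,T)^*=D^{\pm}(-\Lambda^*;X,T)$. Conjugating the reconstruction formula Eq.~(\ref{fanyan-q2}) and inserting this gives $\widehat{q}^{\pm}(X,T)^*=-2i\lim_{\Lambda\to\infty}\Lambda^*D^{\pm}_{12}(-\Lambda^*;X,T)$; the change of variable $\nu=-\Lambda^*$ (so $\Lambda^*=-\nu$) turns this into $2i\lim_{\nu\to\infty}\nu D^{\pm}_{12}(\nu;X,T)=\widehat{q}^{\pm}(X,T)$, which is the second line of Eq.~(\ref{27-real}).

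The main obstacle is purely the sign bookkeeping in the jump step, and the subtlety is best seen by contrast with Proposition~\ref{sym-prop2}. There the substitution $\Lambda\mapsto-\Lambda$ leaves the $\Lambda X+4\Lambda^3T$ part invariant (after also flipping $X,T$) but flips only the relative sign of the $2\Lambda^{-1}$ term, which is exactly what produces the superscript swap $D^{\mp}\leftrightarrow D^{\pm}$. Here, by contrast, the map $\Lambda\mapsto-\Lambda^*$ flips \emph{all} odd powers together, so the $\pm$ inside the phase is preserved rather than turned into $\mp$; verifying that this, combined with the $i\mapsto-i$ from conjugation and the side-label preservation above, leaves the jump genuinely invariant (and not mapped to the opposite-sign problem) is the one place where an error would silently break the argument.
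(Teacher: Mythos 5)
Your proposal is correct and follows essentially the same route as the paper: substitute $\Lambda\mapsto-\Lambda^*$ into the jump relation (\ref{RH4}), use that all odd powers of $\Lambda$ flip sign so that complex conjugation restores the original jump matrix, invoke uniqueness of Riemann--Hilbert Problem~4 to get $D^{\pm}(-\Lambda^*;X,T)^*=D^{\pm}(\Lambda;X,T)$, and then conjugate the reconstruction formula (\ref{fanyan-q2}) to conclude $\widehat{q}^{\pm}$ is real. The extra bookkeeping you supply (analyticity of the composed anti-holomorphic maps, preservation of the interior/exterior components, and the contrast with the superscript swap in Proposition~\ref{sym-prop2}) is sound and only makes explicit what the paper leaves implicit.
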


\begin{proof}
Using Eq.~(\ref{RH4}) and replacing $\Lambda$ with $-\Lambda^*$, we have
\bee\label{D+-}
D^{\pm}_+(-\Lambda^*; X,T)=D^{\pm}_-(-\Lambda^*; X,T)e^{i(\Lambda^* X+4\Lambda^{*3}T\pm2\Lambda^{*-1})\sigma_3}Q^{-1}e^{-i(\Lambda^* X+4\Lambda^{*3}T\pm2\Lambda^{*-1})\sigma_3},\quad \Lambda\in\Sigma_1.
\ene

For Eq.~(\ref{D+-}), taking conjugation on both sides simultaneously, we have
\bee\label{RH43}
D^{\pm}_+(-\Lambda^*; X,T)^*=D^{\pm}_-(-\Lambda^*; X,T)^*e^{-i(\Lambda X+4\Lambda^3T\pm2\Lambda^{-1})\sigma_3}Q^{-1}e^{i(\Lambda X+4\Lambda^3T\pm2\Lambda^{-1})\sigma_3},\quad \Lambda\in\Sigma_1.
\ene

Then we obtain
\bee\no
D^{\pm}(-\Lambda^*;X,T)^*=D^{\pm}(\Lambda;X,T).
\ene

According to Eqs.~(\ref{RH4}) and (\ref{fanyan-q2}), we have
\begin{align}\no
\begin{aligned}
\widehat{q}^{\pm}(X,T)&=2i\lim_{\Lambda\rightarrow\infty}\Lambda D^{\pm}_{12}(\Lambda;X,T)\vspace{0.05in}\\
&=2i\lim_{\Lambda\rightarrow\infty}\Lambda D^{\pm}_{12}(-\Lambda^*;X,T)^*\vspace{0.05in}\\
&=\left(-2i\lim_{\Lambda\rightarrow\infty}\Lambda^* D^{\pm}_{12}(-\Lambda^*;X,T)\right)^*\vspace{0.05in}\\
&=\left(2i\lim_{\Lambda\rightarrow\infty}\Lambda D^{\pm}_{12}(\Lambda;X,T)\right)^*\vspace{0.05in}\\
&=\widehat{q}^{\pm}(X,T)^*,
\end{aligned}
\end{align}
which impies that $\widehat{q}^{\pm}(X,T)$ is real.
Thus the proof is completed.
\end{proof}

\subsection{Ordinary differential equations in $X$ or $T$}

Proposition \ref{prop4} indicates that the function $\widehat{q}^{\pm}(X,T)$ satisfies the c-mKdV equation (\ref{cmkdv-1}). The purpose of this subsection is to prove that these special solutions $\widehat{q}^{\pm}(X,T)$ of the c-mKdV equation also satisfy some ordinary differential equation about $X$ ($T$) for each fixed $T$ ($X$).

According to Proposition \ref{sym1}, we show that $\widehat{q}^+(X,T)=-\widehat{q}^-(X,T)$ holds for all $(X,T)\in\mathbb{R}^2$, then it suffices to consider the case $D(\Lambda;X,T):=D^+(\Lambda;X,T),\, \widehat{q}(X,T):=\widehat{q}^{+}(X,T)$ of Riemann-Hilbert Problem 4. Let
\bee\label{H}
H(\Lambda;X,T):=D(\Lambda;X,T)e^{-i(\Lambda X+4\Lambda^3T+2\Lambda^{-1})\sigma_3}.
\ene

Then we construct a Riemann-Hilbert Problem about matrix function $H(\Lambda;X,T)$.
\begin{prop} Riemann-Hilbert Problem 5.\\
Let $(X,T)$ be in compact subsets of $\mathbb{R}^2$. Find a $2\times 2$ matrix $H(\Lambda;X,T)$ that satisfies the following properties:

\begin{itemize}
    \item{} Analyticity: $H(\Lambda; X,T)$ is analytic in $\mathbb{C}\setminus\Sigma_1$ and takes continuous boundary values on $\Sigma_1$.

\item{} Jump conditions: Assuming clockwise orientation of $\Sigma_1$, the boundary values on the jump contour $\Sigma_1$ are related as:
\bee\label{RH44}
H_+(\Lambda; X,T)=H_-(\Lambda; X,T)Q^{-1},\quad \Lambda\in\Sigma_1.
\ene

\item{} Normalization: $H(\Lambda; X,T)\rightarrow e^{-i(\Lambda X+4\Lambda^3T)\sigma_3}$as $\Lambda\rightarrow\infty$.
\end{itemize}
\end{prop}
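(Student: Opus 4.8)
The plan is to verify the three defining properties of Riemann--Hilbert Problem 5 directly from the definition (\ref{H}), writing $H(\Lambda;X,T)=D(\Lambda;X,T)\,g(\Lambda;X,T)$ with the scalar conjugator $g(\Lambda;X,T):=e^{-i(\Lambda X+4\Lambda^3T+2\Lambda^{-1})\sigma_3}$, and feeding in the already-established structure of $D=D^+$ from Riemann--Hilbert Problem 4 together with the explicit representation (\ref{D}) of $D^\pm$ in terms of $N^\pm$. The whole argument is a bookkeeping exercise in exponential conjugation, so the only genuine care is in tracking the $\pm2\Lambda^{-1}$ terms.

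First I would establish the jump relation (\ref{RH44}), which is the heart of the matter. Since $0\notin\Sigma_1$, the factor $g$ is analytic and single-valued in a neighborhood of $\Sigma_1$, hence $H_\pm=D_\pm\,g$ with the \emph{same} $g$ on both sides of the contour. Writing the $+$-branch jump (\ref{RH4}) for $D^+$ as $D_+=D_-\,g\,Q^{-1}g^{-1}$, one immediately gets $H_+=D_+g=D_-\,g\,Q^{-1}g^{-1}g=D_-\,g\,Q^{-1}=H_-\,Q^{-1}$. The point worth stressing is that conjugation by $g$ collapses the oscillatory, $\Lambda$- and $(X,T)$-dependent jump of $D$ into the \emph{constant} matrix $Q^{-1}$; this is exactly the feature that later makes the logarithmic derivatives $H_XH^{-1}$, $H_TH^{-1}$, $H_\Lambda H^{-1}$ jump-free across $\Sigma_1$ and therefore rational in $\Lambda$, which is the mechanism behind the Lax pair and the Painlev\'e--III reduction of Theorems \ref{new-th-1} and \ref{new-th-2}.

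Next I would settle normalization and analyticity by inserting the piecewise formula (\ref{D}). For $|\Lambda|>1$ one has $D=N^+e^{2i\Lambda^{-1}\sigma_3}$, so the two $\Lambda^{-1}$ contributions cancel and $H=N^+e^{-i(\Lambda X+4\Lambda^3T)\sigma_3}$; since $N^+$ is analytic there and tends to $\mathbb{I}$ as $\Lambda\to\infty$ by Theorem \ref{prop5}, both exterior analyticity and the normalization $H\to e^{-i(\Lambda X+4\Lambda^3T)\sigma_3}$ follow at once. For $|\Lambda|<1$ one substitutes $D=N^+e^{-i(\Lambda X+4\Lambda^3T)\sigma_3}Qe^{i(\Lambda X+4\Lambda^3T)\sigma_3}$, and the inner exponentials cancel against $g$, leaving $H=N^+e^{-i(\Lambda X+4\Lambda^3T)\sigma_3}Qe^{-2i\Lambda^{-1}\sigma_3}$, analytic on the punctured disk $0<|\Lambda|<1$. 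Continuity of the boundary values on $\Sigma_1$ from either side then transfers from the analogous property of $N^+$ and the analyticity of $g$ near the circle.

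The main obstacle, and the one place where the statement must be read with care, is the behavior at $\Lambda=0$: the conjugator injects the essential singularity $e^{-2i\Lambda^{-1}\sigma_3}$ into the interior, so the analyticity assertion is best understood as analyticity on $\mathbb{C}\setminus(\Sigma_1\cup\{0\})$, with the origin a \emph{prescribed} irregular singular point mirroring the one at $\Lambda=\infty$ encoded in the normalization. I would record the local form $H=N^+e^{-i(\Lambda X+4\Lambda^3T)\sigma_3}Qe^{-2i\Lambda^{-1}\sigma_3}$ near $0$ explicitly, since this controlled pair of irregular singularities at $0$ and $\infty$, joined by the constant jump $Q^{-1}$ on $\Sigma_1$, is precisely the data that the subsequent ODE derivation exploits. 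No estimate or solvability argument is needed here, as existence and uniqueness were already secured for $N^\pm$ in Theorem \ref{prop5} and are inherited by $H$ through the invertible transformation (\ref{H}).
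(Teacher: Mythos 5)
Your verification is correct and follows the same conjugation argument the paper uses implicitly: the paper states Riemann--Hilbert Problem 5 without proof, but the identical computation (multiplying by the scalar exponential so that the oscillatory jump collapses to the constant matrix $Q^{-1}$, then reading off normalization from the $|\Lambda|>1$ representation of $D$ in terms of $N^{+}$) is exactly the pattern already carried out for $P^{\pm}$ in the proof of Theorem \ref{prop5}. Your additional observation that the conjugator injects the essential singularity $e^{-2i\Lambda^{-1}\sigma_3}$ at the origin, so that the analyticity assertion must be read on $\mathbb{C}\setminus(\Sigma_1\cup\{0\})$ with $\Lambda=0$ a prescribed irregular singular point, is a correct and worthwhile sharpening of the paper's statement, consistent with the $\Lambda^{-1}$ and $\Lambda^{-2}$ terms that subsequently appear in $C(\Lambda;X,T)$.
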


As we discussed in the proof of Proposition \ref{prop5}, the matrix function $H(\Lambda; X,T)$ possesses the Lax pair:
\begin{align}\label{lax1-1}
\left\{\begin{aligned}
&H_X(\Lambda; X,T)=A(\Lambda;X,T)H(\Lambda; X,T),\vspace{0.05in}\\
&H_T(\Lambda; X,T)=B(\Lambda;X,T)H(\Lambda; X,T),
\end{aligned}\right.
\end{align}
where
\begin{align}\no
\begin{aligned}
&A(\Lambda;X,T)=\left[\!\!\begin{array}{cc}
-i\Lambda& \widehat{q}(X,T)  \vspace{0.05in}\\
-\widehat{q}^{*}(X,T) & i\Lambda
\end{array}\!\!\right],\vspace{0.05in}\\
&B(\Lambda;X,T)=-4i\Lambda^3\sigma_3+B^{[2]}(X,T)\Lambda^2+B^{[1]}(X,T)\Lambda+B^{[0]}(X,T),
\end{aligned}
\end{align}
with
\begin{align}\label{B2B1B0}
\begin{aligned}
&B^{[2]}(X,T)=\left[\!\!\begin{array}{cc}
0& 4\widehat{q}(X,T) \vspace{0.05in}\\
-4\widehat{q}^{*}(X,T) & 0
\end{array}\!\!\right],\vspace{0.05in}\\
&B^{[1]}(X,T)=\left[\!\!\begin{array}{cc}
2i|\widehat{q}(X,T)|^2& 2i\widehat{q}_X(X,T) \vspace{0.05in}\\
2i\widehat{q}_X^{*}(X,T) & -2i|\widehat{q}(X,T)|^2
\end{array}\!\!\right],\vspace{0.05in}\\
&B^{[0]}(X,T)=\left[\!\!\begin{array}{cc}
\widehat{q}_X^{*}(X,T)\widehat{q}(X,T)-\widehat{q}_X(X,T)\widehat{q}^{*}(X,T)& -\widehat{q}_{XX}(X,T)-2|\widehat{q}(X,T)|^2\widehat{q}(X,T)  \vspace{0.05in}\\
\widehat{q}_{XX}^*(X,T)+2|\widehat{q}(X,T)|^2\widehat{q}^{*}(X,T) & -\widehat{q}_X^{*}(X,T)\widehat{q}(X,T)+\widehat{q}_X(X,T)\widehat{q}^{*}(X,T)
\end{array}\!\!\right].
\end{aligned}
\end{align}

Rewrite $A(\Lambda;X,T)$ as $-i\Lambda\sigma_3+A^{[0]}(X,T)$, where
\bee\no
&A^{[0]}(X,T)=\left[\!\!\begin{array}{cc}
0& \widehat{q}(X,T) \vspace{0.05in}\\
-\widehat{q}^{*}(X,T) & 0
\end{array}\!\!\right]=\dfrac{1}{4}B^{[2]}(X,T).
\ene

Let the matrix $D(\Lambda;X,T)$ have the following Laurent expansion.
\bee\label{D-lau}
D(\Lambda;X,T)=\mathbb{I}+\sum_{j=1}^{\infty}D^{[j]}(X,T)\Lambda^{-j},\quad |\Lambda|>1.
\ene
Then, we have
\bee\label{q-D-fanyan}
q(X,T)=2iD^{[1]}_{12}(X,T),\quad q(X,T)^*=2iD^{[1]}_{21}(X,T).
\ene

Since the jump matrix of $H(\Lambda; X,T)$ is a constant matrix $Q$, there exists another Lax equation about $\Lambda$. To this end, we define
\bee\label{lax-L}
C(\Lambda; X,T):=H_{\Lambda}(\Lambda; X,T)H(\Lambda; X,T)^{-1},
\ene
where the subscripts denote the partial derivatives with respect to the variable $\Lambda$. $C(\Lambda; X,T)$ has no jump across $\Sigma_1$ and hence may be considered to be analytic in the whole complex plane about $\Lambda$.

According to Eq.~(\ref{H}), we have
\bee\label{C-no}
C(\Lambda; X,T)=D_{\Lambda}(\Lambda; X,T)D(\Lambda; X,T)^{-1}-i(X+12T\Lambda^2-2\Lambda^{-2})D(\Lambda; X,T)\sigma_3D(\Lambda; X,T)^{-1}.
\ene

Firstly, we provide the expression for $D(\Lambda;X,T)^{-1}$.
\begin{align}\label{N-1-2}
\begin{aligned}
D(\Lambda;X,T)^{-1}=&\mathbb{I}-D^{[1]}(X,T)\Lambda^{-1}+\left(D^{[1]}(X,T)^2-D^{[2]}(X,T)\right)\Lambda^{-2}+\bigg(D^{[1]}(X,T)D^{[2]}(X,T)\vspace{0.05in}\\
&+D^{[2]}(X,T)D^{[1]}(X,T)-D^{[3]}(X,T)-D^{[1]}(X,T)^3\bigg)\Lambda^{-3}+\bigg(D^{[1]}(X,T)D^{[3]}(X,T)\vspace{0.05in}\\
&-D^{[1]}(X,T)D^{[2]}(X,T)D^{[1]}(X,T)-D^{[1]}(X,T)^2D^{[2]}(X,T)+D^{[1]}(X,T)^4\vspace{0.05in}\\
&+D^{[2]}(X,T)^2-D^{[2]}(X,T)D^{[1]}(X,T)^2+D^{[3]}(X,T)D^{[1]}(X,T)-D^{[4]}(X,T)\bigg)\Lambda^{-4}\vspace{0.05in}\\
&+\mathcal{O}(\Lambda^{-5}),\quad \Lambda\rightarrow\infty.
\end{aligned}
\end{align}

Using Eq.~(\ref{D-lau}), we obtain
\begin{align}\label{C-Lau}
\begin{aligned}
C(\Lambda; X,T)=&C^{[2]}(X,T)\Lambda^2+C^{[1]}(X,T)\Lambda+C^{[0]}(X,T)\vspace{0.05in}\\
&+C^{[-1]}(X,T)\Lambda^{-1}+C^{[-2]}(X,T)\Lambda^{-2}+\mathcal{O}(\Lambda^{-3}),\quad \Lambda\rightarrow\infty,
\end{aligned}
\end{align}
where
\begin{align}\no
C^{[2]}(X,T)=&-12iT\sigma_3,\vspace{0.05in}\\
C^{[1]}(X,T)=&12iT[\sigma_3,D^{[1]}(X,T)],\vspace{0.05in} \no\\
C^{[0]}(X,T)=&-iX\sigma_3+12iT\left([\sigma_3,D^{[2]}(X,T)]+[D^{[1]}(X,T),\sigma_3D^{[1]}(X,T)]\right),\vspace{0.05in}\no\\
C^{[-1]}(X,T)=&iX[\sigma_3,D^{[1]}(X,T)]+12iT\bigg([\sigma_3,D^{[3]}(X,T)]+[D^{[2]}(X,T),\sigma_3D^{[1]}(X,T)]\vspace{0.05in}\no\\
&+[\sigma_3D^{[1]}(X,T)^2,D^{[1]}(X,T)]+[D^{[1]}(X,T),\sigma_3D^{[2]}(X,T)]\bigg),\vspace{0.05in}\no\\
C^{[-2]}(X,T)=&-D^{[1]}(X,T)+2i\sigma_3+iX\left([\sigma_3,D^{[2]}(X,T)]+[D^{[1]}(X,T),\sigma_3D^{[1]}(X,T)]\right)\vspace{0.05in}\no\\
&+12iT\bigg([\sigma_3,D^{[4]}(X,T)]+[D^{[3]}(X,T),\sigma_3D^{[1]}(X,T)]\vspace{0.05in}\no\\
&+[\sigma_3D^{[1]}(X,T)D^{[2]}(X,T),D^{[1]}(X,T)]+[\sigma_3D^{[1]}(X,T)^2,D^{[2]}(X,T)]\vspace{0.05in}\no\\
&+[D^{[1]}(X,T),\sigma_3D^{[1]}(X,T)^3]+[D^{[2]}(X,T),\sigma_3D^{[2]}(X,T)]\vspace{0.05in}\no\\
&+[\sigma_3D^{[2]}(X,T)D^{[1]}(X,T),D^{[1]}(X,T)]+[D^{[1]}(X,T),\sigma_3D^{[3]}(X,T)]\bigg).\no
\end{align}

Moreover, we have the Taylor expansions at the origin of $D(\Lambda;X,T)$ and $D(\Lambda;X,T)^{-1}$:
\begin{align}\label{D-taylor}
\begin{aligned}
&D(\Lambda;X,T)=D(0;X,T)+D_\Lambda(0;X,T)\Lambda+\mathcal{O}(\Lambda^2),\quad \Lambda\rightarrow0,\vspace{0.05in}\\
&D(\Lambda;X,T)^{-1}=D(0;X,T)^{-1}-D(0;X,T)^{-1}D_\Lambda(0;X,T)D(0;X,T)^{-1}\Lambda+\mathcal{O}(\Lambda^2),\quad \Lambda\rightarrow0.
\end{aligned}
\end{align}

Substituting Eq.~(\ref{D-taylor}) into Eq.~(\ref{C-no}), we obtain
\begin{align}\label{C-taylor}
\begin{aligned}
C(\Lambda;X,T)=&2iD(0;X,T)\sigma_3D(0;X,T)^{-1}\Lambda^{-2}\vspace{0.05in}\\
&+2i[D_{\Lambda}(0;X,T)D(0;X,T)^{-1},D(0;X,T)\sigma_3D(0;X,T)^{-1}]\Lambda^{-1}+\mathcal{O}(1),\quad \Lambda\rightarrow0.
\end{aligned}
\end{align}

Comparing Eqs.~(\ref{C-Lau}) and (\ref{C-taylor}), we have
\bee\label{C-final}
C(\Lambda; X,T)=C^{[2]}(X,T)\Lambda^2+C^{[1]}(X,T)\Lambda+C^{[0]}(X,T)+C^{[-1]}(X,T)\Lambda^{-1}+C^{[-2]}(X,T)\Lambda^{-2}.
\ene

Using Eq.~(\ref{C-taylor}), we have another representation of $C^{[-2]}(X,T)$:
\bee\label{C-2-re}
C^{[-2]}(X,T)=2iD(0;X,T)\sigma_3D(0;X,T)^{-1},
\ene
which shows that
\bee\label{C-2-re1}
\mathrm{tr}\l(C^{[-2]}(X,T)\r)=0,\quad \det\l(C^{[-2]}(X,T)\r)=4.
\ene

Applying Proposition \ref{prop-sch} and Eq.~(\ref{C-2-re1}), we have
\bee\no
C^{[-2]}(X,T)=\left[\!\!\begin{array}{cc}
i\alpha(X,T)& i\beta(X,T)  \vspace{0.05in}\\
i\beta(X,T)^* & -i\alpha(X,T)
\end{array}\!\!\right],
\ene
where $\alpha(X,T)$ is a real-valued function and it satisfies $\alpha(X,T)^2+|\beta(X,T)|^2=4$.

Then, we obtain a new Lax system:
\bee\label{lax-L-1}
H_{\Lambda}(\Lambda; X,T)=C(\Lambda; X,T)H(\Lambda; X,T).
\ene

As we discussed in the proof of Proposition \ref{prop5}, we rewrite $C^{[1]}(X,T)$,$C^{[0]}(X,T)$ and $C^{[-1]}(X,T)$ as follows:
\begin{align}\label{re-C0-C-1}
\begin{aligned}
C^{[1]}(X,T)=&3TB^{[2]}(X,T),\vspace{0.05in}\\
C^{[0]}(X,T)=&-iX\sigma_3+3TB^{[1]}(X,T),\vspace{0.05in}\\
C^{[-1]}(X,T)=&\frac{X}{4}B^{[2]}(X,T)+3TB^{[0]}(X,T),
\end{aligned}
\end{align}
where $B^{[2]}(X,T),B^{[1]}(X,T)$ and $B^{[0]}(X,T)$ are defined by Eq.~(\ref{B2B1B0}).

\subsubsection{Ordinary differential equations in $X$ and proof of Theorem \ref{new-th-1}}

The matrix-valued function $H(\Lambda; X,T)$ possesses the following $X$-Lax pair:
\begin{align}\label{lax-new-1}
\left\{\begin{aligned}
&H_X(\Lambda; X,T)=A(\Lambda;X,T)H(\Lambda; X,T),\vspace{0.05in}\\
&H_\Lambda(\Lambda; X,T)=C(\Lambda;X,T)H(\Lambda; X,T),
\end{aligned}\right.
\end{align}
where $A(\Lambda;X,T)$ and $C(\Lambda;X,T)$ are defined by Eqs.~(\ref{lax1-1}) and (\ref{C-final}) respectively. The compatibility condition of $X$-Lax pair (\ref{lax-new-1}) becomes the zero-curvature condition:
\bee\label{zero-curvature}
C_X(\Lambda; X,T)-A_\Lambda(\Lambda; X,T)+[C(\Lambda; X,T),A(\Lambda; X,T)]=0.
\ene

The left-hand side of Eq.~(\ref{zero-curvature}) is a Laurent polynomial in $\Lambda$ with powers ranging from $\Lambda^{-2}$ through $\Lambda^{3}$, and therefore its coefficients must be equal to zero. Below, we will discuss the equations satisfied by the coefficients of $\Lambda$ respectively.

The coefficient of $\Lambda^{3}$ is
\bee\no
[i\sigma_3,C^{[2]}(X,T)]=0,
\ene
which holds automatically because the off-diagonal element of $C^{[2]}(X,T)$ is zero.

The coefficient of $\Lambda^{2}$ is
\bee\no
C_X^{[2]}(X,T)+[C^{[2]}(X,T),A^{[0]}(X,T)]+[i\sigma_3,C^{[1]}(X,T)]=0,
\ene
which holds automatically because of Eq.~(\ref{q-D-fanyan}).

The coefficient of $\Lambda$ is
\bee\no
C_X^{[1]}(X,T)+[C^{[1]}(X,T),A^{[0]}(X,T)]+[i\sigma_3,C^{[0]}(X,T)].
\ene

We note that
\begin{align}
\begin{aligned}
&C_X^{[1]}(X,T)+[C^{[1]}(X,T),A^{[0]}(X,T)]+[i\sigma_3,C^{[0]}(X,T)]\vspace{0.05in}\\
=&3TB_X^{[2]}(X,T)+[3TB^{[2]}(X,T),\frac14B^{[2]}(X,T)]+i[\sigma_3,-iX\sigma_3+3TB^{[1]}(X,T)]\vspace{0.05in}\\
=&3TB_X^{[2]}(X,T)+i[\sigma_3,3TB^{[1]}(X,T)]\vspace{0.05in}\\
=&0.
\end{aligned}
\end{align}
Then, the coefficient of $\Lambda$ is equal to zero.

The coefficient of $\Lambda^0$ is
\bee\no
C_X^{[0]}(X,T)+i\sigma_3+[C^{[0]}(X,T),A^{[0]}(X,T)]+[i\sigma_3,C^{[-1]}(X,T)].
\ene
We note that
\begin{align}
\begin{aligned}
&C_X^{[0]}(X,T)+i\sigma_3+[C^{[0]}(X,T),A^{[0]}(X,T)]+[i\sigma_3,C^{[-1]}(X,T)]\vspace{0.05in}\\
=&-i\sigma_3+3TB_X^{[1]}(X,T)+i\sigma_3+[-iX\sigma_3+3TB^{[1]}(X,T),\frac14B^{[2]}(X,T)]\vspace{0.05in}\\
&+[i\sigma_3,\frac{X}{4}B^{[2]}(X,T)+3TB^{[0]}(X,T)]\vspace{0.05in}\\
=&\, 3TB_X^{[1]}(X,T)+3T[B^{[1]}(X,T),\frac14B^{[2]}(X,T)]+3T[i\sigma_3,B^{[0]}(X,T)]\vspace{0.05in}\\
=&0.
\end{aligned}
\end{align}
Then, the coefficient of $\Lambda^0$ is equal to zero.

The coefficient of $\Lambda^{-1}$ is
\bee\no
C_X^{[-1]}(X,T)+[C^{[-1]}(X,T),A^{[0]}(X,T)]+[i\sigma_3,C^{[-2]}(X,T)]=0,
\ene
which is equivalent to
\begin{align}\label{eq1}
\begin{aligned}
&\widehat{q}+X\widehat{q}_X-3T\widehat{q}_{XXX}-18T\widehat{q}_X|\widehat{q}|^2-2\beta=0,\vspace{0.05in}\\
&-\widehat{q}^*-X\widehat{q}^*_X+3T\widehat{q}^*_{XXX}+18T|\widehat{q}|^2\widehat{q}^*+2\beta^*=0.
\end{aligned}
\end{align}

The coefficient of $\Lambda^{-2}$ is
\bee\no
C_X^{[-2]}(X,T)+[C^{[-2]}(X,T),A^{[0]}(X,T)]=0,
\ene
which is equivalent to
\begin{align}\label{eq2}
\begin{aligned}
&\alpha_X(X,T)-\beta(X,T) \widehat{q}^*(X,T)-\beta^*(X,T)\widehat{q}(X,T)=0,\vspace{0.05in}\\
&-\alpha_X(X,T)+\beta^*(X,T)\widehat{q}(X,T)+\beta(X,T)\widehat{q}^*(X,T)=0,\vspace{0.05in}\\
&\beta_X(X,T)+2\alpha(X,T)\widehat{q}(X,T)=0,\vspace{0.05in}\\
&\beta_X^*(X,T)+2\alpha(X,T)\widehat{q}^*(X,T)=0.
\end{aligned}
\end{align}

Taking the derivative of $X$ on both sides of Eq.~(\ref{eq1}) simultaneously, we have
\bee\label{eq3}
2\widehat{q}_X+X\widehat{q}_{XX}
-3T(\widehat{q}_{XXXX}-+6\widehat{q}_{XX}|\widehat{q}|^2
+6\widehat{q}^2\widehat{q}^*+6|\widehat{q}|^2\widehat{q})-2\beta_X=0.
\ene

Substituting Eqs.~(\ref{eq1}) and (\ref{eq3}) into Eq.~(\ref{eq2}), we obtain
\begin{align}\label{eq4}
\begin{aligned}
&2\widehat{q}_X+X\widehat{q}_{XX}
-3T(\widehat{q}_{XXXX}+3(\widehat{q}^2)_{XX}\widehat{q}^*+6|\widehat{q}_X|^2\widehat{q})+4\alpha \widehat{q}=0,\vspace{0.05in}\\
&2\alpha_X-2|\widehat{q}|^2
-X(|\widehat{q}|^2)_X
+3T(\widehat{q}^*\widehat{q}_{XXX}+\widehat{q}\widehat{q}_{XXX}^*+3(|\widehat{q}|^4)_X)=0.
\end{aligned}
\end{align}

Eliminating $\alpha(X,T)$, we have
\begin{align}\label{eq5}
\begin{aligned}
&3\widehat{q}\widehat{q}_{XX}
-2\widehat{q}_X^2+4\widehat{q}^2|\widehat{q}|^2+
+X\l(\widehat{q}\widehat{q}_{XXX}-\widehat{q}_X\widehat{q}_{XX}+2\widehat{q}\widehat{q}_X|\widehat{q}|^2+2\widehat{q}^3\widehat{q}_X^*\r)\vspace{0.05in}\\
&+3T\bigg(\widehat{q}_X\widehat{q}_{XXXX}-\widehat{q}\widehat{q}_{XXXXX}+6\widehat{q}_X^3\widehat{q}^*-6\widehat{q}_{XXX}\widehat{q}|\widehat{q}|^2-12\widehat{q}_{XX}\widehat{q}^2\widehat{q}_X^*-12\widehat{q}_X\widehat{q}_{XX}|\widehat{q}|^2\vspace{0.05in}\\
&-6\widehat{q}^2\widehat{q}_X\widehat{q}_{XX}^*-6|\widehat{q}_X|^2\widehat{q}_X\widehat{q}-2\widehat{q}^2\widehat{q}^*\widehat{q}_{XXX}-2\widehat{q}^3\widehat{q}_{XXX}^*-12|\widehat{q}|^4\widehat{q}\widehat{q}_X-12|\widehat{q}|^2\widehat{q}^3\widehat{q}_X^*\bigg)=0.
\end{aligned}
\end{align}

Using the conservation law $\alpha(X,T)^2+|\beta(X,T)|^2=4$ and Eq.~(\ref{eq2}), we have
\bee\label{eq6}
|\beta_X|^2+4|\widehat{q}|^2|\beta|^2-16|\widehat{q}|^2=0.
\ene

Substituting Eq.~(\ref{eq1}) into Eq.~(\ref{eq6}), we obtain
\bee\label{eq7}
\left|\left(\widehat{q}+X\widehat{q}_X-3T\widehat{q}_{XXX}-18T\widehat{q}_X|\widehat{q}|^2\right)_X\right|^2+4|\widehat{q}|^2\left|\widehat{q}+X\widehat{q}_X-3T\widehat{q}_{XXX}-18T\widehat{q}_X|\widehat{q}|^2\right|^2-64|\widehat{q}|^2=0.
\ene

According to Proposition \ref{prop-real}, we know that $\widehat{q}^{\pm}(X,T)=\widehat{q}^{\pm}(X,T)^*$. Substituting it into Eqs.~(\ref{eq5}) and (\ref{eq7}), we obtain
\begin{align}\label{eq8}
\begin{aligned}
&3\widehat{q}\widehat{q}_{XX}-X\widehat{q}_X\widehat{q}_{XX}+X\widehat{q}\widehat{q}_{XXX}-2\widehat{q}_X^2+4\widehat{q}^4+4X\widehat{q}^3\widehat{q}_X\vspace{0.05in}\\
&+3T\bigg(\widehat{q}_X\widehat{q}_{XXXX}-\widehat{q}\widehat{q}_{XXXXX}-6\widehat{q}_{XXX}\widehat{q}^3-24\widehat{q}^2\widehat{q}_X
\widehat{q}_{XX}-6\widehat{q}_X\widehat{q}_{XX}\widehat{q}^2-4\widehat{q}^3\widehat{q}_{XXX}-24\widehat{q}^5\widehat{q}_X\bigg)=0,\vspace{0.05in}\\
&\bigg(\left(\widehat{q}+X\widehat{q}_X-3T\widehat{q}_{XXX}-18T\widehat{q}_X\widehat{q}^2\right)_X\bigg)^2+4\widehat{q}^2\left(\widehat{q}+X\widehat{q}_X-3T\widehat{q}_{XXX}-18T\widehat{q}_X\widehat{q}^2\right)^2-64\widehat{q}^2=0.
\end{aligned}
\end{align}

When $T=0$, Eq.~(\ref{lax-new-1}) becomes
\begin{align}\label{lax1-2}
\begin{aligned}
&H_X(\Lambda; X,0)=\left[\!\!\begin{array}{cc}
-i\Lambda& \widehat{q}(X,0)  \vspace{0.05in}\\
-\widehat{q}(X,0) & i\Lambda
\end{array}\!\!\right]H(\Lambda; X,0),\vspace{0.05in}\\
&H_\Lambda(\Lambda; X,0)=\left[\!\!\begin{array}{cc}
-iX-\dfrac{i(2\widehat{q}_X+X\widehat{q}_{XX})}{4\widehat{q}}\Lambda^{-2}& X\widehat{q}\Lambda^{-1}+\dfrac{i}{2}(\widehat{q}+X\widehat{q}_X)\Lambda^{-2}  \vspace{0.05in}\\
-X\widehat{q}\Lambda^{-1}+\dfrac{i}{2}(\widehat{q}+X\widehat{q}_X)\Lambda^{-2} & iX+\dfrac{i(2\widehat{q}_X+X\widehat{q}_{XX})}{4\widehat{q}}\Lambda^{-2}
\end{array}\!\!\right]H(\Lambda; X,0),
\end{aligned}
\end{align}
which is the Lax pair of fist member of the Painlev\'{e}-III hierarchy.

Eq.~(\ref{eq8}) with $T=0$ becomes
\begin{align}\label{eq9}
\begin{aligned}
&3\widehat{q}\widehat{q}_{XX}
+X(\widehat{q}\widehat{q}_{XXX}
-\widehat{q}_X\widehat{q}_{XX}
+4\widehat{q}^3\widehat{q}_X)
-2\widehat{q}_X^2+4\widehat{q}^4
=0,\vspace{0.05in}\\
&\bigg(\left(\widehat{q}+X\widehat{q}_X\right)_X\bigg)^2+4\widehat{q}^2\left(\widehat{q}+X\widehat{q}_X\right)^2-64\widehat{q}^2=0.
\end{aligned}
\end{align}

Introducing $f:=\partial_X\ln\widehat{q}(X,0)$, and dividing both of these equations (\ref{eq9}) through by $\widehat{q}^2$, they can be written as
\begin{align}\label{eq10}
\begin{aligned}
&Xf_{XX}+3f_X+2Xff_X+f^2+4X\widehat{q}\widehat{q}_X+4\widehat{q}^2=0,\vspace{0.05in}\\
&(2f+Xf_X+Xf^2)^2+(Xf+1)(4X\widehat{q}\widehat{q}_X+4\widehat{q}^2)=0.
\end{aligned}
\end{align}

Eliminating $4X\widehat{q}\widehat{q}_X+4\widehat{q}^2$, we get a second-order equation about $X$:
\bee
X^2(ff_{XX}-f_X^2-f^4)+X(f_{XX}+ff_X-3f^3)+3f_X-3f^2+64=0.
\ene
Thus the proof of Theorem \ref{new-th-1} is completed.

\subsubsection{Ordinary differential equations in $T$ and proof of Theorem \ref{new-th-2}}

The matrix-valued function $H(\Lambda; X,T)$ possesses the following $T$-Lax pair:
\begin{align}\label{lax-new-2}
\left\{\begin{aligned}
&H_T(\Lambda; X,T)=B(\Lambda;X,T)H(\Lambda; X,T),\vspace{0.05in}\\
&H_\Lambda(\Lambda; X,T)=C(\Lambda;X,T)H(\Lambda; X,T),
\end{aligned}\right.
\end{align}
where $B(\Lambda;X,T)$ and $C(\Lambda;X,T)$ are defined by Eqs.~(\ref{lax1-1}) and (\ref{C-final}) respectively. The compatibility condition of $T$-Lax pair (\ref{lax-new-2}) becomes the zero-curvature condition:
\bee\label{zero-curvature1}
C_T(\Lambda; X,T)-B_\Lambda(\Lambda; X,T)+[C(\Lambda; X,T),B(\Lambda; X,T)]=0.
\ene

The left-hand side of Eq.~(\ref{zero-curvature1}) is a Laurent polynomial in $\Lambda$ with powers ranging from $\Lambda^{-2}$ through $\Lambda^{5}$, and therefore its coefficients must be equal to zero. Below, we will discuss the equations satisfied by the coefficients of $\Lambda$ respectively.

The coefficient of $\Lambda^{5}$ is
\bee\no
[4i\sigma_3,C^{[2]}(X,T)]=0,
\ene
which holds automatically because the off-diagonal element of $C^{[2]}(X,T)$ is zero.

The coefficient of $\Lambda^{4}$ is
\bee\no
[C^{[2]}(X,T),B^{[2]}(X,T)]+[4i\sigma_3,C^{[1]}(X,T)]=0,
\ene
which holds automatically because of formula (\ref{q-D-fanyan}).

The coefficient of $\Lambda^{3}$ is
\bee\no
[C^{[2]}(X,T),B^{[1]}(X,T)]+[C^{[1]}(X,T),B^{[2]}(X,T)]+[4i\sigma_3,C^{[0]}(X,T)].
\ene

We note that
\begin{align}\no
\begin{aligned}
&[C^{[2]}(X,T),B^{[1]}(X,T)]+[C^{[1]}(X,T),B^{[2]}(X,T)]+[4i\sigma_3,C^{[0]}(X,T)]\vspace{0.05in}\\
=&[-12iT\sigma_3,B^{[1]}(X,T)]+[3TB^{[2]}(X,T),B^{[2]}(X,T)]+[4i\sigma_3,-iX\sigma_3+3TB^{[1]}(X,T)]\vspace{0.05in}\\
=&[-12iT\sigma_3,B^{[1]}(X,T)]+[4i\sigma_3,3TB^{[1]}(X,T)]\vspace{0.05in}\\
=&0.
\end{aligned}
\end{align}

Then, the coefficient of $\Lambda^{3}$ is equal to zero.

The coefficient of $\Lambda^{2}$ is
\begin{align}\no
\begin{aligned}
&C^{[2]}_T(X,T)+12i\sigma_3+[C^{[2]}(X,T),B^{[0]}(X,T)]+[C^{[1]}(X,T),B^{[1]}(X,T)]\vspace{0.05in}\\
&+[C^{[0]}(X,T),B^{[2]}(X,T)]+[4i\sigma_3,C^{[-1]}(X,T)].
\end{aligned}
\end{align}

We note that
\begin{align}\no
\begin{aligned}
&C^{[2]}_T(X,T)+12i\sigma_3+[C^{[2]}(X,T),B^{[0]}(X,T)]+[C^{[1]}(X,T),B^{[1]}(X,T)]\vspace{0.05in}\\
&+[C^{[0]}(X,T),B^{[2]}(X,T)]+[4i\sigma_3,C^{[-1]}(X,T)]\vspace{0.05in}\\
=&[-12iT\sigma_3,B^{[0]}(X,T)]+[3TB^{[2]}(X,T),B^{[1]}(X,T)]+[-iX\sigma_3+3TB^{[1]}(X,T),B^{[2]}(X,T)]\vspace{0.05in}\\
&+[4i\sigma_3,\frac{X}{4}B^{[2]}(X,T)+3TB^{[0]}(X,T)]\vspace{0.05in}\\
=&0.
\end{aligned}
\end{align}

Then, the coefficient of $\Lambda^{2}$ is equal to zero.

The coefficient of $\Lambda^{1}$ is
\begin{align}\no
\begin{aligned}
&C^{[1]}_T(X,T)-2B^{[2]}(X,T)+[C^{[1]}(X,T),B^{[0]}(X,T)]+[C^{[0]}(X,T),B^{[1]}(X,T)]\vspace{0.05in}\\
&+[C^{[-1]}(X,T),B^{[2]}(X,T)]+[4i\sigma_3,C^{[-2]}(X,T)],
\end{aligned}
\end{align}
which is equivalent to
\begin{align}\label{1eq}
\begin{aligned}
&\widehat{q}+X\widehat{q}_X+3T\widehat{q}_T-2\beta=0,\vspace{0.05in}\\
&-\widehat{q}^*-X\widehat{q}^*_X-3T\widehat{q}_T^*+2\beta^*=0.
\end{aligned}
\end{align}

The coefficient of $\Lambda^{0}$ is
\begin{align}\no
\begin{aligned}
&2B^{[1]}(X,T)+3TB_T^{[1]}(X,T)+[-iX\sigma_3,B^{[0]}(X,T)]+[\frac{X}{4}B^{[2]}(X,T),B^{[1]}(X,T)]\vspace{0.05in}\\
&+[C^{[-2]}(X,T),B^{[2]}(X,T)],
\end{aligned}
\end{align}
which is equivalent to
\begin{align}\label{2eq}
\begin{aligned}
&2|\widehat{q}|^2+X(\widehat{q}\widehat{q}_X^*+\widehat{q}_X\widehat{q}^*)-2(\beta \widehat{q}^*+\widehat{q}\beta^*)+3T(\widehat{q}_T\widehat{q}^*+\widehat{q}\widehat{q}_T^*)=0,\vspace{0.05in}\\
&2\widehat{q}_X
+X\widehat{q}_{XX}+3T\widehat{q}_{XT}
+4\alpha \widehat{q}=0.
\end{aligned}
\end{align}

The coefficient of $\Lambda^{-1}$ is
\begin{align}\no
\begin{aligned}
&\frac{X}{4}B_T^{[2]}(X,T)+3B^{[0]}(X,T)+3TB_T^{[0]}(X,T)+\frac{X}{4}[B^{[2]}(X,T),B^{[0]}(X,T)]\vspace{0.05in}\\
&+[C^{[-2]}(X,T),B^{[1]}(X,T)],
\end{aligned}
\end{align}
which is equivalent to
\begin{align}\label{3eq}
\begin{aligned}
&3(\widehat{q}_X^*\widehat{q}-\widehat{q}_X\widehat{q}^*)+3T(\widehat{q}_{XT}^*\widehat{q}+\widehat{q}_X^*\widehat{q}_T-\widehat{q}_{XT}\widehat{q}^*-\widehat{q}_X\widehat{q}_T^*)+X(\widehat{q}_{XX}^*\widehat{q}-\widehat{q}_{XX}\widehat{q}^*)+2\widehat{q}_X\beta^*-2\beta \widehat{q}_X^*=0,\vspace{0.05in}\\
&X\widehat{q}_T-3(\widehat{q}_{XX}+2|\widehat{q}|^2\widehat{q})-3T(\widehat{q}_{XXT}+4\widehat{q}_T|\widehat{q}|^2+2\widehat{q}^2\widehat{q}_T^*)+X(2|\widehat{q}|^2\widehat{q}_X-2\widehat{q}_X^*\widehat{q}^2)+4\beta|\widehat{q}|^2-4\alpha \widehat{q}_X=0.
\end{aligned}
\end{align}

The coefficient of $\Lambda^{-2}$ is
\bee\no
C_T^{[-2]}(X,T)+[C^{[-2]}(X,T),B^{[0]}(X,T)],
\ene
which is equivalent to
\begin{align}\label{4eq}
\begin{aligned}
&\alpha_T+\beta(\widehat{q}_{XX}^*+2|\widehat{q}|^2\widehat{q}^*)+\beta^*(\widehat{q}_{XX}+2|\widehat{q}|^2\widehat{q})=0,\vspace{0.05in}\\
&\beta_T-2\alpha(\widehat{q}_{XX}+2|\widehat{q}|^2\widehat{q})+2\beta(-\widehat{q}_X^*\widehat{q}+\widehat{q}_X\widehat{q}^*)=0.
\end{aligned}
\end{align}

According to Proposition
\ref{prop-real}, we have
\begin{align}\label{5eq}
\begin{aligned}
&\widehat{q}+X\widehat{q}_X+3T\widehat{q}_T-2\beta=0,\vspace{0.05in}\\
&\widehat{q}^2+X\widehat{q}\widehat{q}_X-(\beta \widehat{q}+\widehat{q}\beta^*)+3T\widehat{q}_T\widehat{q}=0,\vspace{0.05in}\\
&2\widehat{q}_X+X\widehat{q}_{XX}+3T\widehat{q}_{XT}+4\widehat{q}\alpha=0,\vspace{0.05in}\\
&\widehat{q}_X\beta^*-\beta \widehat{q}_X=0,\vspace{0.05in}\\
&X\widehat{q}_T-3(\widehat{q}_{XX}+2\widehat{q}^3)-3T(\widehat{q}_{XXT}+6\widehat{q}_T\widehat{q}^2)+4\beta \widehat{q}^2-4\alpha \widehat{q}_X=0,\vspace{0.05in}\\
&\alpha_T+(\beta^*+\beta)(\widehat{q}_{XX}+2\widehat{q}^3)=0,\vspace{0.05in}\\
&\beta_T-2\alpha(\widehat{q}_{XX}+2\widehat{q}^3)=0.
\end{aligned}
\end{align}

Simplifying Eqs.~(\ref{5eq}), yields
\begin{align}\label{6eq}
\begin{aligned}
&\widehat{q}+X\widehat{q}_X+3T\widehat{q}_T-2\beta=0,\vspace{0.05in}\\
&4\alpha \widehat{q}+2\beta_X=0,\vspace{0.05in}\\
&X\widehat{q}_T-3(\widehat{q}_{XX}+2\widehat{q}^3)-3T(\widehat{q}_{XXT}+6\widehat{q}_T\widehat{q}^2)+4\beta \widehat{q}^2-4\alpha \widehat{q}_X=0,\vspace{0.05in}\\
&\alpha_T+2\beta(\widehat{q}_{XX}+2\widehat{q}^3)=0,\vspace{0.05in}\\
&\beta_T-2\alpha(\widehat{q}_{XX}+2\widehat{q}^3)=0,\vspace{0.05in}\\
&\alpha(X,T)^2+\beta(X,T)^2=4.
\end{aligned}
\end{align}
Thus the proof of Theorem \ref{new-th-2} is completed.

\section{Asymptotic behaviors of the near-field limit in $(X, T)$-space}

\subsection{Asymptotic behavior
of $\widehat{q}^{\pm}(X,T)$ for large $X$}

We will study the asymptotic behavior of $\widehat{q}^{\pm}$ when $X$ is large. To this end, let's make the following transformation:
\bee\no
X:=\sigma|X|\,\, (\sigma=\pm 1),\quad T:=a|X|^2,\quad \Lambda:=|X|^{-\frac12}z.
\ene

The phase conjugating the jump matrix for $D^{\pm}(\Lambda;X,T)$ can be rewritten as
\bee
\Lambda X+4\Lambda^3T\pm2\Lambda^{-1}=|X|^{\frac12}(\sigma z+4az^3\pm2z^{-1}).
\ene

We need to study four situations, namely $|X|^{\frac12}(z+4az^3+2z^{-1})$, $|X|^{\frac12}(z+4az^3-2z^{-1})$, $|X|^{\frac12}(-z+4az^3+2z^{-1})$, and $|X|^{\frac12}(-z+4az^3-2z^{-1})$. According to Propositions \ref{sym1} and \ref{sym-prop2}, it is sufficient to consider $\widehat{q}^+(X,T)$ as $X\rightarrow\infty$, namely, $\sigma=1$. Defining
\bee
Y(z;X,a):=D^{+}(X^{-\frac12}z;X,aX^2),\quad X>0.
\ene

\begin{prop} Riemann-Hilbert Problem 6.

Let $(X,T)$ be in compact subsets of $\mathbb{R}^2$ and $\Sigma_2$ is an arbitrary Jordan curve surrounding $z=0$. Find a $2\times 2$ matrix $Y(z;X,a)$ that satisfies the following properties:

\begin{itemize}

\item{} Analyticity: $Y(z;X,a)$ is analytic in $\mathbb{C}\setminus\Sigma_2$ and takes continuous boundary values on $\Sigma_2$.

\item{} Jump conditions: Assuming clockwise orientation of $\Sigma_2$, the boundary values on the jump contour $\Sigma_2$ are related as:
\bee\label{RH45}
Y_+(z;X,a)=Y_-(z;X,a)e^{-iX^{\frac12}(z+4az^3+2z^{-1})\sigma_3}\widehat{Q}^{-1}e^{iX^{\frac12}(z+4az^3+2z^{-1})\sigma_3},\quad z\in\Sigma_2.
\ene

\item{} Normalization: $Y(z;X,a)$ tends to the identity matrix as $z\rightarrow\infty$.

\end{itemize}
\end{prop}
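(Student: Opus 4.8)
The plan is to obtain Riemann-Hilbert Problem 6 directly from Riemann-Hilbert Problem 4 for $D^{+}(\Lambda;X,T)$ by the change of variables $\Lambda=X^{-\frac12}z$, $T=aX^2$, followed by a contour deformation. Since Propositions \ref{sym1} and \ref{sym-prop2} already reduce the asymptotic problem to the $+$ case with $\sigma=1$, it suffices to track how the three defining properties of $D^{+}$ transform under $Y(z;X,a):=D^{+}(X^{-\frac12}z;X,aX^2)$, $X>0$, and then to free up the jump contour.

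First I would dispose of analyticity and normalization, which are immediate. Because $D^{+}(\cdot;X,T)$ is analytic off $\Sigma_1=\{|\Lambda|=1\}$ with continuous boundary values, the composition $Y(\cdot;X,a)$ is analytic off the circle $\{|z|=X^{\frac12}\}$ with inherited continuous boundary values, and $D^{+}\to\mathbb{I}$ as $\Lambda\to\infty$ gives $Y\to\mathbb{I}$ as $z\to\infty$. The only genuine computation is the phase: substituting $\Lambda=X^{-\frac12}z$ and $T=aX^2$ into $\Lambda X+4\Lambda^3T+2\Lambda^{-1}$ produces $X^{\frac12}(z+4az^3+2z^{-1})=X^{\frac12}\Theta(z;a)$ in the notation of Theorem \ref{Big-X-q}. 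Hence the jump relation (\ref{RH4}) for $D^{+}$ becomes
\[
Y_+(z;X,a)=Y_-(z;X,a)\,e^{-iX^{\frac12}(z+4az^3+2z^{-1})\sigma_3}Q^{-1}e^{iX^{\frac12}(z+4az^3+2z^{-1})\sigma_3}
\]
on $\{|z|=X^{\frac12}\}$ with the clockwise orientation preserved, which is exactly (\ref{RH45}) with $\widehat{Q}=Q$.

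The substantive step, and the main obstacle, is replacing the $X$-dependent circle $\{|z|=X^{\frac12}\}$ by an arbitrary Jordan curve $\Sigma_2$ surrounding $z=0$. The enabling observation is that the jump matrix $J(z):=e^{-iX^{\frac12}(z+4az^3+2z^{-1})\sigma_3}Q^{-1}e^{iX^{\frac12}(z+4az^3+2z^{-1})\sigma_3}$ is the conjugation of the \emph{constant} matrix $Q^{-1}$ by $e^{\mp iX^{\frac12}(z+4az^3+2z^{-1})\sigma_3}$; since $z+4az^3+2z^{-1}$ is a Laurent polynomial, $J(z)$ is single-valued and analytic on all of $\mathbb{C}\setminus\{0\}$, with $\det J\equiv 1$ so that $J$ is invertible there. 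I would therefore deform inside the punctured plane: in the annular region between the circle and $\Sigma_2$, redefine $Y$ by right-multiplication by $J^{\pm1}$, the sign chosen according to which side of the original circle the region lies, so that the jump across the circle is absorbed while a jump $J$ across $\Sigma_2$ is produced. Because $J$ is analytic and invertible throughout the annulus, the redefined function stays analytic off $\Sigma_2$; the only singular points of $J$, namely $z=0$ and $z=\infty$, lie respectively inside every admissible $\Sigma_2$ and outside it, so no spurious singularity is introduced and the normalization at infinity is untouched.

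I expect the delicate point to be the bookkeeping of this deformation when $\Sigma_2$ crosses $\{|z|=X^{\frac12}\}$, so that the annular region splits into interior and exterior pieces: one must check that the piecewise right-multiplication by $J$ and $J^{-1}$ is consistent across these pieces, that the clockwise orientation is preserved, and that boundedness of $Y$ up to $\Sigma_2$ is retained. Once this is verified, the three properties of Riemann-Hilbert Problem 6 hold verbatim, and the freedom in the choice of $\Sigma_2$ is precisely what will later permit routing the contour through the saddle points of $\Theta(z;a)$ in the large-$X$ steepest-descent analysis.
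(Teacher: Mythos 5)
Your proposal is correct and follows essentially the same route as the paper, which simply states Riemann--Hilbert Problem 6 as the result of substituting $\Lambda=X^{-\frac12}z$, $T=aX^2$ into Riemann--Hilbert Problem 4 for $D^{+}$ (with $\sigma=1$ justified by the symmetries) and reading $\widehat{Q}$ as $Q$. Your additional care about deforming the circle $\{|z|=X^{\frac12}\}$ to an arbitrary Jordan curve around the origin, using analyticity and invertibility of the conjugated jump matrix on $\mathbb{C}\setminus\{0\}$, is a valid filling-in of a step the paper leaves implicit.
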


Using Eq.~(\ref{fanyan-q2}), we have
\bee\label{fanyan-Y}
\widehat{q}^{+}(X,aX^2)=2iX^{-\frac12}\lim_{z\rightarrow\infty}z Y_{12}(z;X,a),\quad X>0.
\ene

\begin{figure}[!t]
    \centering
 \vspace{-0.15in}
  {\scalebox{0.68}[0.68]{\includegraphics{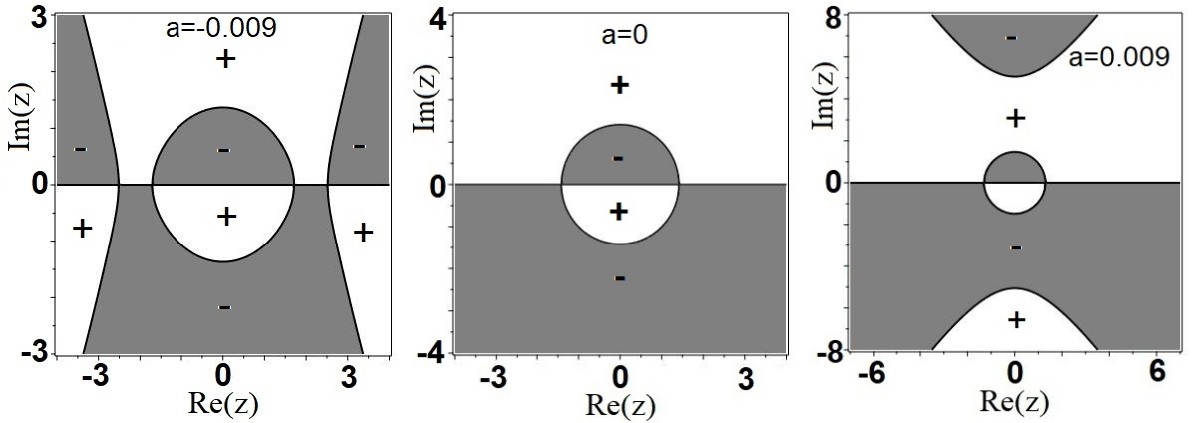}}}%
  \hspace{-0.15in}
\vspace{0.1in}
\caption{Sign charts of $\mathrm{Im}(\Theta(z;a)$ when $a=-0.009,0,0.009$.}
   \label{Sign charts}
\end{figure}

\subsubsection{Exponent analysis and steepest descent}

Let
\bee
\Theta(z;a):=z+4az^3+2z^{-1}.
\ene
Then one has
\begin{itemize}

\item[i)] As $-\frac{1}{96}<a<0$, the function $\Theta(z;a)$ has four different real critical points, namely,
\bee
b_1(a):=\sqrt{\frac{-1-\sqrt{1+96a}}{24a}}, \quad -b_1(a),\quad
 b(a):=\sqrt{\frac{-1+\sqrt{1+96a}}{24a}},\quad -b(a).
\ene

\item[ii)] As $a=0$, the function $\Theta(z;a)$ has double simple real critical points, namely,
\bee b(a):=\sqrt{2}, \qquad
  -b(a);
  \ene

\item[iii)] As
$0<a<\frac{1}{32}$, the function $\Theta(z;a)$ has double simple real critical points, namely,
\bee
b(a):=\sqrt{\frac{-1+\sqrt{1+96a}}{24a}},\qquad -b(a).
\ene
Moreover, the critical points of $\Theta(z;a)$ are its real roots.
\end{itemize}

There exists a component of the curve $\mathrm{Im}(\Theta(z;a))=0$ that is a Jordan curve surrounding the origin in the $z$-plane and that passes through two different real critical points, namely, $-b(a)$ and $b(a)$. We select this curve as the jump contour $\Sigma_2$ for $Y(z;X,a)$. The real axis divides $\Sigma_2$ into an arc $\Sigma_2^+$ in the upper half-plane and an arc $\Sigma_2^-$ in the lower half-plane. We introduce thin lens-shaped domains $L^{\pm}$ and $R^{\pm}$ on the left-hand and right-hand sides, respectively, of $\Sigma_2^{\pm}$ whose outer boundary arcs $\Sigma_3^{\pm}$ and $\Sigma_4^{\pm}$ meet the real axis at $\frac{\pi}{4}$ angles as shown in the left-hand panel of Figure \ref{Sign charts}. The region between $\Sigma_4^{\pm}$ and the real axis is denoted $\Omega^{\pm}$. Let the line segment $I:=[-b(a),b(a)]$. Then, we will make the following transformation:
\begin{eqnarray}\label{WY-1}
&&W(z;X,a):=Y(z;X,a)\left[\!\!\begin{array}{cc} 1& 0  \vspace{0.05in}\\
e^{2iX^{\frac12}\Theta(z;a)} & 1
\end{array}\!\!\right],\quad z\in L^+,\vspace{0.05in} \no \\
&&W(z;X,a):=Y(z;X,a)\left[\!\!\begin{array}{cc}
\sqrt{2}& \frac{\sqrt{2}}{2}e^{-2iX^{\frac12}\Theta(z;a)}  \vspace{0.05in}\\
0 & \frac{\sqrt{2}}{2}
\end{array}\!\!\right],\quad z\in R^+,\vspace{0.05in} \no\\
&&W(z;X,a):=Y(z;X,a)2^{\frac{\sigma_3}{2}},\quad z\in \Omega^+,\vspace{0.05in} \no\\
&&W(z;X,a):=Y(z;X,a)2^{-\frac{\sigma_3}{2}},\quad z\in \Omega^-,\vspace{0.05in} \label{WY-1}\\
&&W(z;X,a):=Y(z;X,a)\left[\!\!\begin{array}{cc}
\frac{\sqrt{2}}{2}& 0  \vspace{0.05in}\\
-\frac{\sqrt{2}}{2}e^{2iX^{\frac12}\Theta(z;a)} & \sqrt{2}
\end{array}\!\!\right],\quad z\in R^-,\vspace{0.05in} \no\\
&&W(z;X,a):=Y(z;X,a)\left[\!\!\begin{array}{cc}
1& -e^{-2iX^{\frac12}\Theta(z;a)}  \vspace{0.05in}\\
0 & 1
\end{array}\!\!\right],\quad z\in L^-,\vspace{0.05in}\no\\
&&W(z;X,a):=Y(z;X,a),\quad \mathrm{otherwise}. \no
\end{eqnarray}

\begin{figure}[!t]
    \centering
 \vspace{-0.15in}
  {\scalebox{0.3}[0.3]{\includegraphics{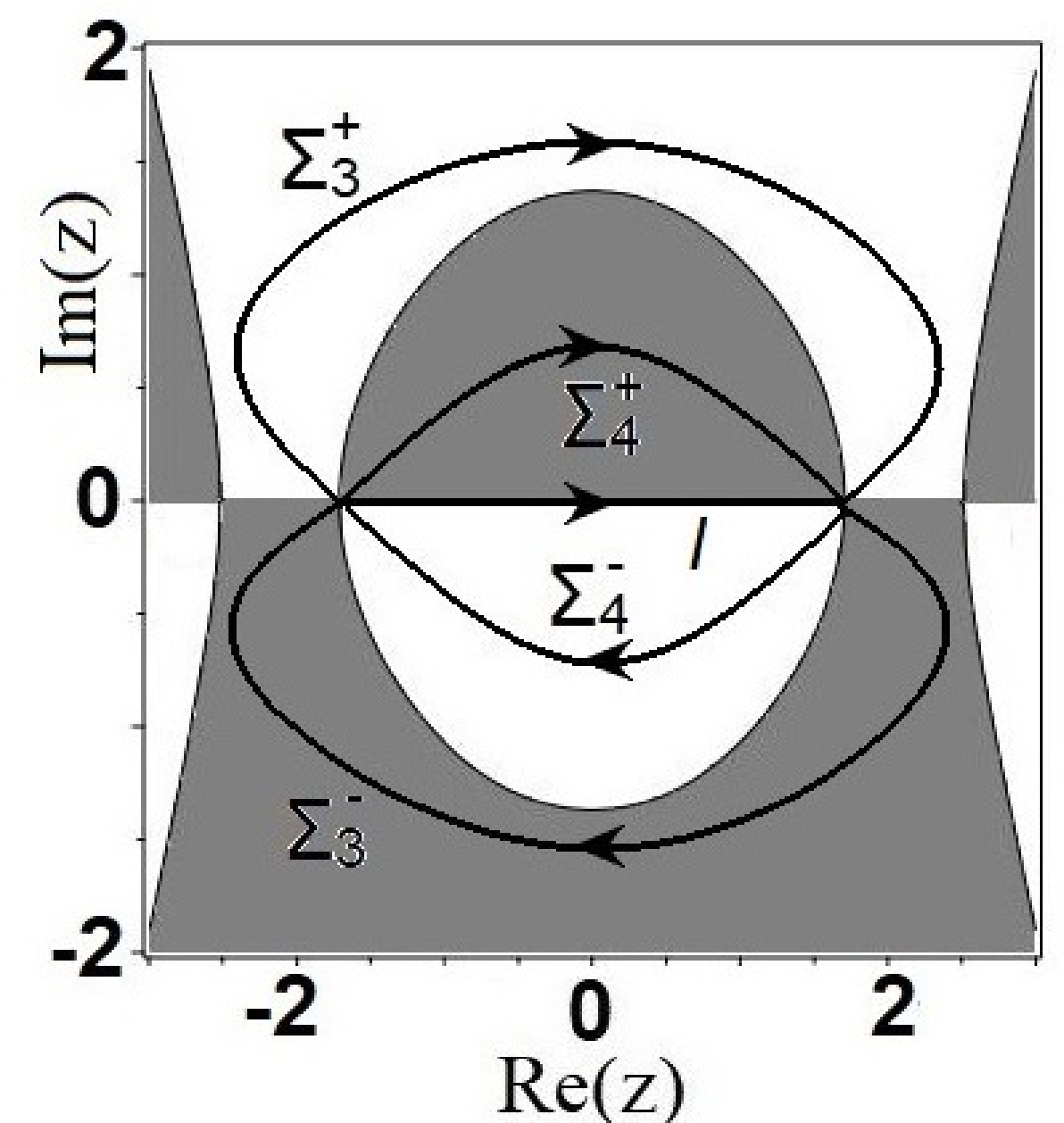}}}\hspace{-0.35in}
\vspace{-0.05in}
\caption{The jump contour $\Sigma_2=\Sigma_2^+\cup\Sigma_2^-$ for $Y(z;X,a)$ and the regions $L^{\pm},R^{\pm}$ and $\Omega^{\pm}$.}
   \label{Big-X}
\end{figure}

We can easily obtain that $W(z;X,a)$ has no jump on arc $\Sigma_2$, namely, $W_+(z;X,a)=W_-(z;X,a),\quad z\in\Sigma_2$. Below we give the jump properties of $W(z;X,a)$ on arcs $\Sigma_3,\Sigma_4$ and line segment $I$.
\begin{eqnarray}
&&W_+(z;X,a)=W_-(z;X,a)\left[\!\!\begin{array}{cc}
1& 0  \vspace{0.05in}\\
-e^{2iX^{\frac12}\Theta(z;a)} & 1
\end{array}\!\!\right],\quad z\in \Sigma_3^+,\vspace{0.05in} \no\\
&&W_+(z;X,a)=W_-(z;X,a)\left[\!\!\begin{array}{cc}
1& \frac12e^{-2iX^{\frac12}\Theta(z;a)} \vspace{0.05in}\\
0 & 1
\end{array}\!\!\right],\quad z\in \Sigma_4^+,\vspace{0.05in} \no\\
&&W_+(z;X,a)=W_-z;X,a)2^{\sigma_3},\qquad\qquad\qquad
\qquad\qquad z\in I,\vspace{0.05in}
\label{jump-W} \\
&&W_+(z;X,a)=W_-(z;X,a)\left[\!\!\begin{array}{cc}
1& 0 \vspace{0.05in}\\
-\frac12e^{2iX^{\frac12}\Theta(z;a)} & 1
\end{array}\!\!\right],\quad z\in \Sigma_4^-,\vspace{0.05in} \no\\
&&W_+(z;X,a)=W_-(z;X,a)\left[\!\!\begin{array}{cc}
1& e^{-2iX^{\frac12}\Theta(z;a)}  \vspace{0.05in}\\
0 & 1
\end{array}\!\!\right],\quad z\in \Sigma_3^-. \no
\end{eqnarray}

Since $\mathrm{Im}(\Theta(z;a))>0$ holds on $\Sigma_3^{+},\Sigma_4^-$ while $\mathrm{Im}(\Theta(z;a))<0$ holds on $\Sigma_3^-,\Sigma_4^{+}$,  the jump matrices are exponentially decreasing on these four contour arcs except near the endpoints $b(a)$ and $-b(a)$. Next, we will find local matrix functions defined near $z=b(a),z=-b(a)$ that exactly satisfy the jump conditions (\ref{jump-W}).

\subsubsection{Parametrix construction}

Let
\bee\no
\overline{W}^o(z;a)=\left(\frac{z+b(a)}{z-b(a)}\right)^{ip\sigma_3},\quad z\in\mathbb{C}\setminus I,
\ene
where $p=\frac{\ln 2}{2\pi}$. Then, we get the jump condition of the function $\overline{W}^o(z;a)$ on the line segment $I$:
\bee\no
\overline{W}_+^o(z;a)=\overline{W}_-^o(z;a)2^{\sigma_3},\quad z\in I.
\ene

Note that
\bee\no
\Theta'(-b(a);a)=\Theta'(b(a);a)=0,\qquad \Theta''(-b(a);a)<0,\qquad
\Theta''(b(a);a)>0,
\ene
we define the following conformal mappings $g_{-b}(z;a)$ and $g_b(z;a)$:
\begin{align}\no
\begin{aligned}
&g_{-b}(z;a)^2=-2\left(\Theta(b(a);a)+\Theta(z;a)\right),\vspace{0.05in}\\
&g_{b}(z;a)^2=2\left(\Theta(z;a)-\Theta(b(a);a)\right)
\end{aligned}
\end{align}
and we choose the case that $g'_{b}(b(a);a)>0$ and $g'_{-b}(-b(a);a)<0$. Let $\xi_{-b}:=X^{\frac14}g_{-b}(z;a)$ and $\xi_{b}:=X^{\frac14}g_{b}(z;a)$. Then the jump conditions are satisfied by
\bee\no
U^{-b}:=iW(z;X,a)e^{iX^{\frac12}\Theta(b(a);a)\sigma_3}\sigma_2,\quad~\mathrm{near}~z=-b
\ene
and by
\bee\no
U^{b}:=W(z;X,a)e^{-iX^{\frac12}\Theta(b(a);a)\sigma_3}\quad~\mathrm{near}~z=b.
\ene
Below we give the jump properties of $U^{-b}$ and $U^{b}$:
\begin{eqnarray}
&&U^{b}_+=U^{b}_-\left[\!\!\begin{array}{cc}
1& 0  \vspace{0.05in}\\
e^{i\xi_{b}^2} & 1
\end{array}\!\!\right],\quad z\in \Sigma_3^+ \,(\mathrm{away~from}~b(a))\vspace{0.05in} \no\\
&&U^{b}_+=U^{b}_-\left[\!\!\begin{array}{cc}
1& \frac12e^{-i\xi_{b}^2} \vspace{0.05in}\\
0 & 1
\end{array}\!\!\right],\quad z\in \Sigma_4^+\,(\mathrm{toward}~b(a))\vspace{0.05in} \no \\
&&U^{b}_+=U^{b}_-2^{\sigma_3},\qquad\qquad\qquad z\in I \,(\mathrm{toward}~b(a))\vspace{0.05in} \no\\
&&U^{b}_+=U^{b}_-\left[\!\!\begin{array}{cc}
1& 0 \vspace{0.05in}\\
\frac12e^{i\xi_{b}^2} & 1
\end{array}\!\!\right],\quad z\in \Sigma_4^- \,(\mathrm{toward}~b(a))\vspace{0.05in}\\
&&U^{b}_+
=U^{b}_-\left[\!\!\begin{array}{cc}
1& e^{-i\xi_{b}^2}  \vspace{0.05in}\\
0 & 1
\end{array}\!\!\right],\quad z\in \Sigma_3^- \,(\mathrm{away~from}~b(a)).\no
\end{eqnarray}
and
\begin{align}
\begin{aligned}
&U^{-b}_+=U^{-b}_-\left[\!\!\begin{array}{cc}
1& e^{-i\xi_{-b}^2}  \vspace{0.05in}\\
0 & 1
\end{array}\!\!\right],\quad z\in \Sigma_3^+\,(\mathrm{away~from}-b(a))\vspace{0.05in}\\
&U^{-b}_+=U^{-b}_-\left[\!\!\begin{array}{cc}
1& 0 \vspace{0.05in}\\
\frac12e^{i\xi_{-b}^2} & 1
\end{array}\!\!\right],\quad z\in \Sigma_4^+ \,(\mathrm{toward}-b(a))\vspace{0.05in}\\
&U^{b}_+=U^{b}_-2^{\sigma_3},\qquad\qquad\qquad z\in I \,(\mathrm{toward}-b(a))\vspace{0.05in}\\
&U^{-b}_+=U^{-b}_-\left[\!\!\begin{array}{cc}
1& 0 \vspace{0.05in}\\
e^{i\xi_{-b}^2}& 1
\end{array}\!\!\right],\quad z\in \Sigma_3^- \,(\mathrm{away~from}-b(a))\vspace{0.05in}\\
&U^{-b}_+=U^{-b}_-\left[\!\!\begin{array}{cc}
1& \frac12e^{-i\xi_{-b}^2}  \vspace{0.05in}\\
0 & 1
\end{array}\!\!\right],\quad z\in \Sigma_4^-,(\mathrm{toward}-b(a)).
\end{aligned}
\end{align}

We can normalize the jump matrix near point $-b(a)$ and the jump matrix near point $b(a)$ into the jump matrix about $\xi$.
\begin{figure}[!t]
    \centering
 \vspace{-0.15in}
  {\scalebox{0.35}[0.35]{\includegraphics{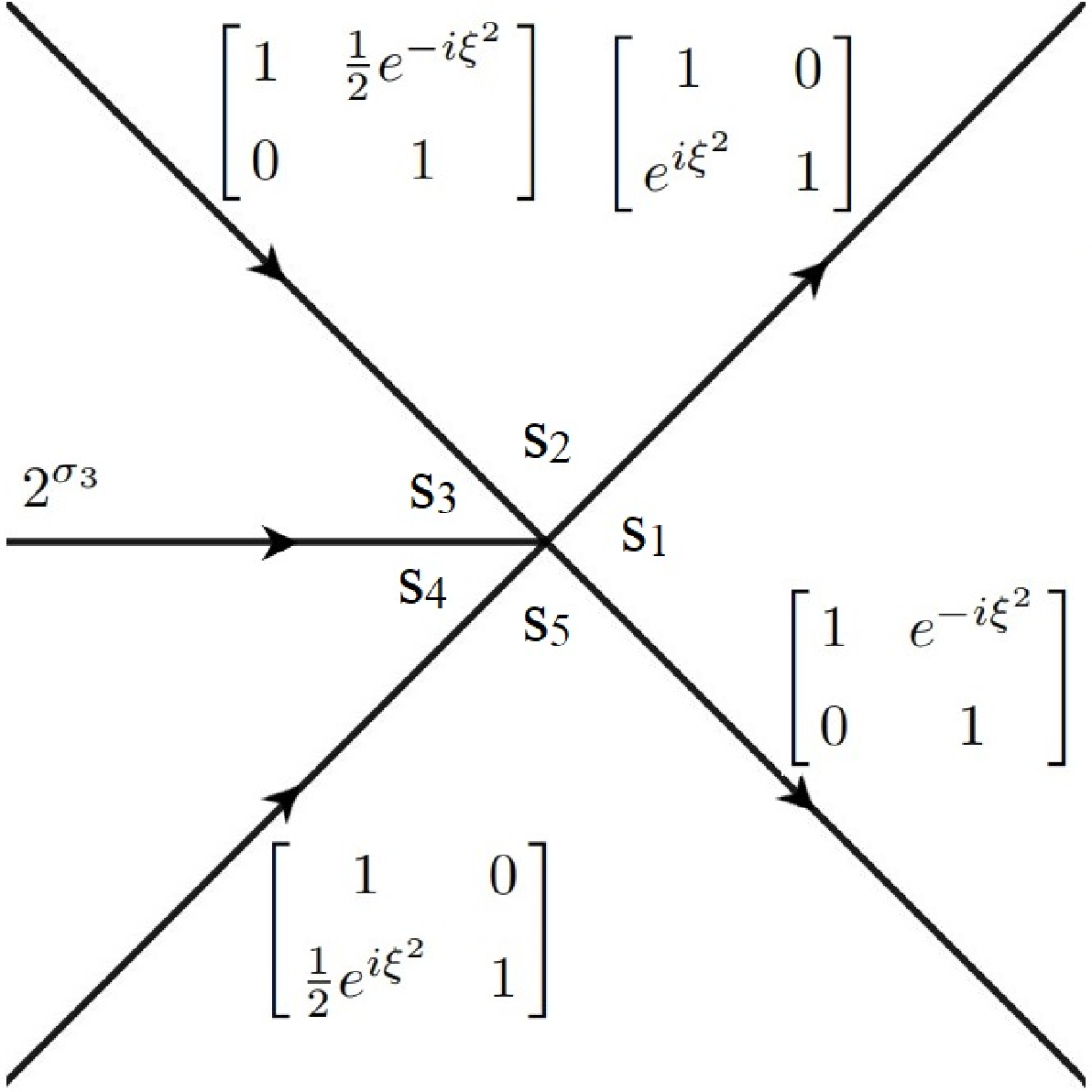}}}  
\vspace{0.05in}
\caption{Regional division for $U(\xi)$.}
   \label{xi}
\end{figure}

\begin{prop}\label{RH7} Riemann-Hilbert Problem 7\\
Find a $2\times 2$ matrix $U(\xi)$ that satisfies the following properties:

\begin{itemize}

\item{} Analyticity: $U(\xi)$ is analytic for $\xi$ in the five regions shown in Figure \ref{xi}, namely, $S_1: |\mathrm{arg}(\xi)|<\frac{\pi}{4}$, $S_2: \frac{\pi}{4}<\mathrm{arg}(\xi)<\frac{3\pi}{4}$, $S_3: \frac{3\pi}{4}<\mathrm{arg}(\xi)<\pi$, $S_4:-\pi<\mathrm{arg}(\xi)<-\frac{3\pi}{4}$, $S_5:-\frac{3\pi}{4}<\mathrm{arg}(\xi)<-\frac{\pi}{4}$ where $-\pi<\mathrm{arg}(\xi)\leq\pi$. It takes continuous boundary values on the excluded rays and at the origin from each sector.

\item{} Jump conditions: The boundary values on the jump contour are related as:
\begin{eqnarray}
&&U_+(\xi)=U_-(\xi)\left[\!\!\begin{array}{cc}
1& 0  \vspace{0.05in}\\
e^{i\xi^2} & 1
\end{array}\!\!\right],\quad \mathrm{arg}(\xi)=\frac{\pi}{4},\vspace{0.05in} \no\\
&&U_+(\xi)=U_-(\xi)\left[\!\!\begin{array}{cc}
1& \frac12e^{-i\xi^2} \vspace{0.05in}\\
0 & 1
\end{array}\!\!\right],\quad \mathrm{arg}(\xi)=\frac{3\pi}{4},\vspace{0.05in} \no\\
&&U_+(\xi)=U_-(\xi)2^{\sigma_3},\qquad \qquad \quad \mathrm{arg}(\xi)=\pi,\vspace{0.05in}\\
&&U_+(\xi)=U_-(\xi)\left[\!\!\begin{array}{cc}
1& 0 \vspace{0.05in}\\
\frac12e^{i\xi^2} & 1
\end{array}\!\!\right],\quad \mathrm{arg}(\xi)=-\frac{3\pi}{4},\vspace{0.05in}\no\\
&&U_+(\xi)=U_-(\xi)\left[\!\!\begin{array}{cc}
1& e^{-i\xi^2}  \vspace{0.05in}\\
0 & 1
\end{array}\!\!\right], \quad \mathrm{arg}(\xi)=-\frac{\pi}{4}.\no
\end{eqnarray}

\item{} Normalization: $U(\xi)\xi^{ip\sigma_3}$ tends to identity matrix as $\xi\rightarrow\infty$, where $p=\frac{\ln(2)}{2\pi}$.

\end{itemize}
\end{prop}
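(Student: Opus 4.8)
The plan is to recognize Riemann--Hilbert Problem~7 as the standard parabolic-cylinder (Weber) parametrix: the oscillatory factors $e^{\pm i\xi^2}$ on the four rays and the constant diagonal jump $2^{\sigma_3}$ across $\arg(\xi)=\pi$ are precisely the fingerprints of parabolic cylinder functions, and I would solve the problem by explicit construction.

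First I would strip the oscillation from the jumps through the substitution $\Psi(\xi):=U(\xi)e^{-\frac{i}{2}\xi^2\sigma_3}$. Writing $E_{ij}$ for the elementary matrix carrying a single $1$ in entry $(i,j)$, a direct diagonal conjugation shows that $e^{-\frac{i}{2}\xi^2\sigma_3}$ cancels each factor $e^{\pm i\xi^2}$, so that the five jumps of $U$ become the \emph{constant} matrices $\mathbb{I}+E_{21}$, $\mathbb{I}+\tfrac12E_{12}$, $2^{\sigma_3}$, $\mathbb{I}+\tfrac12E_{21}$, $\mathbb{I}+E_{12}$ on the rays $\arg(\xi)=\tfrac{\pi}{4},\tfrac{3\pi}{4},\pi,-\tfrac{3\pi}{4},-\tfrac{\pi}{4}$, respectively (the already-constant $2^{\sigma_3}$ is untouched since it commutes with the diagonal conjugator). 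Because $\Psi$ now has piecewise-\emph{constant} jumps, the logarithmic derivative $R(\xi):=\Psi_\xi(\xi)\Psi(\xi)^{-1}$ has no jump across any ray and therefore extends to an entire function of $\xi$.

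Next I would pin down $R$ by Liouville's theorem. From the normalization $U(\xi)\xi^{ip\sigma_3}\to\mathbb{I}$ one has $\Psi(\xi)\sim\xi^{-ip\sigma_3}e^{-\frac{i}{2}\xi^2\sigma_3}$ as $\xi\to\infty$, which upon differentiation gives
\[
R(\xi)=-i\xi\sigma_3+R_0+\mathcal{O}(\xi^{-1}),\qquad \xi\to\infty,
\]
with $R_0$ a constant \emph{off-diagonal} matrix (the diagonal part being forced to vanish by the $\sigma_3$-structure of the leading asymptotics). Hence $R(\xi)=-i\xi\sigma_3+R_0$ exactly, and $\Psi$ solves the linear system $\Psi_\xi=(-i\xi\sigma_3+R_0)\Psi$. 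Eliminating one scalar component turns this first-order system into Weber's equation, whose solutions are parabolic cylinder functions $D_\nu(\cdot)$; the index $\nu$ is determined by $\det R_0$ (equivalently by the product of the two off-diagonal entries of $R_0$), and matching the large-$\xi$ expansion of $D_\nu$ against the prescribed factor $\xi^{-ip\sigma_3}$ fixes $\nu$ in terms of $p=\tfrac{\ln 2}{2\pi}$. I would then assemble $U$ sector by sector over $S_1,\dots,S_5$ from appropriately rotated parabolic cylinder functions $D_\nu$, reinstate the factor $e^{\frac{i}{2}\xi^2\sigma_3}$, and read off the Stokes multipliers and the monodromy; the classical connection formulae for $D_\nu$ produce exactly the constants $1$, $\tfrac12$ and $2^{\sigma_3}$ once the free constants are chosen. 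Existence and uniqueness are then immediate: every jump has unit determinant, so $\det U$ is entire, bounded, and $\to1$, hence $\det U\equiv1$; and any two solutions differ by an entire matrix tending to $\mathbb{I}$, which is $\mathbb{I}$ by Liouville.

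The main obstacle is the Stokes bookkeeping. One must verify that the asymmetric prescribed multipliers ($1$ on $\Sigma_3^{\pm}$ and $\tfrac12$ on $\Sigma_4^{\pm}$) together with the branch jump $2^{\sigma_3}$ are simultaneously realizable by a single parabolic cylinder index, and then pin the two free off-diagonal entries of $R_0$ so that \emph{both} the multipliers \emph{and} the normalization $U(\xi)\xi^{ip\sigma_3}\to\mathbb{I}$ hold at once. The closing consistency condition --- that the ordered product of the five constant jumps around the origin be $\mathbb{I}$ --- succeeds precisely because $2=e^{2\pi p}$, which is the same identity that makes the cut $2^{\sigma_3}$ of $U$ cancel the jump $2^{-\sigma_3}$ of $\xi^{ip\sigma_3}$ and renders $U(\xi)\xi^{ip\sigma_3}$ single-valued at infinity. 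Everything else --- the reduction to Weber's equation and the two Liouville arguments --- is routine.
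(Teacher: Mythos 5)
Your construction is correct and is essentially the same route the paper takes: the paper gives no detailed proof of Proposition \ref{RH7}, but simply asserts that the solution is the classical parabolic-cylinder (Weber) parametrix and quotes its large-$\xi$ expansion with the explicit coefficient $\gamma$. Your proposal just spells out the standard steps behind that assertion — conjugating away $e^{\pm i\xi^2}$ to get constant jumps, applying Liouville to $\Psi_\xi\Psi^{-1}$ to reduce to Weber's equation, and using the cyclic consistency of the Stokes/connection data (which is where $p=\tfrac{\ln 2}{2\pi}$ enters) together with a second Liouville argument for uniqueness.
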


The solution of the Riemann-Hilbert Problem 7 can be expressed explicitly according to the parabolic cylinder function.  In particular, the solution satisfies:
\bee\label{RH7-solu}
U(\xi)\xi^{ip\sigma_3}=\mathbb{I}+\frac{1}{2i\xi}\left[\!\!\begin{array}{cc}
0& \gamma  \vspace{0.05in}\\
\gamma^* & 0
\end{array}\!\!\right]+\mathcal{O}(\xi^{-2}),\quad \xi\rightarrow\infty,
\ene
where
\bee\no
\gamma=\dfrac{2^{\frac54}\pi^{\frac12}e^{\frac{i\pi}{4}}e^{\frac{i(\ln(2))^2}{2\pi}}}{\Gamma(\frac{i\ln(2)}{2\pi})},\quad |\gamma|=\sqrt{2p}.
\ene

Let
\begin{align}\label{T-bTb}
\begin{aligned}
&T^{-b}(z;a):=i(b-z)^{-ip\sigma_3}\left(-\frac{b+z}{g_{-b}(z;a)}\right)^{ip\sigma_3}\sigma_2,\vspace{0.05in}\\
&T^{b}(z;a):=(b+z)^{ip\sigma_3}\left(\frac{g_{b}(z;a)}{z-b}\right)^{ip\sigma_3}.
\end{aligned}
\end{align}

According to Proposition \ref{RH7}, we can obtain the following local functions:
\begin{align}
\begin{aligned}
\overline{W}^{-b}(z;X,a)=&-iX^{-\frac{ip\sigma_3}{4}}e^{iX^{\frac12}\Theta(b;a)\sigma_3}T^{-b}(z;a) U(X^{\frac14}g_{-b}(z;a))\sigma_2e^{-iX^{\frac12}\Theta(b;a)\sigma_3},\quad z\in\delta_{-b}(r),\vspace{0.05in}\\
\overline{W}^{b}(z;X,a)=&X^{\frac{ip\sigma_3}{4}}e^{-iX^{\frac12}\Theta(b;a)\sigma_3}T^{b}(z;a)U(X^{\frac14}g_{b}(z;a))e^{iX^{\frac12}\Theta(b;a)\sigma_3},\quad z\in\delta_{b}(r),
\end{aligned}
\end{align}
where $\delta_{-b}(r)$ and $\delta_{b}(r)$ are small circles with $-b(a)$ as the center and $r$ as the radius, and small circles with $b(a)$ as the center and $r$ as the radius, respectively. When $-\frac{1}{96}<a<0$, we define the following piecewise function:
\bee\label{piecewise-W}
\overline{W}(z;X,a):=\begin{cases}
\overline{W}^{-b}(z;X,a),\quad z\in\delta_{-b}(r),\vspace{0.05in}\\
\overline{W}^{b}(z;X,a),\quad z\in\delta_{b}(r),\vspace{0.05in}\\
\overline{W}^o(z;a),\quad z\in\mathbb{C}\setminus(\overline{\delta_{-b}(r)}\cup\overline{\delta_{b}(r)}\cup I).
\end{cases}
\ene

\subsubsection{Error analysis}

We consider the following error function:
\bee\label{EW-1}
E(z;X,a):=W(z;X,a)\overline{W}(z;X,a)^{-1}.
\ene

Since $W(z;X,a)$ and $\overline{W}(z;X,a)$ have the same jump property on the line segment $I$, $E(z;X,a)$ has no jump across $I$. Error function $E(z;X,a)$ is analytic in $\mathbb{C}\setminus\Sigma_5$ and takes continuous boundary values on $\Sigma_5$, where $\Sigma_5=\left(\Sigma_3^{\pm}\cup\Sigma_4^{\pm}\cup\partial\delta_{-b}(r)\cup\partial\delta_{b}(r)\right)\cap\complement_{\mathbb{C}}{(\delta_{-b}(r)
\cup\delta_{b}(r))}$. Assuming clockwise orientation of $\delta_{-b}(r)$ and $\delta_{b}(r)$, the boundary values on the jump contour $\Sigma_5$ are related as:
\bee\no
E_+(z;X,a):=E_-(z;X,a)V^E(z;X,a).
\ene

Next, we will analyze the error factors of $V^E(z;X,a)$ on each arcs of $\Sigma_5$.

\begin{prop}\label{error1}
The following identity holds:\\
(I)\bee
\sup_{z\in\left(\Sigma_3^{\pm}\cup\Sigma_4^{\pm}\right)\cap\complement_{\mathbb{C}}{(\overline{\delta_{-b}(r)}\cup\overline{\delta_{b}(r)})}}
||V^E(z;X,a)-\mathbb{I}||=\mathcal{O}(e^{-C(a)X^{\frac12}}),\quad X\rightarrow+\infty.
\ene
where $C(a)$ is a constant about $a$ and $||\cdot||$ denotes the matrix norm.

(II)\bee
\sup_{z\in\partial\delta_{-b}(r)\cup\partial\delta_{b}(r)}||V^E(z;X,a)-\mathbb{I}||=\mathcal{O}(X^{-\frac14}),\quad X\rightarrow+\infty.
\ene

\end{prop}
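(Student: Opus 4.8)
The plan is to use that, by its definition (\ref{EW-1}), the error matrix $E(z;X,a)=W(z;X,a)\overline{W}(z;X,a)^{-1}$ jumps only across $\Sigma_5$, and that its jump $V^E$ takes two qualitatively different forms: on the lens boundaries it is the jump of $W$ conjugated by the bounded outer parametrix $\overline{W}^o$, while on the two circles it is the mismatch between the local and outer parametrices. I would estimate these two families separately, which is exactly the split (I)--(II) in the statement.

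For (I), I would fix an arc inside $\left(\Sigma_3^{\pm}\cup\Sigma_4^{\pm}\right)\cap\complement_{\mathbb{C}}(\overline{\delta_{-b}(r)}\cup\overline{\delta_{b}(r)})$. There $\overline{W}=\overline{W}^o$ is analytic and continuous across the arc (its only jump is on $I$), whereas $W$ carries the jump $V_W$ recorded in (\ref{jump-W}). Consequently
\[
V^E(z;X,a)=\overline{W}^o(z;a)\,V_W(z;X,a)\,\overline{W}^o(z;a)^{-1},\qquad
V^E-\mathbb{I}=\overline{W}^o\,(V_W-\mathbb{I})\,(\overline{W}^o)^{-1},
\]
and $V_W-\mathbb{I}$ is strictly triangular with single nonzero entry $\mp e^{\pm 2iX^{1/2}\Theta(z;a)}$ (up to the harmless factor $\tfrac12$). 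Because $\mathrm{Im}(\Theta)>0$ on $\Sigma_3^+,\Sigma_4^-$ and $\mathrm{Im}(\Theta)<0$ on $\Sigma_3^-,\Sigma_4^+$, in every case the modulus of this entry equals $e^{-2X^{1/2}|\mathrm{Im}(\Theta(z;a))|}$. Since $\mathrm{Im}(\Theta)$ vanishes on the lens boundaries only at the critical points $\pm b(a)$, which together with their neighborhoods $\delta_{\pm b}(r)$ have been excised, the quantity $|\mathrm{Im}(\Theta(z;a))|$ is bounded below by a positive constant on the truncated arcs. Moreover $\overline{W}^o=\big(\tfrac{z+b}{z-b}\big)^{ip\sigma_3}$ with $p=\tfrac{\ln 2}{2\pi}\in\mathbb{R}$ is uniformly bounded, with uniformly bounded inverse, away from $\pm b(a)$, so the conjugation multiplies the exponentially small entry only by a bounded factor and cannot spoil exponential smallness. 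Taking $C(a):=2\inf|\mathrm{Im}(\Theta(z;a))|$ over the truncated arcs yields the bound $\mathcal{O}(e^{-C(a)X^{1/2}})$.

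For (II), I would work on $\partial\delta_{\pm b}(r)$, where $W$ is continuous, so $V^E$ is the ratio of the two branches of $\overline{W}$, i.e. $V^E=\overline{W}^o(\overline{W}^{\pm b})^{-1}$ (or its inverse, depending on the orientation of the circle). The local parametrices (\ref{piecewise-W}) are built from the exact solution $U(\xi)$ of Proposition~\ref{RH7}, evaluated at $\xi_{\pm b}=X^{1/4}g_{\pm b}(z;a)$, together with the prefactors $T^{\pm b}(z;a)$ of (\ref{T-bTb}) and the diagonal conjugations $X^{\pm ip\sigma_3/4}$ and $e^{\mp iX^{1/2}\Theta(b;a)\sigma_3}$. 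On $|z\mp b(a)|=r$ the conformal maps $g_{\pm b}(z;a)$ are analytic and bounded away from zero, so $|\xi_{\pm b}|$ grows like $X^{1/4}$; inserting the large-$\xi$ expansion (\ref{RH7-solu}), namely $U(\xi)\xi^{ip\sigma_3}=\mathbb{I}+\mathcal{O}(\xi^{-1})$, one finds that the diagonal factors $X^{\pm ip\sigma_3/4}$, $e^{\mp iX^{1/2}\Theta(b;a)\sigma_3}$, $T^{\pm b}$ and $g_{\pm b}^{\mp ip\sigma_3}$ commute and collapse the leading term of $\overline{W}^{\pm b}$ exactly onto $\overline{W}^o=\big(\tfrac{z+b}{z-b}\big)^{ip\sigma_3}$. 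Since $p$ and $\Theta(b;a)$ are real, those diagonal factors are unitary, so conjugating the $\mathcal{O}(\xi_{\pm b}^{-1})$ correction by them preserves its size; hence $V^E-\mathbb{I}=\mathcal{O}(X^{-1/4})$ uniformly on the circles.

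The main obstacle is the bookkeeping in (II): verifying the exact leading-order cancellation $\overline{W}^{\pm b}(\overline{W}^o)^{-1}=\mathbb{I}+\mathcal{O}(X^{-1/4})$ requires tracking the branch structure of the powers $(z\mp b)^{\mp ip\sigma_3}$ inside $T^{\pm b}$ against $\big(\tfrac{z+b}{z-b}\big)^{ip\sigma_3}$ across the cut $I$, and checking that the sign choices $g_b'(b;a)>0$, $g_{-b}'(-b;a)<0$ together with the $X^{\pm ip\sigma_3/4}$ scalings cancel the $\xi^{ip\sigma_3}$ normalization of $U$ and the phases $e^{\mp iX^{1/2}\Theta(b;a)\sigma_3}$. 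Once this cancellation is confirmed the $\mathcal{O}(X^{-1/4})$ estimate is immediate from the uniform boundedness of $g_{\pm b}^{-1}$ on the circles. I would also note that $C(a)$ in (I) degenerates as $a\to-1/96$, where $b(a)$ and $b_1(a)$ coalesce and $\inf|\mathrm{Im}(\Theta)|\to0$; this is precisely the transitional regime treated separately in Theorem~\ref{new-prop}, so no uniformity in $a$ near $-1/96$ is claimed here.
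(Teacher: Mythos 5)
Your proposal is correct and follows essentially the same route as the paper's (much terser) proof: on the truncated lens arcs the jump of $E$ is the jump (\ref{jump-W}) of $W$ conjugated by the bounded outer parametrix, and exponential decay follows from $|\mathrm{Im}(\Theta)|$ being bounded below there; on the circles $V^E$ reduces, via the identity (\ref{W-bwb}), to a conjugation of $U(\xi_{\pm b})\xi_{\pm b}^{ip\sigma_3}=\mathbb{I}+\mathcal{O}(\xi_{\pm b}^{-1})$ with $|\xi_{\pm b}|\asymp X^{1/4}$. Your added details (boundedness of $\overline{W}^o$ and $(\overline{W}^o)^{-1}$ away from $\pm b(a)$, unitarity of the $X$-dependent diagonal factors, and the non-uniformity of $C(a)$ as $a\to -1/96$) are all consistent with, and in fact sharpen, the paper's argument.
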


\begin{proof}
When $z\in\left(\Sigma_3^{\pm}\cup\Sigma_4^{\pm}\right)\cap\complement_{\mathbb{C}}{(\overline{\delta_{-b}(r)}\cup\overline{\delta_{b}(r)})}$, the error term depends on the jump matrix of Eqs.~(\ref{jump-W}), then we get formula (I). Note that
\begin{align}\label{W-bwb}
\begin{aligned}
\overline{W}^{-b}(z;X,a)\overline{W}^{o}(z;a)^{-1}=&X^{-\frac{ip\sigma_3}{4}}e^{iX^{\frac12}\Theta(b(a);a)\sigma_3}T^{-b}(z;a)U(\xi_{-b})\vspace{0.05in}\\
&\times\xi_{-b}^{ip\sigma_3}T^{-b}(z;a)^{-1}e^{-iX^{\frac12}\Theta(b(a);a)\sigma_3}X^{\frac{ip\sigma_3}{4}},\quad z\in\partial\delta_{-b}(r),\vspace{0.05in}\\
\overline{W}^{b}(z;X,a)\overline{W}^{o}(z;a)^{-1}=&X^{\frac{ip\sigma_3}{4}}e^{-iX^{\frac12}\Theta(b(a),a)\sigma_3}T^{b}(z;a)U(\xi_{b})\vspace{0.05in}\\
&\times\xi_{b}^{ip\sigma_3}T^{b}(z;a)^{-1}e^{iX^{\frac12}\Theta(b(a);a)\sigma_3}X^{-\frac{ip\sigma_3}{4}},\quad z\in\partial\delta_{b}(r).
\end{aligned}
\end{align}

When $z\in\partial\delta_{-b}(r)\cup\partial\delta_{b}(r)$, the error term depends on $U(\xi_{-b})\xi_{-b}^{ip\sigma_3}$ and $U(\xi_{b})\xi_{b}^{ip\sigma_3}$. According to Eq.~(\ref{RH7-solu}), we get formula (II). Thus the proof is completed.

\end{proof}

Using Plemelj formula, we have
\bee\label{E-ple}
E(z;X,a)=\mathbb{I}+\frac{1}{2\pi i}\int_{\Sigma_5}\dfrac{E_-(v;X,a)(V^E(v;X,a)-\mathbb{I})}{v-z}dv,\quad z\in\mathbb{C}\setminus\Sigma_5.
\ene

According to Proposition \ref{error1} and Eq.~(\ref{E-ple}), we have
\bee\label{E-err}
E_-(z;X,a)-\mathbb{I}=\mathcal{O}(X^{-\frac14}),\quad X\rightarrow+\infty.
\ene

We can get the following Laurent expansion of $E(z;X,a)$ convergent for sufficiently large $|z|$:
\bee\label{E-lau}
E(z;X,a)=\mathbb{I}-\frac{1}{2\pi i}\sum_{j=1}^{\infty}z^{-j}\int_{\Sigma_5}E_-(v;X,a)(V^E(v;X,a)-\mathbb{I})v^{j-1}dv.
\ene

Substituting Eq.~(\ref{EW-1}) into Eq.~(\ref{fanyan-Y}), we have
\bee\label{fanyan-E}
\widehat{q}^+(X,aX^2)=2iX^{-\frac12}\lim_{z\rightarrow\infty}z E_{12}(z;X,a),\quad X>0.
\ene

Substituting Eq.~(\ref{E-lau}) into Eq.~(\ref{fanyan-E}), we have
\bee\label{q-E-simi}
\widehat{q}^+(X,aX^2)=-\frac{1}{\pi X^{\frac12}}\int_{\Sigma_5}E_{11-}(v;X,a)V^{E}_{12}(v;X,a)+E_{12-}(v;X,a)(V^{E}_{22}(v;X,a)-1)dv.
\ene

According to Eq.~(\ref{E-err}), we have
\bee\label{fanyan-VE12}
q^+(X,aX^2)=-\frac{1}{\pi X^{\frac12}}\int_{\Sigma_5}V^E_{12}(v;X,a)dv+\mathcal{O}(X^{-1}),\quad X\rightarrow+\infty.
\ene

Using Eqs.~(\ref{RH7-solu}) and (\ref{W-bwb}), we can obtain the asymptotic expression of $V^{E}_{12}(z;X,a)$:
\begin{align}\label{VE12}
\begin{aligned}
&V^{E}_{12}(z;X,a)=\dfrac{X^{-\frac{ip}{2}}e^{2iX^{\frac12}\Theta(b;a)}(\gamma T^{-b}_{11}(z;a)^2-\gamma^*T^{-b}_{12}(z;a)^2)}{2iX^{\frac14}g_{-b}(z;a)}+\mathcal{O}(X^{-\frac12}),\quad z\in\partial\delta_{-b}(r),\vspace{0.05in}\\
&V^{E}_{12}(z;X,a)=\dfrac{X^{\frac{ip}{2}}e^{-2iX^{\frac12}\Theta(b;a)}(\gamma T^{b}_{11}(z;a)^2-\gamma^*T^{b}_{12}(z;a)^2)}{2iX^{\frac14}g_{b}(z;a)}+\mathcal{O}(X^{-\frac12}),\quad z\in\partial\delta_{b}(r).
\end{aligned}
\end{align}

According to above analysis, we can provide the proof of Proposition \ref{Big-X-q}.

\begin{proof}
Note that $g_{-b}^{'}(-b(a);a)=-g_{b}^{'}(b(a);a)=-\sqrt{24ab+4b^{-3}}$. Substituting Eq.~(\ref{VE12}) into Eq.~(\ref{fanyan-VE12}), we have
\begin{align}
\begin{aligned}
\widehat{q}^+(X,aX^2)=&\frac{1}{\sqrt{24ab+4b^{-3}}X^{\frac34}}\bigg(X^{\frac{ip}{2}}e^{-2iX^{\frac12}\Theta(b;a)}(\gamma T^{b}_{11}(b(a);a)^2-\gamma^*T^{b}_{12}(b(a);a)^2)\vspace{0.05in}\\
&-X^{-\frac{ip}{2}}e^{2iX^{\frac12}\Theta(b;a)}(\gamma T^{-b}_{11}(-b(a);a)^2-\gamma^*T^{-b}_{12}(-b(a);a)^2\bigg)+\mathcal{O}(X^{-1}),\quad X\rightarrow+\infty.
\end{aligned}
\end{align}

According to Eq.~(\ref{T-bTb}), we have
\begin{align}
\begin{aligned}
&T^{-b}(-b(a);a)=i(2b\sqrt{24ab+4b^{-3}})^{-ip\sigma_3}\sigma_2,\vspace{0.05in}\\
&T^{b}(b(a);a)=(2b\sqrt{24ab+4b^{-3}})^{ip\sigma_3}.
\end{aligned}
\end{align}

Therefore,
\begin{align}
\begin{aligned}
\widehat{q}^+(X,aX^2)=&\frac{1}{\sqrt{24ab+4b^{-3}}X^{\frac34}}\bigg(X^{\frac{ip}{2}}e^{-2iX^{\frac12}\Theta(b;a)}(\gamma T^{b}_{11}(b(a);a)^2-\gamma^*T^{b}_{12}(b(a);a)^2)\vspace{0.05in}\\
&-X^{-\frac{ip}{2}}e^{2iX^{\frac12}\Theta(b;a)}(\gamma T^{-b}_{11}(-b(a);a)^2-\gamma^*T^{-b}_{12}(-b(a);a)^2)\bigg)+\mathcal{O}(X^{-1})\vspace{0.05in}\\
=&\frac{1}{\sqrt{24ab+4b^{-3}}X^{\frac34}}\bigg(X^{\frac{ip}{2}}e^{-2iX^{\frac12}\Theta(b;a)}\gamma T^{b}_{11}(b(a);a)^2\vspace{0.05in}\\
&+X^{-\frac{ip}{2}}e^{2iX^{\frac12}\Theta(b;a)}\gamma^*T^{-b}_{12}(-b(a);a)^2\bigg)+\mathcal{O}(X^{-1})\vspace{0.05in}\\
=&\frac{\sqrt{2p}}{\sqrt{24ab+4b^{-3}}X^{\frac34}}\bigg(X^{\frac{ip}{2}}e^{-2iX^{\frac12}\Theta(b;a)}e^{i\mathrm{arg}(\gamma)}
(2b\sqrt{24ab+4b^{-3}})^{2ip}\vspace{0.05in}\\
&+X^{-\frac{ip}{2}}e^{2iX^{\frac12}\Theta(b;a)}e^{-i\mathrm{arg}(\gamma)}(2b\sqrt{24ab+4b^{-3}})^{-2ip}\bigg)+\mathcal{O}(X^{-1})\vspace{0.05in}\\
=&\frac{\sqrt{2p}\cos(\phi(X,a))}{X^{\frac34}\sqrt{6ab+b^{-3}}}+\mathcal{O}(X^{-1}),\quad X\rightarrow+\infty.
\end{aligned}
\end{align}

Thus the proof is completed.
\end{proof}

\begin{corollary}
Let $a=0$ and $\Theta(z;a):=z+2z^{-1}$. Then as $X\rightarrow+\infty$, we have
\begin{align}\label{T=0}
\begin{aligned}
\widehat{q}^+(X,0)=&\frac{2^{\frac34}
\sqrt{\ln 2}}{\sqrt{\pi}X^{\frac34}}
\cos\bigg(2^{\frac52}X^{\frac12}-\frac{\ln 2}{4\pi}\ln(X)+\mathrm{arg}\l(\Gamma\l(\frac{i\ln 2}{2\pi}\r)\r)-\frac{9(\ln 2)^2}{4\pi}-\frac{\pi}{4}\bigg)+\mathcal{O}(X^{-1}).
\end{aligned}
\end{align}

\end{corollary}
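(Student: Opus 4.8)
The plan is to derive this corollary as a direct specialization of Theorem \ref{Big-X-q} to the value $a=0$; no new analytic machinery is required, because the small-norm reduction, the parametrix construction around $\pm b(a)$, and the error estimate already carried out for the full range $-1/96<a\le 0$ cover the endpoint $a=0$. First I would fix the critical point: from the classification of the critical points of $\Theta(z;a)$, when $a=0$ the phase $\Theta(z;0)=z+2z^{-1}$ has the two simple real critical points $\pm b$ with $b=b(0)=\sqrt{2}$, so every occurrence of $b(a)$ in the statement of Theorem \ref{Big-X-q} is replaced by $\sqrt{2}$, and throughout $p=\tfrac{\ln 2}{2\pi}$.

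Next I would evaluate the amplitude. With $a=0$ one has $6ab+b^{-3}=b^{-3}=2^{-3/2}$, hence $\sqrt{6ab+b^{-3}}=2^{-3/4}$, while $\sqrt{2p}=\sqrt{\ln 2/\pi}$; multiplying these reduces the prefactor $\sqrt{2p}/(X^{3/4}\sqrt{6ab+b^{-3}})$ to $2^{3/4}\sqrt{\ln 2}/(\sqrt{\pi}\,X^{3/4})$, exactly the coefficient in (\ref{T=0}).

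Then I would simplify the phase $\phi(X,0)$ term by term. The leading oscillatory term is $2X^{1/2}\Theta(b;0)$, and since $\Theta(\sqrt{2};0)=\sqrt{2}+2/\sqrt{2}=2\sqrt{2}$, this equals $2^{5/2}X^{1/2}$. The $\ln X$ contribution $-\tfrac{p}{2}\ln X$ becomes $-\tfrac{\ln 2}{4\pi}\ln X$; the argument of the Gamma function is carried over unchanged as $\arg(\Gamma(ip))=\arg(\Gamma(\tfrac{i\ln 2}{2\pi}))$, as is the constant $-\pi/4$. The only computational care is in collecting the three remaining constants: using $24ab+4b^{-3}=4b^{-3}=\sqrt{2}$ gives $-p\ln(\sqrt{2})=-\tfrac{(\ln 2)^2}{4\pi}$; using $2b=2\sqrt{2}=2^{3/2}$ gives $-2p\ln(2b)=-\tfrac{3(\ln 2)^2}{2\pi}$; and $-2\pi p^2=-\tfrac{(\ln 2)^2}{2\pi}$. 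Expressing all three over the common denominator $4\pi$ yields $-(1+6+2)(\ln 2)^2/(4\pi)=-\tfrac{9(\ln 2)^2}{4\pi}$, precisely the constant appearing in (\ref{T=0}).

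Finally I would substitute these evaluations back into the asymptotic formula of Theorem \ref{Big-X-q}, so that $\sqrt{2p}\cos(\phi(X,a))/(X^{3/4}\sqrt{6ab+b^{-3}})$ reduces verbatim to the right-hand side of (\ref{T=0}), with the $\mathcal{O}(X^{-1})$ remainder inherited directly. The main — and essentially only — obstacle is bookkeeping: correctly tracking that three separate $(\ln 2)^2/\pi$ contributions combine into the single coefficient $9/4$, rather than any genuine analytic difficulty.
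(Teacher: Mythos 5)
Your proposal is correct and follows exactly the paper's own route: the paper proves the corollary by setting $a=0$, noting $b(0)=\sqrt{2}$, and specializing Theorem \ref{Big-X-q}, which is precisely what you do (your explicit bookkeeping of the three $(\ln 2)^2$ contributions summing to $-\tfrac{9(\ln 2)^2}{4\pi}$ checks out). No further comment is needed.
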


\begin{proof}
When $a=0$, the critical points $b(0)=\sqrt{2}$. Using Theorem \ref{Big-X-q}, we obtain Eq.~(\ref{T=0}). Thus the proof is completed.
\end{proof}

\subsection{Asymptotic behavior of $\widehat{q}^{\pm}(X,T)$ for large $T$: part results}

We will study the asymptotic behavior of $\widehat{q}^{\pm}(X,T)$ when $T$ is large. To this end, let us make the following transforms:
\bee
T:=\sigma|T|,\quad X:=w|T|^{\frac12},\quad \Lambda:=|T|^{-\frac14}z.
\ene

The phase conjugating the jump matrix for $D^{\pm}(\Lambda;X,T)$ can be rewritten as
\bee
\Lambda X+4\Lambda^3T\pm2\Lambda^{-1}=|T|^{\frac14}
(4\sigma z^3+wz\pm2z^{-1}).
\ene

Similarly to the asymptotic behavior for large $X$, we only need to consider the case that $|T|^{\frac14}(4z^3+wz+2z^{-1})$. Setting
\bee
Y(z;T,w):=D^{+}(T^{-\frac14}z;wT^{\frac12},T),\quad T>0.
\ene

Using Eq.~(\ref{fanyan-q2}), we have
\bee\label{fanyan-Y-T}
\widehat{q}^{+}(wT^{\frac12},T)=2iT^{-\frac14}\lim_{z\rightarrow\infty}z Y_{12}(z;T,w),\quad T>0.
\ene

Let $\Sigma_2$ be an arbitrary Jordan curve surrounding $z=0$. Assuming clockwise orientation of $\Sigma_2$, the boundary values on the jump contour $\Sigma_2$ are related as:
\bee\label{Jump-T}
Y_+(z;T,w)=Y_-(z;T,w)e^{-iT^{\frac14}\theta(z;w)\sigma_3}Q^{-1}e^{iT^{\frac14}\theta(z;w)\sigma_3},\quad z\in\Sigma_2.
\ene
where $\theta(z;w)=4z^3+wz+2z^{-1}$. When $|w|>\sqrt{32}$, the function $\theta(z;w)$ has double simple real roots. This case is equivalent to the case which two simple real root in the asymptotic behavior for large $X$. When $0<|w|<\sqrt{32}$, the function $\theta(z;a)$ has double simple real roots, namely, $\sqrt{\frac{-w+\sqrt{w^2+96}}{24}}$ and $-\sqrt{\frac{-w+\sqrt{w^2+96}}{24}}$. When $w=0$, the function $\Theta(z;a)$ has double simple real roots, namely, $6^{-\frac14}$ and $-6^{-\frac14}$. When $w=0$ and $w=2$, the sign chart of $\theta(z;w)$ are illustrated in Figure \ref{theta-w}.

\begin{figure}[!t]
    \centering
 \vspace{-0.15in}
  {\scalebox{0.5}[0.5]{\includegraphics{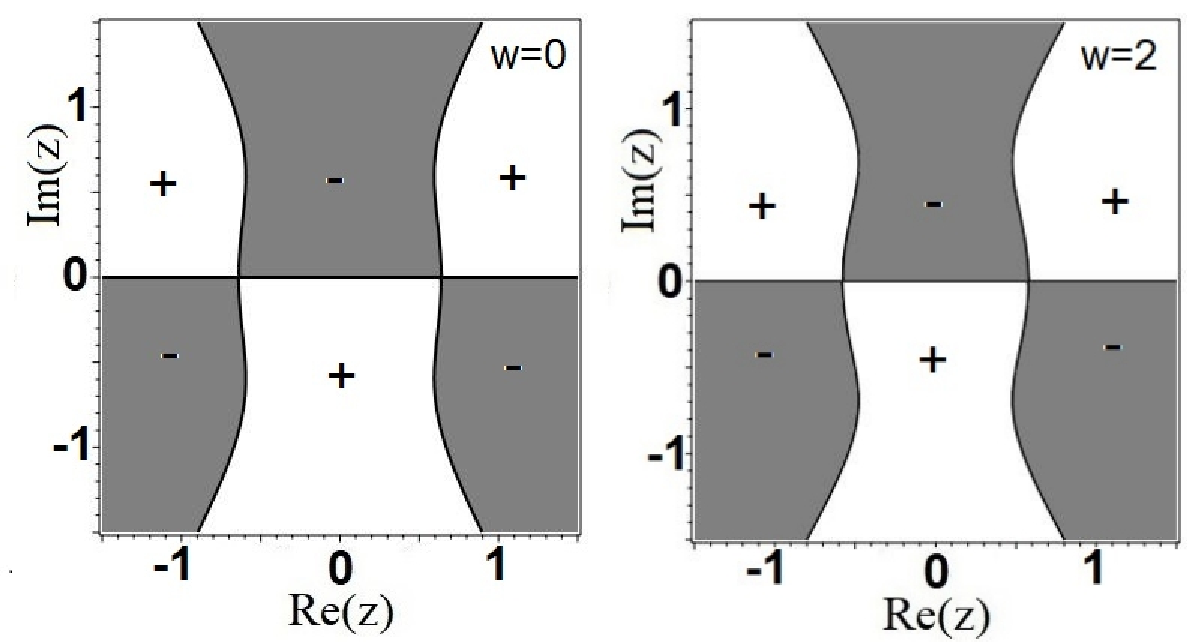}}}  
\vspace{0.1in}
\caption{Sign charts of $\theta(z;w)$ when $w=0,2$.}
   \label{theta-w}
\end{figure}

\subsubsection{The $g$-function}

Suppose that $g(z;w)$ is a scalar function, that satisfies $g(z;w)=\mathcal{O}(\frac{1}{z})$ as $z\rightarrow\infty$, and the boundary values on the jump contour are related as:
\bee
g_+(z;w)+g_-(z;w)+2\theta(z;w)=\mathrm{constant}.
\ene

It is easy to check that the function $(g_z(z;w)+\theta_z(z;w))^2$ is analytic for $z\in\mathbb{C}\setminus\!\{0\}$. Then we can obtain the asymptotic expression of the function $(g_z(z;w)+\theta_z(z;w))^2$ at $z=0$ and $z=\infty$:
\begin{align}\label{g}
\begin{aligned}
&(g_z(z;w)+\theta_z(z;w))^2=4z^{-4},\quad z\rightarrow0,\vspace{0.05in}\\
&(g_z(z;w)+\theta_z(z;w))^2=144z^4+24wz^2+\mathcal{O}(1),\quad z\rightarrow\infty.
\end{aligned}
\end{align}

According to Liouville's theorem, we can rewrite Eq.~(\ref{g}) as
\begin{align}
\begin{aligned}
&(g_z(z;w)+\theta_z(z;w))^2=z^{-4}Q(z;w),\vspace{0.05in}\\
&Q(z;w)=144z^8+24wz^6+d_4(w)z^4+d_3(w)z^3+d_2(w)z^2+4.
\end{aligned}
\end{align}

Here we will be interested in the main case in which $Q(z;w)$ has two double roots and four simple roots:
\bee\label{Q-root}
Q(z;w)=144(z^2+a_1z+a_0)^2(z^2+b_1z+b_0)(z^2+c_1z+c_0).
\ene

Expanding out the right-hand side of Eq.~(\ref{Q-root}) and comparing with the coefficients of $z^7,z^6,z^5,z^1$ and $z^0$, we have
\begin{align}\label{a1b1c1}
\begin{aligned}
&288a_1+144b_1+144c_1=0,\vspace{0.05in}\\
&144a_1^2+288a_1b_1+288a_0+144b_0+144(2a_1+b_1)c_1+144c_0=24w,\vspace{0.05in}\\
&288a_0a_1+144(a_1^2+2a_0)b_1+288a_1b_0+144(a_1^2+2a_1b_1+2a_0+b_0)c_1+144(2a_1+b_1)c_0=0,\vspace{0.05in}\\
&144a_0^2b_0c_1+144(a_0^2b_1+2a_0a_1b_0)c_0=0,\vspace{0.05in}\\
&144a_0^2b_0c_0=4.
\end{aligned}
\end{align}

To solve the above equation (\ref{a1b1c1}), we choose a constraint
\bee\no
a_1=0,\quad b_1=0,\quad c_1=0.
\ene

For convenience, we  add constraints on simple zeros of $Q(z;w)$, and the function $Q(z;w)$ can be rewritten as
\bee\label{Q-root1}
Q(z;w)=144(z^2-d_0^2)^2(z^2-(d_1+id_2)^2)(z^2-(d_1-id_2)^2).
\ene

Expanding out the right-hand side of Eq.~(\ref{Q-root1}) and comparing with the coefficients of $z^6$ and $z^0$, we have
\begin{align}\label{d0d1d2}
\begin{aligned}
&-288d_0^2-288d_1^2+288d_2^2=24w,\vspace{0.05in}\\
&144d_0^4d_1^4+288d_0^4d_1^2d_2^2+144d_0^4d_2^4=4.
\end{aligned}
\end{align}

Note that the following two constraints must be met:
\begin{align}\label{d0d1d2-1}
\begin{aligned}
&144d_0^4+(576d_1^2-576d_2^2)d_0^2+144(d_1^2+d_2^2)^2=w^2+24(\lim_{z=\infty}z^2g_z(z;w)-2),\vspace{0.05in}\\
&-288((d_1^2-d_2^2)d_0^2+(d_1^2+d_2^2)^2)d_0^2=2w(\lim_{z=\infty}z^2g_z(z;w)-2).
\end{aligned}
\end{align}

By solving Eqs.~(\ref{d0d1d2}) and (\ref{d0d1d2-1}), we obtain
\bee\no
d_0=\frac{1}{12}\sqrt{6\sqrt{w^2+96}-6w},\quad d_1=0,\quad d_2=\frac{1}{2d_0}\sqrt{\frac{12d_0^4+d_0^2w+2}{6}}.
\ene

Specifically, when $w=0$, we have
\bee\no
d_0=6^{-\frac14},\quad d_1=0,\quad d_2=6^{-\frac14}.
\ene

Then, the function $Q(z;w)$ has the following four double roots:
$d_0(w),~-d_0(w),~id_2(w),~-id_2(w).$

\begin{figure}[!t]
    \centering
 \vspace{-0.15in}
  {\scalebox{0.5}[0.5]{\includegraphics{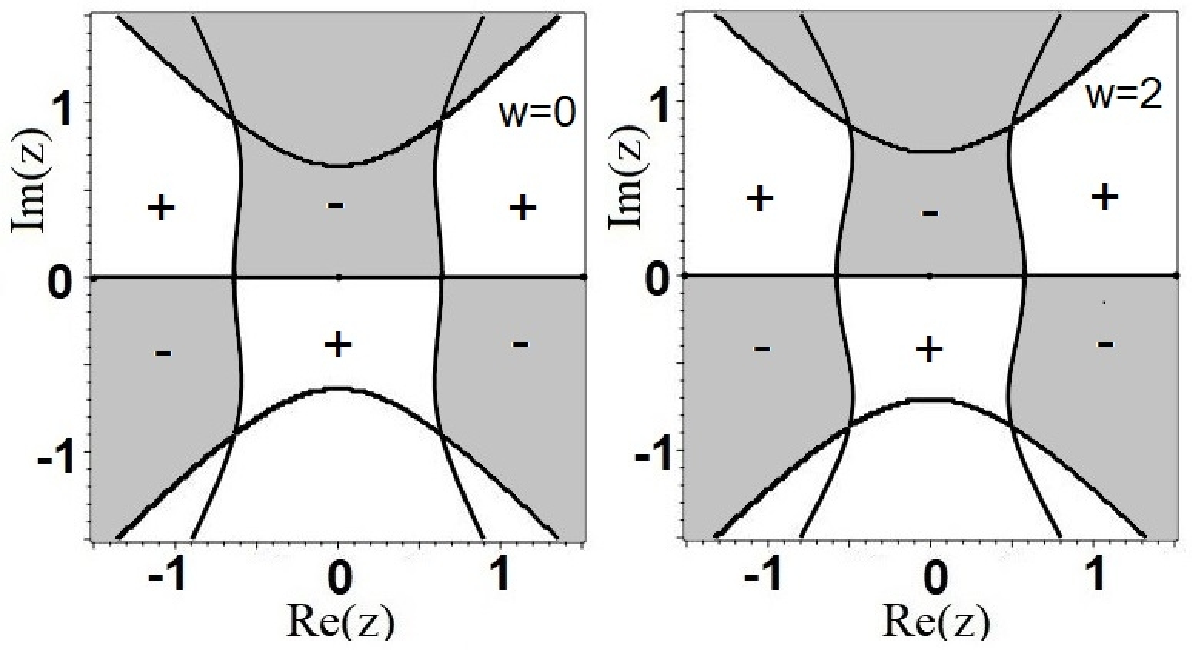}}}  
\vspace{-0.05in}
\caption{Sign charts of $h(z;w)$ when $w=0,2$.}
   \label{w=2}
\end{figure}

\subsubsection{Exponent analysis and steepest descent}

Let $h(z;w)=g(z;w)+\theta(z;w)$. For convenience, we denote $b(w):=|d_0(w)|,d(w):=id_2(w)$. There exists a component of the curve $\mathrm{Im}(h(z;a))=0$ that is a Jordan curve surrounding the origin in the $z$-plane and that passes through two real simple critical points, namely, $-b(w)$ and $b(w)$. We select this curve as the jump contour $\Sigma_2$ for $Y(z;T,w)$. The real axis divides $\Sigma_2$ into an arc $\Sigma_2^+$ in the upper half-plane and an arc $\Sigma_2^-$ in the lower half-plane. We introduce thin lens-shaped domains $L^{\pm}$ and $R^{\pm}$ on the left-hand and right-hand sides, respectively, of $\Sigma_2^{\pm}$ whose outer boundary arcs $\Sigma_3^{\pm}$ and $\Sigma_4^{\pm}$ meet the real axis at $\frac{\pi}{4}$ angles as shown in the left-hand panel of Figure \ref{fig2}. The region between $\Sigma_4^{\pm}$ and the real axis is denoted $\Omega^{\pm}$. Let the line segment $I:=[-b(w),b(w)]$. Then, we will make the following transformations:
\begin{eqnarray}
&&W(z;T,w):=Y(z;T,w)\left[\!\!\begin{array}{cc}
1& 0  \vspace{0.05in}\\
e^{2iT^{1/4}\theta(z;w)} & 1
\end{array}\!\!\right]e^{iT^{1/4}g(z;w)\sigma_3},\quad z\in L^+_{l,r},\vspace{0.05in} \no\\
&&W(z;T,w):=Y(z;T,w)\left[\!\!\begin{array}{cc}
\sqrt{2}& \frac{\sqrt{2}}{2}e^{-2iT^{1/4}\theta(z;w)}  \vspace{0.05in}\\
0 & \frac{\sqrt{2}}{2}
\end{array}\!\!\right]e^{iT^{1/4}g(z;w)\sigma_3},\quad z\in R^+_{l,r},\vspace{0.05in} \no\\
&&W(z;T,w):=Y(z;T,w)2^{\frac{\sigma_3}{2}}e^{iT^{1/4}g(z;w)\sigma_3},\quad z\in \Omega^+,\vspace{0.05in} \no\\
&&W(z;T,w):=Y(z;T,w)2^{-\frac{\sigma_3}{2}}e^{iT^{1/4}g(z;w)\sigma_3},\quad z\in \Omega^-,\vspace{0.05in}\no\\
&&W(z;T,w):=Y(z;T,w)\left[\!\!\begin{array}{cc}
\frac{\sqrt{2}}{2}& 0  \vspace{0.05in}\\
-\frac{\sqrt{2}}{2}e^{2iT^{1/4}\theta(z;w)} & \sqrt{2}
\end{array}\!\!\right]e^{iT^{1/4}g(z;w)\sigma_3},\quad z\in R^-_{l,r},\vspace{0.05in}\no\\
&&W(z;T,w):=Y(z;T,w)\left[\!\!\begin{array}{cc}
1& -e^{-2iT^{1/4}\theta(z;w)}  \vspace{0.05in}\\
0 & 1
\end{array}\!\!\right]e^{iT^{1/4}g(z;w)\sigma_3},\quad z\in L^-_{l,r},\vspace{0.05in} \\
&&W(z;T,w):=Y(z;T,w)2^{\frac{\sigma_3}{2}}\left[\!\!\begin{array}{cc}
1& -\frac12e^{-2iT^{1/4}\theta(z;w)}  \vspace{0.05in}\\
0 & 1
\end{array}\!\!\right]e^{iT^{1/4}g(z;w)\sigma_3},\quad z\in R_{1,l}^+\cup R_{1,r}^+,\vspace{0.05in}\no\\
&&W(z;T,w):=Y(z;T,w)\left[\!\!\begin{array}{cc}
1& e^{-2iT^{1/4}\theta(z;w)}  \vspace{0.05in}\\
0 & 1
\end{array}\!\!\right]e^{iT^{1/4}g(z;w)\sigma_3},\quad z\in L_{1,l}^+\cup L_{1,r}^+,\vspace{0.05in}\no\\
&&W(z;T,w):=Y(z;T,w)2^{-\frac{\sigma_3}{2}}\left[\!\!\begin{array}{cc}
1& 0  \vspace{0.05in}\\
\frac12e^{2iT^{1/4}\theta(z;w)} & 1
\end{array}\!\!\right]e^{iT^{1/4}g(z;w)\sigma_3},\quad z\in R_{1,l}^-\cup R_{1,r}^-,\vspace{0.05in}\no\\
&&W(z;T,w):=Y(z;T,w)\left[\!\!\begin{array}{cc}
1& 0  \vspace{0.05in}\\
-e^{2iT^{1/4}\theta(z;w)} & 1
\end{array}\!\!\right]e^{iT^{1/4}g(z;w)\sigma_3},\quad z\in L_{1,l}^-\cup L_{1,r}^-,\vspace{0.05in}\no\\
&&W(z;T,w):=Y(z;T,w)e^{iT^{1/4}g(z;w)\sigma_3},\quad \mathrm{otherwise}. \no
\end{eqnarray}

\begin{figure}[!t]
    \centering
 \vspace{-0.15in}
  {\scalebox{0.35}[0.35]{\includegraphics{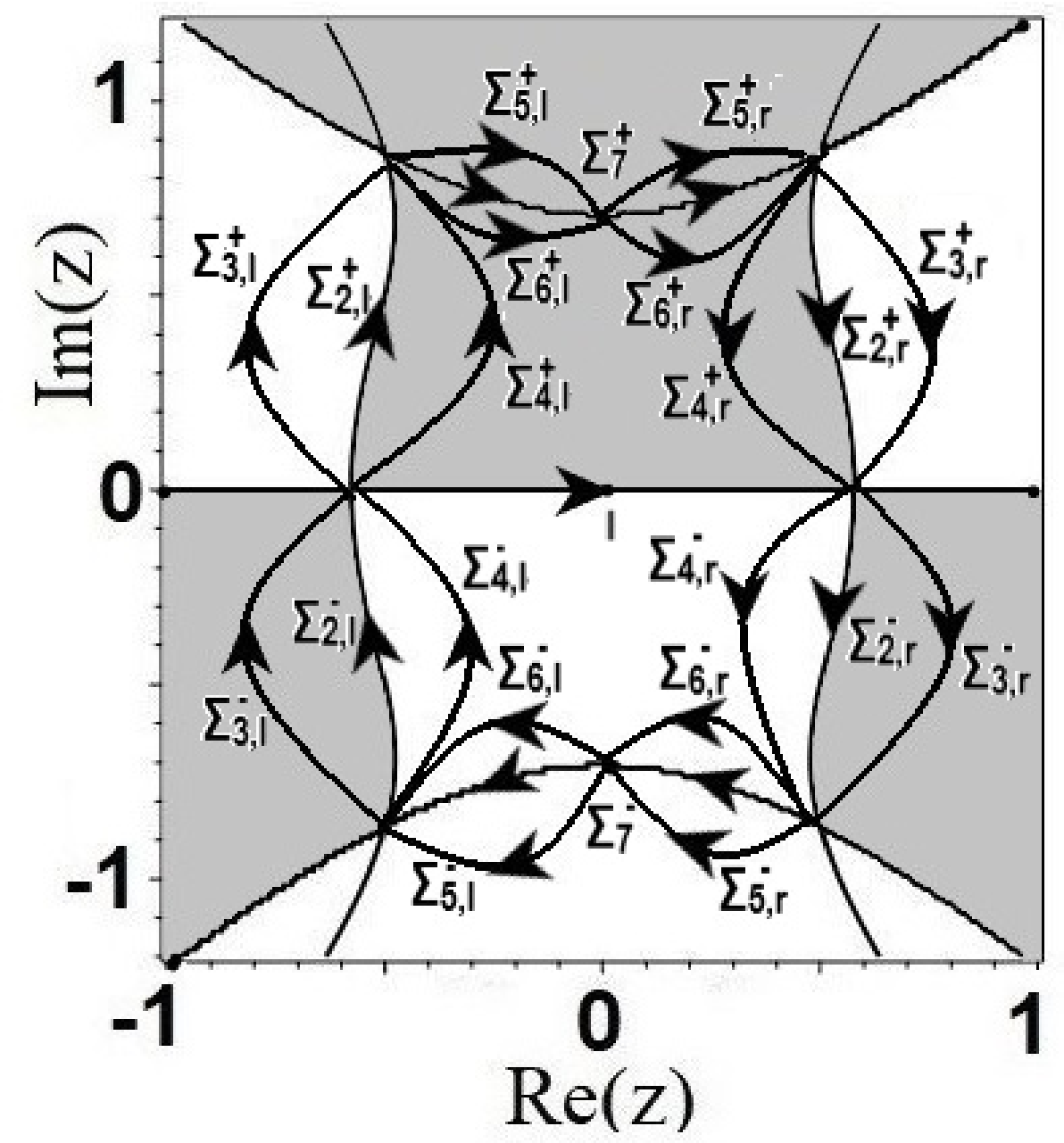}}}
\vspace{-0.05in}
\caption{Regional division for large $T$.}
   \label{fig2}
\end{figure}

We can easily find that $W(z;T,w)$ has no jump on the arc $\Sigma_2$, namely, $W_+(z;T,w)=W_-(z;T,w),\quad z\in\Sigma_2$. Below we give the jump properties of $W(z;T,w)$ on the arcs $\Sigma_j,\, (j=3,4,5,6,7)$ and line segment $I$.
\begin{eqnarray} 
&&W_+(z;T,w)=W_-(z;T,w)\left[\!\!\begin{array}{cc}
1& 0  \vspace{0.05in} \\
-e^{2iT^{\frac14}h(z;w)} & 1
\end{array}\!\!\right],\quad z\in \Sigma_{3,l,r}^+,\vspace{0.05in} \no\\
&&W_+(z;T,w)=W_-(z;T,w)\left[\!\!\begin{array}{cc}
1& \frac12e^{-2iT^{\frac14}h(z;w)} \vspace{0.05in}\\
0 & 1
\end{array}\!\!\right],\quad z\in \Sigma_{4,l,r}^+,\vspace{0.05in} \no\\
&&W_+(z;T,w)=W_-(z;T,w)2^{\sigma_3},\qquad z\in I,\vspace{0.05in} \no\\
&&W_+(z;T,w)=W_-(z;T,w)\left[\!\!\begin{array}{cc}
1& 0 \vspace{0.05in}\\
-\frac12e^{2iT^{\frac14}h(z;w)} & 1
\end{array}\!\!\right],\quad z\in \Sigma_{4,l,r}^-,\vspace{0.05in} \no\\
&&W_+(z;T,w)=W_-(z;T,w)\left[\!\!\begin{array}{cc}
1& e^{-2iT^{\frac14}h(z;w)}  \vspace{0.05in}\\
0 & 1
\end{array}\!\!\right],\quad z\in \Sigma_{3,l,r}^-,\vspace{0.05in}
\label{jump-W-1} \\
&&W_+(z;T,w)=W_-(z;T,w)\left[\!\!\begin{array}{cc}
1& -e^{-2iT^{\frac14}h(z;w)}  \vspace{0.05in}\\
0 & 1
\end{array}\!\!\right],\quad z\in \Sigma_{5,l,r}^+,\vspace{0.05in} \no\\
&&W_+(z;T,w)=W_-(z;T,w)\left[\!\!\begin{array}{cc}
1& -\frac12e^{-2iT^{\frac14}h(z;w)}  \vspace{0.05in}\\
0 & 1
\end{array}\!\!\right],\quad z\in \Sigma_{6,l,r}^+,\vspace{0.05in}\no\\
&&W_+(z;T,w)=W_-(z;T,w)\left[\!\!\begin{array}{cc}
1& 0  \vspace{0.05in}\\
\frac12e^{2iT^{\frac14}h(z;w)} & 1
\end{array}\!\!\right],\quad z\in \Sigma_{6,l,r}^-,\vspace{0.05in} \no\\
&&W_+(z;T,w)=W_-(z;T,w)\left[\!\!\begin{array}{cc}
1& 0  \vspace{0.05in}\\
e^{2iT^{\frac14}h(z;w)} & 1
\end{array}\!\!\right],\quad z\in \Sigma_{5,l,r}^-,\vspace{0.05in} \no\\
&&W_+(z;T,w)=W_-(z;T,w)\left[\!\!\begin{array}{cc}
0& e^{-iT^{\frac14}c(w)}  \vspace{0.05in}\\
-e^{iT^{\frac14}c(w)} & 0
\end{array}\!\!\right],\quad z\in\Sigma_7^+\cup\Sigma_7^-, \no
\end{eqnarray}
where $c(w)=g_+(z;w)+g_-(z;w)+2\theta(z;w),~z\in\Sigma_7^+\cup\Sigma_7^-$. Since $\mathrm{Im}(h(z;w))>0$ holds on $\Sigma_{3,r}^{+},\Sigma_{3,l}^{+},\Sigma_{4,l}^-,\Sigma_{4,r}^-,\Sigma_{5,l,r}^-,\Sigma_{6,l,r}^-$ while $\mathrm{Im}(h(z;w))<0$ holds on $\Sigma_{3,l}^-,\Sigma_{3,r}^-,\Sigma_{4,l}^{+},\Sigma_{4,r}^{+},\Sigma_{5,l,r}^+,\Sigma_{6,l,r}^+$,  the jump matrices are exponentially decreasing on these four contour arcs except near the endpoints $b(w),-b(w),d(w)$ and $-d(w)$. 
\textcolor{red}{At these four endpoints, the new models are needed, which cannot be provided in this paper now. This will be further studied in our future work.}

\subsection{Transitional asymptotic behavior: $a\rightarrow- 1/96$}

Let $\Theta(z;a):=z+4az^3+2z^{-1}$. When $a\rightarrow-\frac{1}{96}:=a_c$, there is a pair of critical points (real for $-\frac{1}{96}<a<0$ and complex-conjugate for $a<-\frac{1}{96}$) near the double critical point $b_c:=2$ and a pair of critical points (real for $-\frac{1}{96}<a<0$ and complex-conjugate for $a<-\frac{1}{96}$) near the double critical point $-b_c$. Sign chart of $\Theta(z;a)$ when $a=-\frac{1}{96}$ is shown in Figure \ref{double}. Note that $\Theta(b_c;a_c):=\frac83$ and $\Theta(-b_c;a_c):=-\frac83$. The Taylor expansion of $\Theta(z;a)$ about $z=-b_c,a=a_c$ as:
\begin{align}\label{bcac}
\begin{aligned}
\Theta(z;a):=&-\frac83-32(a-a_c)+48(a-a_c)(z+b_c)-24(a-a_c)(z+b_c)^2-\frac16(z+b_c)^3\vspace{0.05in}\\
&+4(a-a_c)(z+b_c)^3-\frac{1}{16}(z+b_c)^4-\frac{1}{32}(z+b_c)^5+\mathcal{O}((z+b_c)^6),\quad z\rightarrow-b_c,
\end{aligned}
\end{align}
and when $a=a_c$, we have
\begin{align}\label{bcac2}
\begin{aligned}
\Theta(z;a_c):=&-\frac83-\frac16(z+b_c)^3-\frac{1}{16}(z+b_c)^4-\frac{1}{32}(z+b_c)^5+\mathcal{O}((z+b_c)^6),\quad z\rightarrow-b_c.
\end{aligned}
\end{align}

\begin{figure}[!t]
    \centering
 \vspace{-0.15in}
  {\scalebox{0.27}[0.27]{\includegraphics{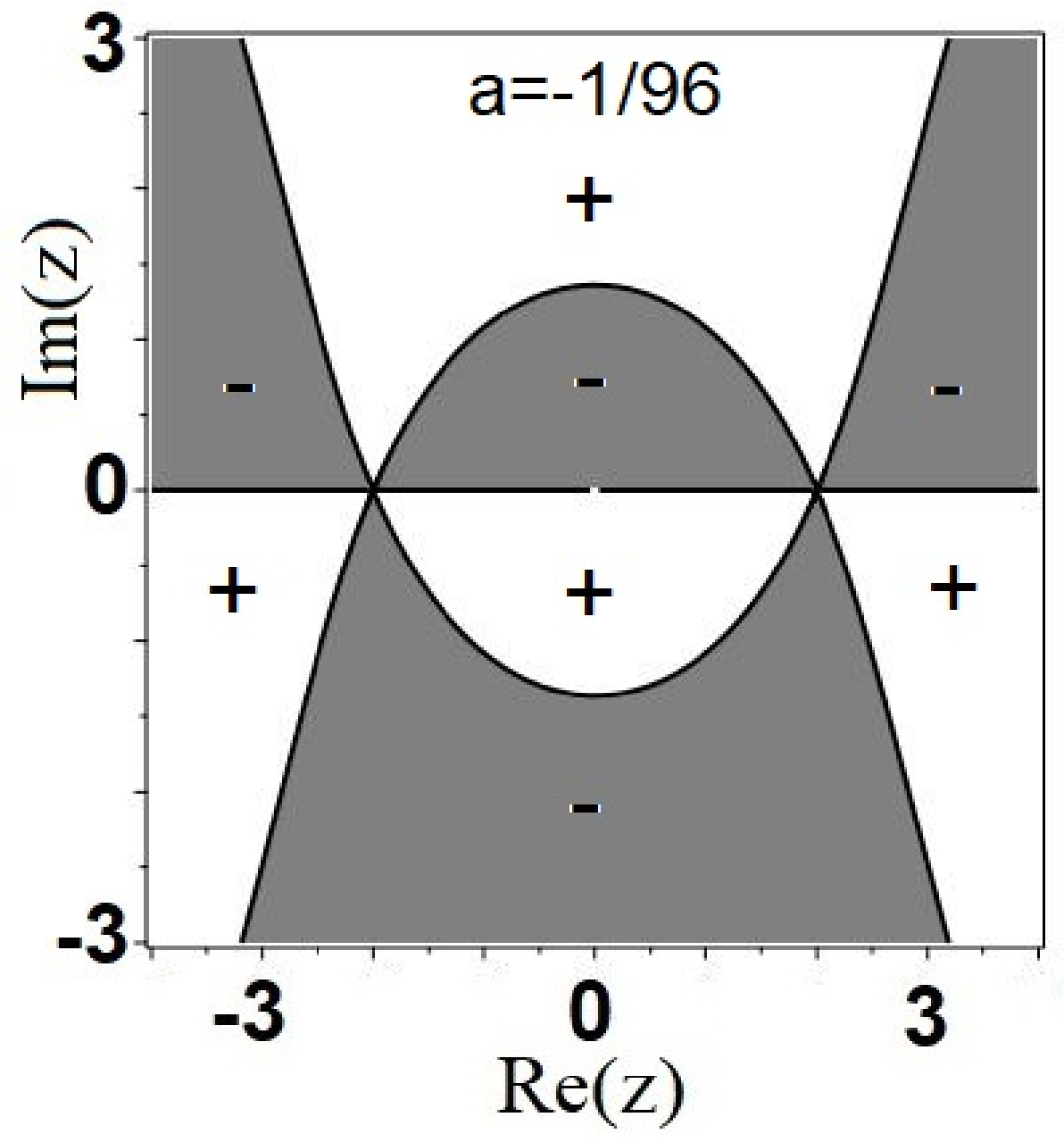}}}  
\vspace{-0.03in}
\caption{Sign charts of $\Theta(z;a)$ when $a=-\frac{1}{96}$.}
   \label{double}
\end{figure}

According to~\cite{Chester1957}, we can define a Schwarz-symmetric conformal mapping $z\mapsto M_1(z;a)$ in the neighborhood of $z=-b_c,a=a_c$ by the equation:
\bee\no
2\Theta(z;a)=M_1(z;a)^3+r_1(a)M_1(z;a)-s_1(a),
\ene
where $r_1(a)$ and $s_1(a)$ are real analytic functions of $a$ near $a_c$. And they satisfy the following properties:
\bee\label{r-s-1}
r_{1}(a_c)=0,\quad s_1(a_c)=\frac{16}{3},\quad M_1^{'}(-b_c;a_c)=-3^{-\frac13}<0,\quad s_1^{'}(a_c)=64,\quad r_1^{'}(a_c)=-96\cdot3^{\frac13}.
\ene

Moreover, we denote by $z_1(a)$ the preimage of $M_1(z;a)=0$. It is an analytic function of $a$ that satisfies $z_1(a_c)=-b_c$.

On the other hand, the Taylor expansion of $\Theta(z;a)$ about $z=b_c,a=a_c$ as:
\begin{align}\label{bcac3}
\begin{aligned}
\Theta(z;a):=&\frac83+32(a-a_c)+48(a-a_c)(z-b_c)+24(a-a_c)(z-b_c)^2-\frac16(z-b_c)^3\vspace{0.05in}\\
&+4(a-a_c)(z-b_c)^3+\frac{1}{16}(z-b_c)^4-\frac{1}{32}(z-b_c)^5+\mathcal{O}((z-b_c)^6),\quad z\rightarrow b_c,
\end{aligned}
\end{align}
and when $a=a_c$, we have
\begin{align}\label{bcac4}
\begin{aligned}
\Theta(z;a_c):=&\frac83-\frac16(z-b_c)^3+\frac{1}{16}(z-b_c)^4-\frac{1}{32}(z-b_c)^5+\mathcal{O}((z-b_c)^6),\quad z\rightarrow b_c.
\end{aligned}
\end{align}

Similarly, we can define a Schwarz-symmetric conformal mapping $z\mapsto M_2(z;a)$ in the neighborhood of $z=b_c,a=a_c$ by the equation:
\bee\no
-2\Theta(z;a)=M_2(z;a)^3+r_2(a)M_2(z;a)-s_2(a),
\ene
where $r_2(a)$ and $s_2(a)$ are real analytic functions of $a$ near $a_c$. And they satisfy the following properties:
\bee\label{r-s-2}
r_{2}(a_c)=0,\quad s_2(a_c)=\frac{16}{3},\quad M_2'(b_c;a_c)=3^{-\frac13}>0,\quad s_2'(a_c)=64,\quad r_2'(a_c)=-96\cdot3^{\frac13}.
\ene
Moreover, we denote by $z_2(a)$ the preimage of $M_2(z;a)=0$. It is an analytic function of $a$ that satisfies $z_2(a_c)=b_c$

\subsubsection{Parametrix modification}

The matrix function $W(z;X,a)$ and its jumping properties are given by Eqs.~(\ref{jump-W}) and (\ref{WY-1}). We define the following function:
\bee\no
\overline{W}^o(z;a)=\left(\frac{z-z_1(a)}{z-z_2(a)}\right)^{ip\sigma_3},\quad z\in\mathbb{C}\setminus [z_1(a),z_2(a)],
\ene
where $p=\frac{\ln(2)}{2\pi}$. Then, we get the jump condition of the function $\overline{W}^o(z;a)$ on the line segment $[z_1(a),z_2(a)]$:
\bee\no
\overline{W}_+^o(z;a)=\overline{W}_-^o(z;a)2^{\sigma_3},\quad z\in [z_1(a),z_2(a)].
\ene

Let $\xi_{-b}:=X^{\frac16}M_1,~y_1:=X^{\frac13}r_1$ and $\xi_{b}:=X^{\frac16}M_2,~y_2:=X^{\frac13}r_2$. Then the jump conditions satisfied by
\bee\no
U^{-b}:=iW(z;X,a)e^{\frac{i}{2}X^{\frac12}s_1(a)\sigma_3}\sigma_2,\quad~\mathrm{near}~z=-b_c
\ene
and by
\bee\no
U^{b}:=W(z;X,a)e^{-\frac{i}{2}X^{\frac12}s_2(a)\sigma_3}\quad~\mathrm{near}~z=b_c.
\ene

The outer parametrix can also be expressed near $z=-b_c$ or $z=b_c$ in terms of the relevant conformal coordinate:
\begin{align}
\begin{aligned}
&i\overline{W}^o(z;a)e^{\frac{i}{2}X^{\frac12}s_1(a)\sigma_3}\sigma_2=X^{-\frac{i}{6}p\sigma_3}e^{\frac{i}{2}X^{\frac12}s_1(a)\sigma_3}T^{-b}(z;a)
\xi_{-b}^{-ip\sigma_3},\vspace{0.05in}\\
&\overline{W}^o(z;a)e^{-\frac{i}{2}X^{\frac12}s_2(a)\sigma_3}=X^{\frac{i}{6}p\sigma_3}e^{-\frac{i}{2}X^{\frac12}s_2(a)\sigma_3}T^{b}(z;a)
\xi_{b}^{-ip\sigma_3},
\end{aligned}
\end{align}
where
\begin{align}
\begin{aligned}
&T^{-b}(z;a):=i(z_2(a)-z)^{-ip\sigma_3}\left(\frac{z_1(a)-z}{M_1(z;a)}\right)^{ip\sigma_3}\sigma_2,\vspace{0.05in}\\
&T^{b}(z;a):=(z-z_1(a))^{ip\sigma_3}\left(\frac{M_2(z;a)}{z-z_2(a)}\right)^{ip\sigma_3}.
\end{aligned}
\end{align}

Below we give the jump properties of $U^{-b}$ and $U^{b}$:
\begin{align}
\begin{aligned}
&U^{b}_+=U^{b}_-\left[\!\!\begin{array}{cc}
1& 0  \vspace{0.05in}\\
e^{-i(\xi_{b}^3+y_2\xi_b)} & 1
\end{array}\!\!\right],\quad z\in \Sigma_3^+\,(\mathrm{away~from}~z_2(a)),\vspace{0.08in}\\
&U^{b}_+=U^{b}_-\left[\!\!\begin{array}{cc}
1& \frac12e^{i(\xi_{b}^3+y_2\xi_b)} \vspace{0.08in}\\
0 & 1
\end{array}\!\!\right],\quad z\in \Sigma_4^+\,(\mathrm{toward}~z_2(a)),\vspace{0.08in}\\
&U^{b}_+=U^{b}_-2^{\sigma_3},\qquad\qquad\qquad\qquad  z\in I\,(\mathrm{toward}~z_2(a)),\vspace{0.08in}\\
&U^{b}_+=U^{b}_-\left[\!\!\begin{array}{cc}
1& 0 \vspace{0.08in}\\
\frac12e^{-i(\xi_{b}^3+y_2\xi_b)} & 1
\end{array}\!\!\right],\quad z\in \Sigma_4^-\,(\mathrm{toward}~z_2(a)),\vspace{0.08in}\\
&U^{b}_+=U^{b}_-\left[\!\!\begin{array}{cc}
1& e^{i(\xi_{b}^3+y_2\xi_b)}  \vspace{0.05in}\\
0 & 1
\end{array}\!\!\right],\qquad z\in \Sigma_3^-\,(\mathrm{away~from}~z_2(a)).
\end{aligned}
\end{align}
and
\begin{align}
\begin{aligned}
&U^{-b}_+=U^{-b}_-\left[\!\!\begin{array}{cc}
1& e^{i(\xi_{-b}^3+y_1\xi_{-b})}  \vspace{0.05in}\\
0 & 1
\end{array}\!\!\right],\quad z\in \Sigma_3^+,(\mathrm{away~from}z_1(a))\vspace{0.05in}\\
&U^{-b}_+=U^{-b}_-\left[\!\!\begin{array}{cc}
1& 0 \vspace{0.05in}\\
\frac12e^{-i(\xi_{-b}^3+y_1\xi_{-b})} & 1
\end{array}\!\!\right],\quad z\in \Sigma_4^+\,(\mathrm{toward}~z_1(a)),\vspace{0.05in}\\
&U^{b}_+=U^{b}_-2^{\sigma_3},\qquad\qquad\qquad\qquad\qquad  z\in I\,(\mathrm{toward}~z_1(a)),\vspace{0.05in}\\
&U^{-b}_+=U^{-b}_-\left[\!\!\begin{array}{cc}
1& 0 \vspace{0.05in}\\
e^{-i(\xi_{-b}^3+y_1\xi_{-b})}& 1
\end{array}\!\!\right],\quad z\in \Sigma_3^-\,(\mathrm{away~from}~z_1(a)),\vspace{0.05in}\\
&U^{-b}_+=U^{-b}_-\left[\!\!\begin{array}{cc}
1& \frac12e^{i(\xi_{-b}^3+y_1\xi_{-b})}  \vspace{0.05in}\\
0 & 1
\end{array}\!\!\right],\quad z\in \Sigma_4^-\,(\mathrm{toward}~z_1(a)).
\end{aligned}
\end{align}

We can normalize the jump matrix near point $z_1(a)$ and the jump matrix near point $z_2(a)$ into the jump matrix about $\xi$.
\begin{figure}[!t]
    \centering
 \vspace{-0.15in}
  {\scalebox{0.35}[0.35]{\includegraphics{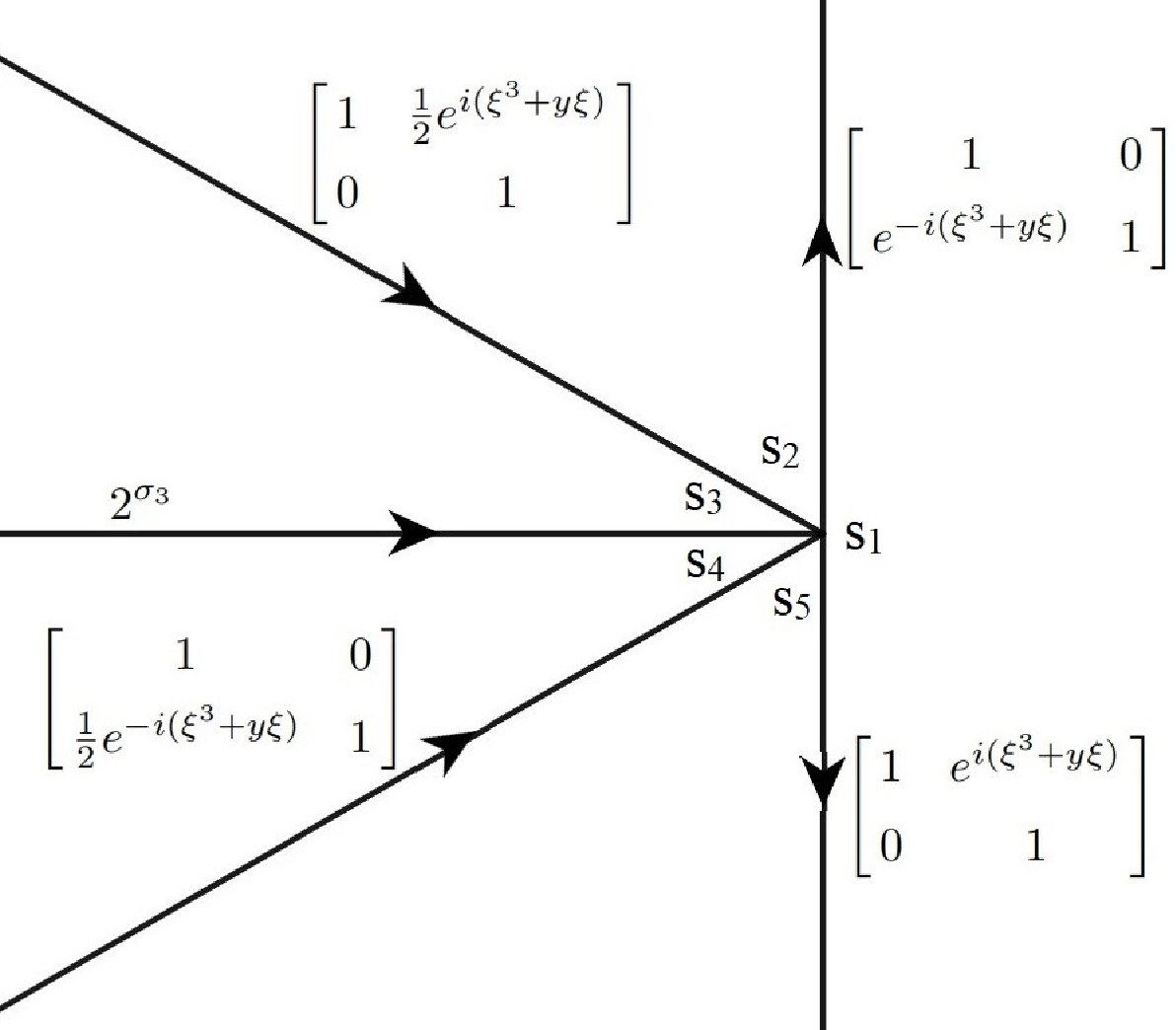}}}  
\vspace{0.05in}
\caption{Regional division for $U(\xi)$.}
   \label{guodu-1}
\end{figure}

\begin{prop}\label{RH8} Riemann-Hilbert Problem 8

Find a $2\times 2$ matrix $U(\xi;y)$ that satisfies the following properties:

\begin{itemize}

\item{} Analyticity: $U(\xi;y)$ is analytic for $\xi$ in the five regions shown in Figure \ref{guodu-1}, namely, $S_1: |\mathrm{arg}(\xi)|<\frac{\pi}{2}$, $S_2: \frac{\pi}{2}<\mathrm{arg}(\xi)<\frac{5\pi}{6}$, $S_3: \frac{5\pi}{6}<\mathrm{arg}(\xi)<\pi$, $S_4:-\pi<\mathrm{arg}(\xi)<-\frac{5\pi}{6}$, $S_5:-\frac{5\pi}{6}<\mathrm{arg}(\xi)<-\frac{\pi}{2}$ where $-\pi<\mathrm{arg}(\xi)\leq\pi$. It takes continuous boundary values on the excluded rays and at the origin from each sector.

\item{} Jump conditions: The boundary values on the jump contour are related as:
\begin{align}
\begin{aligned}
&U_+(\xi;y)=U_-(\xi;y)\left[\!\!\begin{array}{cc}
1& 0  \vspace{0.05in}\\
e^{-i(\xi^3+y\xi)} & 1
\end{array}\!\!\right],\quad \mathrm{arg}(\xi)=\frac{\pi}{2},\vspace{0.05in}\\
&U_+(\xi;y)=U_-(\xi;y)\left[\!\!\begin{array}{cc}
1& \frac12e^{i(\xi^3+y\xi)} \vspace{0.05in}\\
0 & 1
\end{array}\!\!\right],\quad \mathrm{arg}(\xi)=\frac{5\pi}{6},\vspace{0.05in}\\
&U_+(\xi;y)=U_-(\xi;y)2^{\sigma_3},\qquad\qquad \mathrm{arg}(\xi)=\pi,\vspace{0.05in}\\
&U_+(\xi;y)=U_-(\xi;y)\left[\!\!\begin{array}{cc}
1& 0 \vspace{0.05in}\\
\frac12e^{-i(\xi^3+y\xi)} & 1
\end{array}\!\!\right],\quad \mathrm{arg}(\xi)=-\frac{5\pi}{6},\vspace{0.05in}\\
&U_+(\xi;y)=U_-(\xi;y)\left[\!\!\begin{array}{cc}
1& e^{i(\xi^3+y\xi)}  \vspace{0.05in}\\
0 & 1
\end{array}\!\!\right],\quad \mathrm{arg}(\xi)=-\frac{\pi}{2}.
\end{aligned}
\end{align}

\item{} Normalization: $U(\xi;y)\xi^{ip\sigma_3}$ tends to the identity matrix as $\xi\rightarrow\infty$, where $p=\frac{\ln(2)}{2\pi}$.

\end{itemize}
\end{prop}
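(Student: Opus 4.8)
The plan is to prove that Riemann--Hilbert Problem 8 is uniquely solvable for each real $y$ and to identify its solution with a distinguished Painlev\'e-II transcendent, in complete parallel with the treatment of RHP 7: there the quadratic phase produced parabolic cylinder functions, whereas here the cubic phase $\xi^3+y\xi$ will produce Painlev\'e II instead. The first step is to normalize at infinity and trivialize the branch-cut jump. Since $e^{2\pi p}=2$ with $p=\tfrac{\ln 2}{2\pi}$, the scalar factor $\xi^{ip\sigma_3}$ carries a multiplicative jump proportional to $2^{\sigma_3}$ across the ray $\arg(\xi)=\pi$; it can therefore be used to cancel the constant $2^{\sigma_3}$ jump of $U$ there while simultaneously producing the stated normalization $U\xi^{ip\sigma_3}\to\mathbb I$, so that the diagonal jump becomes a \emph{formal monodromy} exponent rather than an analytic obstruction. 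A diagonal gauge $\Psi:=U\xi^{ip\sigma_3}e^{-\frac{i}{2}(\xi^3+y\xi)\sigma_3}$ then converts the $\xi$- and $y$-dependent off-diagonal entries $e^{\pm i(\xi^3+y\xi)}$ on the four rays $\arg(\xi)=\pm\tfrac{\pi}{2},\pm\tfrac{5\pi}{6}$ into constant Stokes matrices whose off-diagonal multipliers are read off from the data $\{1,\tfrac12\}$. After this reduction the problem is exactly the Flaschka--Newell/Its--Kapaev isomonodromy representation of Painlev\'e II.

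Second, I would establish unique solvability by Zhou's vanishing lemma, exactly as in the proof of Theorem \ref{prop5}. The jump matrices on the conjugate rays $\arg(\xi)=\tfrac{\pi}{2}$ and $\arg(\xi)=-\tfrac{\pi}{2}$ (and likewise $\pm\tfrac{5\pi}{6}$) are Hermitian conjugates of one another under $\xi\mapsto\xi^*$, precisely the Schwarz structure $V(\xi)=\big(V(\xi^*)\big)^{\dagger}$ exploited earlier; together with the positivity of the diagonal jump $2^{\sigma_3}$ and the normalization $\Psi\to\mathbb I$, the hypotheses of the vanishing lemma are met, so the associated homogeneous problem admits only the zero solution and $U(\xi;y)$ exists, is unique, and has unit determinant. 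The same symmetry forces the extracted transcendent to be real for real $y$, consistent with Proposition \ref{prop-real}. I would also verify the cyclic consistency condition that the ordered product of the five jump matrices encircling $\xi=0$ equals the identity, confirming that the five-ray configuration is compatible with a single-valued $U$ off the contour.

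Third, I would produce the Lax pair and the defining equation. Because $\Psi$ has jumps independent of both $\xi$ and $y$, the logarithmic derivatives $\mathcal A(\xi;y):=\Psi_\xi\Psi^{-1}$ and $\mathcal B(\xi;y):=\Psi_y\Psi^{-1}$ extend to entire functions of $\xi$, and by the behavior at infinity they are polynomial, of degrees two and one respectively. Writing $U(\xi;y)=\big(\mathbb I+U^{[1]}(y)\xi^{-1}+\cdots\big)\xi^{-ip\sigma_3}$ and extracting $\mathcal V_1(y)$ from the off-diagonal part of $U^{[1]}(y)$, the compatibility condition $\mathcal A_y-\mathcal B_\xi+[\mathcal A,\mathcal B]=0$ collapses to the Painlev\'e-II equation for $\mathcal V_1$, with the particular solution singled out by the Stokes and formal-monodromy data computed in the first step. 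This is the function $\mathcal V_1$ entering the transitional asymptotics of Theorem \ref{new-prop}.

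The main obstacle is global solvability for all real $y$, equivalently the absence of real poles of the resulting Painlev\'e-II transcendent: generic solutions have poles on the real axis, and only special monodromy data (the Ablowitz--Segur regime $|s|<1$, or the Hastings--McLeod solution) yield real-pole-free transcendents, so I would have to show that the Stokes multipliers induced by $\{1,\tfrac12,2\}$ lie in the admissible range, so that the small-norm error analysis controlling $E(z;X,a)$ near the coalescing critical points $\pm b_c$ stays valid uniformly as $a\to a_c=-\tfrac{1}{96}$. A secondary difficulty is the bookkeeping of the exponent $p$: one must confirm that pairing $\xi^{ip\sigma_3}$ at infinity with the $2^{\sigma_3}$ jump on $\arg(\xi)=\pi$ reproduces the correct parameter, since an error there would propagate directly into the phase $\phi(X,a)$ of Theorem \ref{new-prop}.
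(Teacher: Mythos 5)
Your proposal and the paper diverge in a basic way: the paper offers \emph{no proof} of this proposition. RHP~8 is stated as a model problem whose jump conditions are simply read off from the local jump relations for $U^{-b}$ and $U^{b}$ derived in the lines immediately preceding it (the rescaling $\xi_{\mp b}=X^{1/6}M_{1,2}$, $y_{1,2}=X^{1/3}r_{1,2}$ turns the cubic conformal phase into $\xi^3+y\xi$), and its solvability and the representation of the solution via Painlev\'e-II are taken from the cited reference \cite{miller-sigma}; the paper then only records the consequences it needs, namely the expansion $U\xi^{ip\sigma_3}\sim\mathbb I+\sum_j U^{[j]}(y)\xi^{-j}$, the functions $\mathcal V_{1,2}$, and the Painlev\'e-II equations (\ref{27-painleve-1})--(\ref{27-painleve-2}). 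Your plan is therefore a genuinely different, more self-contained route: gauging by $\xi^{ip\sigma_3}e^{-\frac{i}{2}(\xi^3+y\xi)\sigma_3}$ to absorb the formal monodromy $2^{\sigma_3}$ and render the Stokes matrices constant, invoking Zhou's vanishing lemma via the Schwarz symmetry $V(\xi)=\bigl(V(\xi^*)\bigr)^{\dagger}$ (which does hold here, mirroring the argument used for Theorem~\ref{prop5}), checking the cyclic consistency of the five jumps at the origin, and deriving Painlev\'e-II from the isomonodromy Lax pair. What your approach buys is independence from the external reference; what the paper's approach buys is exactly the hard part you flag yourself. Establishing that the Stokes data induced by the multipliers $\{1,\tfrac12,2^{\sigma_3}\}$ (equivalently $p=\tfrac{\ln 2}{2\pi}$) yields a transcendent with no poles on the relevant real range of $y$ is not a routine vanishing-lemma consequence — the gauge that removes $2^{\sigma_3}$ destroys the exact Hermitian structure needed on the ray $\arg\xi=\pi$ unless handled carefully, and pole-freeness for increasing-tritronqu\'ee-type solutions is precisely the delicate content of the cited work. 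If you intend your argument to stand alone, that step must be supplied rather than deferred; otherwise your outline is sound and consistent with what the paper uses downstream in Theorem~\ref{new-prop}.
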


The solution of the Riemann-Hilbert Problem 8 can be expressed explicitly according to the Painlev\'{e}-II differential equation~\cite{miller-sigma}. In particular, the solution satisfies:
\bee
U(\xi;y)\xi^{ip\sigma_3}\sim\mathbb{I}+\sum\limits_{j=1}^{\infty}U^{[j]}(y)\xi^{-j},\quad \xi\rightarrow\infty,
\ene
uniformly in all directions of the complex $\xi$-plane. Let
\bee\no
\mathcal{V}_1(y):=U_{21}^{[1]}(y)=\lim\limits_{\xi\rightarrow\infty}\xi U_{21}(\xi;y)\xi^{ip},
\ene
and $\mathcal{Q}(y):=\frac{\mathcal{V}_1^{'}(y)}{\mathcal{V}_1(y)}$. Then the function $\mathcal{Q}(y)$ satisfies the Painlev\'{e}-II differential equation:
\bee\label{27-painleve-1}
\frac{d^2\mathcal{Q}}{dy^2}+\frac{2}{3}y\mathcal{Q}-2\mathcal{Q}^3-\frac23ip-\frac13=0,
\ene
with the following asymptotic behavior
\bee\no
\mathcal{Q}(y)=i(-\frac{y}{3})^{\frac12}-(\frac14+\frac{ip}{2})\frac{1}{y}+\mathcal{O}(|y|^{-\frac52}),\quad |\mathrm{arg}(-y)|<\frac{2\pi}{3}.
\ene

The alternate formula for the function $\mathcal{V}_1(y)$ is given as follows:
\bee
\mathcal{V}_1(y)
=\begin{cases}
\d -\frac{i\gamma^*}{2}e^{-\frac{2\sqrt{3}i}{9}(-y)^{\frac32}}(-3y)^{-(\frac14+\frac{ip}{2})}\exp\l(\int_{-\infty}^y\mathcal{Q}(k)
-i(-\frac{k}{3})^\frac12+\frac{1+2ip}{4k}dk\r),\quad y<0,\vspace{0.05in} \\
\mathcal{V}_1(-1)\exp\l(\d\int_{-1}^y\mathcal{Q}(k)dk\r),\quad y\geq0.
\end{cases}
\ene

Then, $\mathcal{V}_1(y)$ has the asymptotic behavior:
\bee
\mathcal{V}_1(y)=-\sqrt{\frac{y}{6}}\left(\frac{y}{6}\right)^{ip}+\mathcal{O}(y^{-\frac14}),\quad y\rightarrow\infty.
\ene

And the function $\mathcal{V}_2(y)$ defined by the formula:
\bee\no
\mathcal{V}_2(y):=U_{12}^{[1]}(y)=\lim\limits_{\xi\rightarrow\infty}\xi U_{12}(\xi;y)\xi^{-ip},
\ene
and $\mathcal{P}(y):=\frac{\mathcal{V}_2^{'}(y)}{\mathcal{V}_2(y)}$. Then the function $\mathcal{P}(y)$ satisfies the Painlev\'{e}-II differential equation:
\bee\label{27-painleve-2}
\frac{d^2\mathcal{P}}{dy^2}+\frac{2}{3}y\mathcal{P}-2\mathcal{P}^3+\frac23ip-\frac13=0,
\ene
with the following asymptotic behavior
\bee\no
\mathcal{P}(y)=i(-\frac{y}{3})^{\frac12}-(\frac14-\frac{ip}{2})\frac{1}{y}+\mathcal{O}(|y|^{-\frac52}),\quad |\mathrm{arg}(-y)|<\frac{2\pi}{3}.
\ene
Finally, $\mathcal{V}_2(y)$ has the asymptotic behavior:
\bee
\mathcal{V}_2(y)=\sqrt{\frac{y}{6}}\left(\frac{y}{6}\right)^{-ip}+\mathcal{O}(y^{-\frac14}),\quad y\rightarrow\infty.
\ene
Then, we can get $\mathcal{V}_2(y)=-\mathcal{V}_1(y)^*$. According to Proposition \ref{RH8}, we can obtain the following local functions:
\begin{align}
\begin{aligned}
\overline{W}^{-b}(z;X,a)=&-iX^{-\frac{ip\sigma_3}{6}}e^{\frac{i}{2}X^{\frac12}s_1(a)\sigma_3}T^{-b}(z;a)\vspace{0.05in}\\
&\times U(X^{\frac16}M_1(z;a);X^\frac{1}{3}r_1(a))\sigma_2e^{-\frac{i}{2}X^{\frac12}s_1(a)\sigma_3},\quad z\in\delta_{-b_c}(r),\vspace{0.05in}\\
\overline{W}^{b}(z;X,a)=&X^{\frac{ip\sigma_3}{6}}e^{-\frac{i}{2}X^{\frac12}s_2(a)\sigma_3}T^{b}(z;a)\vspace{0.05in}\\
&\times U(X^{\frac16}M_2(z;a);X^\frac{1}{3}r_2(a))e^{\frac{i}{2}X^{\frac12}s_2(a)\sigma_3},\quad z\in\delta_{b_c}(r),
\end{aligned}
\end{align}
where $\delta_{-b}(r)$ and $\delta_{b}(r)$ are small circles with $-b(a)$ as the center and $r$ as the radius, and small circles with $b(a)$ as the center and $r$ as the radius, respectively. When $-\frac{1}{96}<a<0$, we define the following piecewise function:
\bee
\overline{W}(z;X,a):=\begin{cases}
\overline{W}^{-b}(z;X,a),\quad z\in\delta_{-b_c}(r),\vspace{0.05in}\\
\overline{W}^{b}(z;X,a),\quad z\in\delta_{b_c}(r),\vspace{0.05in}\\
\overline{W}^o(z;a),\quad z\in\mathbb{C}\setminus(\overline{\delta_{-b_c}(r)}\cup\overline{\delta_{b_c}(r)}\cup [z_1(a),z_2(a)]).
\end{cases}
\ene
Then, we will conduct the error analysis.

\subsubsection{Error analysis}

We consider the following error function:
\bee\label{EW-double}
E(z;X,a):=W(z;X,a)\overline{W}(z;X,a)^{-1}.
\ene

Since $W(z;X,a)$ and $\overline{W}(z;X,a)$ have the same jump property on the line segment $[z_1(a),z_2(a)]$, $E(z;X,a)$ has no jump across $[z_1(a),z_2(a)]$. The error function $E(z;X,a)$ is analytic in $\mathbb{C}\setminus\Sigma_5$ and takes continuous boundary values on $\Sigma_5$, where $\Sigma_5=\left(\Sigma_3^{\pm}\cup\Sigma_4^{\pm}\cup\partial\delta_{-b_c}(r)\cup\partial\delta_{b_c}(r)\right)\cap\complement_{\mathbb{C}}{(\delta_{-b_c}(r)
\cup\delta_{b_c}(r))}$. Assuming the clockwise orientation of $\delta_{-b_c}(r)$ and $\delta_{b_c}(r)$, the boundary values on the jump contour $\Sigma_5$ are related as:
\bee\no
E_+(z;X,a):=E_-(z;X,a)V^E(z;X,a).
\ene

Next, we will analyze the error factors of $V^E(z;X,a)$ on each arc of $\Sigma_5$.
\begin{prop}\label{error-double}
The following identity holds:\\
{\rm (I)}
\bee
\sup_{z\in\left(\Sigma_3^{\pm}\cup\Sigma_4^{\pm}\right)\cap\complement_{\mathbb{C}}{(\overline{\delta_{-b_c}(r)}\cup\overline{\delta_{b_c}(r)})}}
||V^E(z;X,a)-\mathbb{I}||=\mathcal{O}(e^{-C(a)X^{\frac12}}),\quad X\rightarrow+\infty,
\ene
where $C(a)$ is a constant about $a$ and $||\cdot||$ denotes the matrix norm.

{\rm (II)}
\bee
\sup_{z\in\partial\delta_{-b_c}(r)\cup\partial\delta_{b_c}(r)}||V^E(z;X,a)-\mathbb{I}||=\mathcal{O}(X^{-\frac16}),\quad X\rightarrow+\infty.
\ene

\end{prop}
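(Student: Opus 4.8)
The plan is to follow the same two-regime structure used in the proof of Proposition \ref{error1}, adapting it to the transitional parametrix built from the Painlev\'e-II model of Riemann-Hilbert Problem 8 rather than the parabolic-cylinder model of Riemann-Hilbert Problem 7. Recall that the error matrix is $E(z;X,a)=W(z;X,a)\overline{W}(z;X,a)^{-1}$, and since $W$ and $\overline{W}$ carry the identical jump $2^{\sigma_3}$ on the segment $[z_1(a),z_2(a)]$, this cut cancels and $E$ has jumps only on $\Sigma_5$. Thus $V^E-\mathbb{I}$ must be estimated separately on the lens boundaries $\Sigma_3^{\pm}\cup\Sigma_4^{\pm}$ exterior to the disks and on the two circles $\partial\delta_{\pm b_c}(r)$, which yield the two claimed bounds.

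For part (I) I would argue that on $\left(\Sigma_3^{\pm}\cup\Sigma_4^{\pm}\right)\cap\complement_{\mathbb{C}}(\overline{\delta_{-b_c}(r)}\cup\overline{\delta_{b_c}(r)})$ the outer parametrix coincides with $\overline{W}^o(z;a)$, which is analytic and bounded there since its only jump is across the cut, so $V^E=\overline{W}^o_-(V^W)(\overline{W}^o_+)^{-1}$ with $V^W-\mathbb{I}$ given by the off-diagonal entries $\pm e^{\pm 2iX^{\frac12}\Theta(z;a)}$ or $\pm\tfrac12 e^{\pm 2iX^{\frac12}\Theta(z;a)}$ of the jumps (\ref{jump-W}). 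The sign structure of $\mathrm{Im}(\Theta(z;a))$ recorded in the sign chart (Figure \ref{double}) guarantees that each relevant exponential decays like $e^{-C(a)X^{\frac12}}$ uniformly away from the critical points $\pm b_c$, and conjugation by the bounded factor $\overline{W}^o$ preserves this rate, giving formula (I).

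For part (II) I would compute $V^E=\overline{W}^{\mp b}(\overline{W}^o)^{-1}$ on $\partial\delta_{\mp b_c}(r)$ by re-expressing the outer parametrix in the conformal coordinates $M_1,M_2$, exactly as in (\ref{W-bwb}). After this matching, every factor other than the model block cancels or is a bounded conjugator, bounded because the diagonal phases $X^{\pm ip\sigma_3/6}$ and $e^{\pm\frac{i}{2}X^{\frac12}s_j(a)\sigma_3}$ are unitary and $T^{\pm b}$ is bounded on a fixed circle, so that the nontrivial content of $V^E-\mathbb{I}$ is carried by $U(\xi_{\mp b};y_{\mp})\xi_{\mp b}^{ip\sigma_3}-\mathbb{I}$, with $\xi_{\mp b}=X^{\frac16}M_{1,2}(z;a)$. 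On a circle of radius $r$ the conformal maps $M_{1,2}$ are bounded away from zero, hence $|\xi_{\mp b}|\gtrsim X^{\frac16}$, and the large-$\xi$ normalization from Riemann-Hilbert Problem 8, $U(\xi;y)\xi^{ip\sigma_3}\sim\mathbb{I}+\sum_{j\geq1}U^{[j]}(y)\xi^{-j}$, yields $U(\xi_{\mp b};y_{\mp})\xi_{\mp b}^{ip\sigma_3}-\mathbb{I}=\mathcal{O}(X^{-\frac16})$, which is formula (II); note that the exponent $\tfrac16$ replaces the $\tfrac14$ of Proposition \ref{error1} precisely because the cubic conformal map forces the slower scaling $\xi\sim X^{1/6}$.

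The hard part is the uniformity of this last step. Unlike the parabolic-cylinder setting of Proposition \ref{error1}, where the model solution is completely explicit and its expansion (\ref{RH7-solu}) is elementary, here the model block solves the Painlev\'e-II Riemann-Hilbert Problem 8 and depends on the deformation parameter $y_j=X^{\frac13}r_j(a)$. Under the transitional scaling assumption $a-a_c=\mathcal{O}(X^{-\frac13})$ together with $r_j(a_c)=0$ and the analyticity in (\ref{r-s-1})--(\ref{r-s-2}), the variables $y_j$ remain in a compact set, so the coefficients $U^{[j]}(y_j)$ stay bounded; but one must still ensure that the asymptotic expansion holds uniformly in the direction of $\xi$ and uniformly with respect to $y_j$ over that compact set, so that the remainder is genuinely $\mathcal{O}(\xi^{-1})$. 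Controlling the Painlev\'e-II transcendent and the associated parametrix through the transition, i.e. establishing this uniformity, is the delicate analytic obstacle, and it is what distinguishes the present estimate from the generic $X^{-\frac14}$ bound of the large-$X$ regime.
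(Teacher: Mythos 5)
Your proposal is correct and follows essentially the same route as the paper's proof: part (I) from the exponential decay of the jump matrices (\ref{jump-W}) on the lens arcs away from $\pm b_c$, and part (II) from the matching ratio $\overline{W}^{\mp b}(\overline{W}^{o})^{-1}$ on the circles, reduced to the $\mathcal{O}(\xi^{-1})$ normalization of the Painlev\'e-II model with $\xi\sim X^{1/6}$. Your closing remark on uniformity of the model expansion in the parameter $y_j$ over a compact set is a point the paper's (very terse) proof passes over silently, and you correctly identify that the relevant expansion is that of Riemann-Hilbert Problem 8 rather than the parabolic-cylinder formula the paper nominally cites.
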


\begin{proof}
When $z\in\left(\Sigma_3^{\pm}\cup\Sigma_4^{\pm}\right)\cap\complement_{\mathbb{C}}{(\overline{\delta_{-b_c}(r)}\cup\overline{\delta_{b_c}(r)})}$, the error term depends on the jump matrix of Eq.~(\ref{jump-W}), then we get formula (I). Note that
\begin{align}
\begin{aligned}
\overline{W}^{-b}(z;X,a)\overline{W}^{o}(z;a)^{-1}=&X^{-\frac{ip\sigma_3}{6}}e^{\frac{i}{2}X^{\frac12}s_1(a)\sigma_3}T^{-b}(z;a)U(\xi_{-b};y_1)\vspace{0.05in}\\
&\times\xi_{-b}^{ip\sigma_3}T^{-b}(z;a)^{-1}e^{-\frac{i}{2}X^{\frac12}s_1(a)\sigma_3}X^{\frac{ip\sigma_3}{6}},\quad z\in\partial\delta_{-b_c}(r)\vspace{0.05in}\\
\overline{W}^{b}(z;X,a)\overline{W}^{o}(z;a)^{-1}=&X^{\frac{ip\sigma_3}{6}}e^{-\frac{i}{2}X^{\frac12}s_2(a)\sigma_3}T^{b}(z;a)U(\xi_{b};y_2)\vspace{0.05in}\\
&\times\xi_{b}^{ip\sigma_3}T^{b}(z;a)^{-1}e^{\frac{i}{2}X^{\frac12}s_2(a)\sigma_3}X^{-\frac{ip\sigma_3}{6}},\quad z\in\partial\delta_{b_c}(r).
\end{aligned}
\end{align}

When $z\in\partial\delta_{-b}(r)\cup\partial\delta_{b}(r)$, the error term depends on $U(\xi_{-b};y_1)\xi_{-b}^{ip\sigma_3}$ and $U(\xi_{b};y_2)\xi_{b}^{ip\sigma_3}$. According to Eq.~(\ref{RH7-solu}), we get the formula (II). Thus the proof is completed.

\end{proof}

Using the Plemelj formula and Proposition \ref{error-double}, we have
\bee\label{E-err-double}
E_-(z;X,a)-\mathbb{I}=\mathcal{O}(X^{-\frac16}),\quad X\rightarrow+\infty.
\ene

Similarly to Eq.~(\ref{q-E-simi}), we have
\bee\label{q-E-simi-double}
\widehat{q}^+(X,aX^2)=-\frac{1}{\pi X^{\frac12}}\int_{\Sigma_5}E_{11-}(v;X,a)V^{E}_{12}(v;X,a)+E_{12-}(v;X,a)(V^{E}_{22}(v;X,a)-1)dv.
\ene

According to Eqs.~(\ref{E-err-double}) and (\ref{q-E-simi-double}), we have
\begin{align}\label{fanyan-VE12-double}
\begin{aligned}
\widehat{q}^+(X,aX^2)=&-\frac{1}{\pi X^{\frac12}}\int_{\delta_{-b_c}(r)\cup\delta_{b_c}(r)}V^E_{12}(v;X,a)dv+\mathcal{O}(X^{-\frac56})\vspace{0.05in}\\
=&-\frac{1}{\pi X^{\frac12}}\int_{\delta_{-b_c}(r)}X^{-\frac16}X^{-\frac{ip}{3}}e^{iX^{\frac12}s_1(a)}M_1(v;a)^{-1}\vspace{0.05in}\\
&\times\left(T^{-b}(v;a)U^{[1]}(X^{\frac13}r_1(a))T^{-b}(v;a)^{-1}\right)_{12}dv\vspace{0.05in}\\
&-\frac{1}{\pi X^{\frac12}}\int_{\delta_{b_c}(r)}X^{-\frac16}X^{\frac{ip}{3}}e^{-iX^{\frac12}s_2(a)}M_2(v;a)^{-1}\vspace{0.05in}\\
&\times\left(T^{b}(v;a)U^{[1]}(X^{\frac13}r_2(a))T^{b}(v;a)^{-1}\right)_{12}dv+\mathcal{O}(X^{-\frac56})\vspace{0.05in}\\
=&\frac{2i}{X^{\frac{2}{3}}}\bigg(\frac{X^{-\frac{ip}{3}}e^{iX^{\frac12}s_1(a)}}{M_1^{'}(z_1(a);a)}\left(T^{-b}(z_1(a);a)U^{[1]}(X^{\frac13}r_1(a))
T^{-b}(z_1(a);a)^{-1}\right)_{12}\vspace{0.05in}\\
&+\frac{X^{\frac{ip}{3}}e^{-iX^{\frac12}s_2(a)}}{M_2^{'}(z_2(a);a)}\left(T^{b}(z_2(a);a)U^{[1]}(X^{\frac13}r_2(a))T^{b}(z_2(a);a)^{-1}\right)_{12}\bigg)+\mathcal{O}(X^{-\frac56}).
\end{aligned}
\end{align}

Using the L'H\^{o}pital's rule, we know
\begin{align}\label{Hopital}
\begin{aligned}
&T^{-b}(z_1(a);a):=i(z_2(a)-z_1(a))^{-ip\sigma_3}\left(-{M_1^{'}(z_1(a);a)}\right)^{-ip\sigma_3}\sigma_2,\vspace{0.05in}\\
&T^{b}(z_2(a);a):=(z_2(a)-z_1(a))^{ip\sigma_3}M_2^{'}(z_2(a);a)^{ip\sigma_3}.
\end{aligned}
\end{align}

Substituting Eq.~(\ref{Hopital}) into Eq.~(\ref{fanyan-VE12-double}), we obtain
\begin{align}\label{fanyan-VE12-double-2}
\begin{aligned}
\widehat{q}^+(X,aX^2)=&\frac{2i}{X^{\frac{2}{3}}}\bigg(-\frac{X^{-\frac{ip}{3}}e^{iX^{\frac12}s_1(a)}}{M_1'(z_1(a);a)}
\mathcal{V}_1(X^{\frac13}r_1(a))\left(-M_1'(z_1(a);a)(z_2(a)-z_1(a))\right)^{-2ip}\vspace{0.05in}\\
&+\frac{X^{\frac{ip}{3}}e^{-iX^{\frac12}s_2(a)}}{M_2'(z_2(a);a)}\mathcal{V}_2(X^{\frac13}r_2(a))\left((z_2(a)-z_1(a))M_2'(z_2(a);a)\right)^{2ip}\bigg)+\mathcal{O}(X^{-\frac56}).
\end{aligned}
\end{align}

According to the above analysis, we can provide the proof of Theorem \ref{new-prop}.

\begin{proof}
If $a-a_c=\mathcal{O}(X^{-\frac13})$, we have
\begin{align}\label{rs-tayel}
\begin{aligned}
&r_1(a)=r_1'(a_c)(a-a_c)+\mathcal{O}(X^{-\frac23}),\quad s_1(a)=s_1(a_c)+s_1'(a_c)(a-a_c)+\mathcal{O}(X^{-\frac23}),\vspace{0.05in}\\
&r_2(a)=r_2'(a_c)(a-a_c)+\mathcal{O}(X^{-\frac23}),\quad s_2(a)=s_2(a_c)+s_2'(a_c)(a-a_c)+\mathcal{O}(X^{-\frac23}).
\end{aligned}
\end{align}

Substituting Eq.~(\ref{rs-tayel}) into Eq.~(\ref{fanyan-VE12-double-2}), we obtain
\begin{align}\label{fanyan-VE12-double-3}
\begin{aligned}
\widehat{q}^+(X,aX^2)=&\frac{2i}{X^{\frac{2}{3}}}\bigg(-\frac{X^{-\frac{ip}{3}}e^{iX^{\frac12}s_1(a_c)}e^{iX^{\frac12}s_1^{'}(a_c)(a-a_c)}}
{M_1^{'}(-b_c;a_c)}\mathcal{V}_1(X^{\frac13}r_1^{'}(a_c)(a-a_c))\left(-2M_1^{'}(-b_c;a_c)b_c\right)^{-2ip}\vspace{0.05in}\\
&+\frac{X^{\frac{ip}{3}}e^{-iX^{\frac12}s_2(a_c)}e^{-iX^{\frac12}s_2^{'}(a_c)(a-a_c)}}{M_2^{'}(b_c;a_c)}\mathcal{V}_2(X^{\frac13}r_2^{'}(a_c)(a-a_c))
\left(2b_cM_2^{'}(b_c;a_c)\right)^{2ip}\bigg)+\mathcal{O}(X^{-\frac56}).
\end{aligned}
\end{align}

Substituting Eqs.~(\ref{r-s-1}) and (\ref{r-s-2}) into Eq.~(\ref{fanyan-VE12-double-3}), we have
\begin{eqnarray}
\widehat{q}^+(X,aX^2)=&&\!\!\!\!\!\!\frac{4}{X^{\frac{2}{3}}}\mathrm{Im}\bigg(\frac{X^{-\frac{ip}{3}}
e^{iX^{\frac12}s_1(a_c)}e^{iX^{\frac12}s_1'(a_c)(a-a_c)}}
{M_1'(-b_c;a_c)}\mathcal{V}_1(X^{\frac13}r_1'(a_c)(a-a_c))\vspace{0.05in} \no\\
&&\!\!\!\!\!\!\times\left(-2M_1^{'}(-b_c;a_c)b_c\right)^{-2ip}\bigg)+\mathcal{O}(X^{-\frac56})\vspace{0.05in}\no\\
=&&\!\!\!\!\!\! -\frac{4\cdot3^{\frac13}|\mathcal{V}_1(-96\cdot3^\frac13(a+\frac{1}{96}) X^{\frac13})|}{X^{\frac{2}{3}}}\mathrm{Im}\bigg(
e^{-\frac{i\ln(2)}{6\pi}\ln(X)+i\frac{16}{3}X^{\frac12}+i64(a+\frac{1}{96})X^{\frac12}}\vspace{0.05in}\no\\
&&\!\!\!\!\!\!\times e^{i\mathrm{arg}(\mathcal{V}_1(-96\cdot3^\frac13(a+\frac{1}{96}) X^{\frac13}))-i\frac{\ln(2)}{\pi}\ln(4\cdot3^{-\frac13})}\bigg)+\mathcal{O}(X^{-\frac56})\vspace{0.05in} \no\\
=&&\!\!\!\!\!\! -\frac{4\cdot3^{\frac13}|\mathcal{V}_1(-96\cdot3^\frac13(a+\frac{1}{96}) X^{\frac13})|}{X^{\frac{2}{3}}}\sin(\phi(X,a))+\mathcal{O}(X^{-\frac56}).
\label{fanyan-VE12-double-4}
\end{eqnarray}

Thus the proof is completed.
\end{proof}

\section{Conclusions and discussions}

 In summary, we have studied the multi-rational solitons of the c-mKdV equation  with nonzero background in the limit of large order. We construct an infinite-order rational solitons corresponding to a solvable Riemann-Hilbert problem of the c-mKdV equation by a series of deformations. Then, we construct a Riemann-Hilbert problem corresponding to the limit function, which is a new solution of the c-mKdV equation with respect to the new space $X$ and time $T$, and prove the existence and uniqueness of the Riemann-Hilbert problem's solution. Moreover, we also find that the limit functions satisfy ordinary differential equations with respect to $X$ and $T$, respectively. We also analyze the asymptotic behaviors of near-field limit solutions with respect to the large $X$ and transitional asymptotics. {\color{red} For the asymptotics of the large $T$, we give some part results due to the complicated arcs passing through the endpoints and additional requirement of new models}.
 In future, we will try to analyze the asymptotics of the large $T$, and use the RHP-numerical method~\cite{Olver-1,Olver-2} to study the near-field limits of the c-mKdV equation, in order to compare them with the results of this paper. Moreover, the used method can also be extended to other integrable nonlinear wave equations, such as the higher-order c-mKdV euqations, the mKdV equation hierarchy, and the (2+1)-dimensional KP equation.

\vspace{0.2in}
\noindent {\bf Acknowledgments}

\vspace{0.05in}
This work was supported by the National Natural Science Foundation of China (Grant Nos. 11925108 and 12201615).

\end{document}